\renewcommand\footnotetextcopyrightpermission[1]{} 
\newcommand{\ind}[1]{\mathds{1}\{#1\}}
\newcommand{\E}[1]{\mathbb{E}[#1]}
\renewcommand{\P}[1]{\mathbb{P}(#1)}
\newcommand{\rb}{\boldsymbol{r}}
\newcommand{\bb}{\boldsymbol{b}}
\newcommand{\bbb}{\bb\bb}
\newcommand{\bad}{\textsc{Bad}}
\newcommand{\good}{\textsc{Good}}
\newcommand{\neutral}{\textsc{Neutral}}
\mathchardef\hy="2D
\newtheorem*{rep@theorem}{\rep@title}
\newcommand{\newreptheorem}[2]{%
\newenvironment{rep#1}[1]{%
 \def\rep@title{#2 \ref{##1}}%
 \begin{rep@theorem}}%
 {\end{rep@theorem}}}
\begin{document}

\title{Outperforming Multiserver SRPT at All Loads}

\author{Izzy Grosof}
\email{izzy.grosof@northwestern.edu}
\author{Daniela Hurtado-Lange}
\email{daniela.hurtado@kellogg.northwestern.edu}
\affiliation{%
  \institution{Northwestern University}
  \city{Evanston}
  \state{Illinois}
  \country{USA}
}



\begin{abstract}
  A well-designed scheduling policy can unlock significant performance improvements with no additional resources. Multiserver SRPT (SRPT-$k$) is known to achieve asymptotically optimal mean response time in the heavy traffic limit, as load approaches capacity. No better policy is known for the M/G/$k$ queue in any regime.
  
  We introduce a new policy, SRPT-Except-$k+1$ \& Modified SRPT (SEK-SMOD), which is the first policy to provably achieve lower mean response time than SRPT-$k$. SEK-SMOD outperforms SRPT-$k$ across all loads and all job size distributions.  The key idea behind SEK-SMOD is to prioritize large jobs over small jobs in specific scenarios to improve server utilization, and thereby improve the response time of subsequent jobs in expectation. 
  Our proof is a novel application of hybrid worst-case and stochastic techniques to relative analysis, where we analyze the deviations of our proposed SEK-SMOD policy away from the SRPT-$k$ baseline policy. Furthermore, we design Practical-SEK (a simplified variant of SEK-SMOD) and empirically verify the improvement over SRPT-$k$ via simulation.
\end{abstract}



\maketitle
\fancyfoot{}
\section{Introduction}



The question of reducing wait times with limited resources is ubiquitous in operations and engineering. In particular, strategic scheduling is known to be an effective way to substantially reduce the average time a job spends in the system (and, consequently, the customer waiting time) without incrementing the number of servers. This effectiveness has been observed in various practical applications.

In \cite{bray2016courthouse}, the authors study delays in cases handled by a courthouse in Italy. Specifically, they design a case-level scheduling policy in which judges estimate the number of hearings needed for each case and schedule all hearings in advance, prioritizing those for cases nearing completion. Compared to the old scheduling policy (where judges scheduled each hearing after completing the previous one), the courthouse reduced average case duration by approximately 12\% without increasing resources or the number of judges. Similar scheduling policies have been implemented in multiple courthouses afterward, with comparable results. See, for example, \cite{azaria2023TOC,azaria2024jerusalem-courthouse}.

In computer systems, designing algorithms that minimize response time without adding more resources is essential, as the number of CPUs and GPUs available is typically constant and costly to increase. For example, one can think of a single machine with a fixed number of cores and a limited budget to process tasks. Moreover, this single machine accurately estimates the sizes of arriving tasks. Hence, it can use this information to schedule strategically. For the rest of this paper, we mainly consider this example as an immediate application. Some examples of the importance of scheduling in computer systems with limited resources are the research in \cite{Kul-Ful2011-example-scheduling,Zhu-et-al2012-example-scheduling}.

One of the most popular scheduling policies in the literature is First-Come-First-Served (FCFS), where the jobs are processed in order of arrival. This policy is simple to implement and is often considered fair. Due to its simplicity, its analysis is tractable, and has been studied for decades \cite{loulou1973multi,kollerstrom1974heavy-1,kollerstrom1979heavy-ii,grosof-et-al2022wcfs}. However, FCFS does not utilize servers strategically: both long and short jobs receive the same priority. Hence, it does not perform well in terms of response time. If one prioritizes shorter jobs over longer ones, the mean response time decreases significantly. Specifically, one can schedule jobs according to the amount of work remaining to complete them, and preemptively process jobs with the Shortest-Remaining-Processing-Time (SRPT) first. For decades, SRPT has been the gold standard for scheduling jobs in single-server queues, as it minimizes the mean response time \cite{schrage1968proof-srpt-opt}. However, optimal scheduling for multiple-server systems is much less understood.

A generalization of SRPT to systems with $k$ servers (called SRPT-$k$) is studied in \cite{grosof-etal-2019-srpt-k}, where the $k$ jobs with the least remaining processing time are scheduled. If there are fewer than $k$ jobs available, some of the servers are idle. As expected, SRPT-$k$ performs great in terms of the mean response time. However, it has only been proven to minimize this metric in heavy traffic.

A long-lasting open question in the literature is the existence of a scheduling policy that outperforms SRPT-$k$. In this paper, we answer positively with our policy SRPT-Except-$k+1$ \& Modified SRPT (SEK-SMOD). The contributions of this paper are:
\begin{enumerate}[topsep=0pt]
    \item We devise and introduce the SEK-SMOD policy, which is defined in reference to a coupled SRPT-$k$ system, in \cref{sec:def-sek-smod}.
    \item We prove that SEK-SMOD achieves a smaller mean response time compared to SRPT-$k$ under all loads, all job size distributions, and all numbers of servers $k \ge 2$ (\cref{thm:main}).
    \item We develop a proof method based on a strategic combination of worst-case scenarios and stochastic analysis,
    as well as relative analysis. We discuss the proof philosophy in \cref{sec:overcome}, the proof outline in \cref{sec:proof-outline}, and give the proof in \cref{sec:stochastic} (stochastic analysis) and \cref{sec:worst-case} (worst case analysis).
    \item We propose the Practical SEK policy (a simplified variant of the SEK-SMOD policy) and empirically demonstrate that it outperforms SRPT-$k$, via simulation, in \cref{sec:empirical}.
\end{enumerate}
\subsection{Challenges in beating SRPT-k}\label{sec:intro-challenges}

It is well known in the literature that SRPT minimizes the mean response time in single-server queues with known job sizes \cite{schrage1968proof-srpt-opt}, for any load.   Further, the multiserver SRPT-$k$ policy has recently been proven to minimize the mean response time in heavy traffic \cite{grosof-etal-2019-srpt-k}.
SRPT-$k$ also minimizes total response time for all jobs in a no-arrivals (batch) setting.
The proof is available in \cite{graham1979optimization}, by combining prior results \cite{conway1967miller, mcnaughton1959scheduling}.
As a consequence, a scheduling policy that performs better than SRPT-$k$ needs to focus on moderate load.

An important pathway for improving upon SRPT-$k$ is rearranging the order in which jobs are served to improve future parallelism.
SRPT-$k$ finishes the smallest jobs first, leaving only a few large jobs. As a result, it reaches states where it cannot fully utilize its servers sooner than other policies.
Taking advantage of future parallelism requires a \emph{global} view of the state of the system: Is the system close to empty, with just a few jobs left, or are plenty of jobs present?

In contrast, existing multiserver scheduling analysis has overwhelmingly focused on \emph{index} policies, where each job is evaluated in isolation to produce an index, and the $k$ jobs of best index are served. Existing analyses for multiserver index policies include FCFS-$k$ \cite{loulou1973multi,kollerstrom1974heavy-1,kollerstrom1979heavy-ii,grosof-et-al2022wcfs}, class-based priority policies \cite{sleptchenko-et-al2005exact}, SRPT-$k$ and related index policies \cite{grosof-etal-2019-srpt-k}, M-Gittins-$k$ and M-SERPT-$k$ \cite{scully-et-al2021optimal}, and Gittins-$k$ \cite{scully-et-al2020gittins}.
Index policies cannot take a global view of the system,
and so cannot look ahead to improve future parallelism.
We need a new method of multiserver analysis which can handle non-index scheduling policies.

However, deviating from SRPT-$k$ to improve future parallelism takes on a risk of degrading the mean response time. Intuitively, if jobs arrive in the immediate future, then SRPT-$k$ suffers no loss of parallelism, and any rearrangement is detrimental.
On the other hand, if no jobs arrive before the system empties, we are in the batch scenario described above. 
Only arrivals in the intermediate future can lead to better outcomes for another policy.
No prior analysis has attempted to balance these probabilities
to ensure a consistent improvement in expectation.

A scheduling policy that performs better than SRPT-$k$ needs to focus on moderate load, needs to examine the global state of the system, and needs to weigh the probabilities of arrivals at different intervals in the future.
The SEK-SMOD policy fulfills all of these goals,
making possible the first improvement in mean response time over SRPT-$k$.

\subsection{Overcoming challenges}
\label{sec:overcome}

We devise the SRPT-Except-$k+1$ \& Modified SRPT (SEK-SMOD) scheduling policy, which we prove improves upon the mean response time of the SRPT-$k$ policy, for every job size $S$, arrival rate $\lambda$, and number of servers $k \ge 2$. Our result is formally stated as \cref{thm:main}.

SEK-SMOD improves upon SRPT-$k$ by identifying a narrow set of system states where future parallelism can be improved by deviating away from the SRPT-$k$ policy. Each such state involves $k+1$ jobs present in the system,
and in each case SEK-SMOD serves the job of largest remaining size in the system, leaving the second-largest job in the queue.
Otherwise, SEK-SMOD serves the same jobs as SRPT-$k$.
These deviations form the SEK part of the combined SEK-SMOD policy.

We prove that each deviation is beneficial, in expectation.
The effect on long-term mean response time of this deviation is neutral if no jobs arrive for a long duration after the deviation, and beneficial under the most common arrival patterns, given the Poisson arrival process.
The deviation is only harmful under a limited set of scenarios involving arrivals soon after the deviation point, when the predicted improvements in future parallelism fail to materialize.
We characterize the frequency and benefit of these deviations
through stochastic analysis of the SEK policy in \cref{sec:stochastic}.

In order to limit the downside of these rare but harmful arrival sequences, the policy switches to the SMOD part of SEK-SMOD.
As we show in \cref{sec:worst-case},
SMOD has a variety of helpful properties relative to a coupled SRPT-$k$ system with the same arrival sequence.
For example, the total work present in the SMOD system is always at most as much as the work present in the SRPT-$k$ system,
under any arrivals.
These worst-case results allow us to bound the consequences of the rare but harmful arrival patterns, and prove that the common arrival patterns always have more impact in expectation.
We thereby complete the proof that SEK-SMOD always improves upon SRPT-$k$.

Our analysis of the SEK-SMOD policy combines two recently developed scheduling analysis frameworks: \emph{hybrid} worst-case and stochastic analysis, and \emph{relative analysis}, where a novel policy is analyzed in terms of its infrequent deviations away from a well-studied baseline policy.

The hybrid analysis framework is most clearly seen in the analysis of the guardrails dispatching policy \cite{grosof-etal-2019guardrails},
where jobs are dispatched to servers which each independently serve jobs in SRPT order.
Worst-case analysis of the guardrails dispatching policy was combined with stochastic analysis of the SRPT scheduling policy to prove heavy traffic optimality.
The hybrid analysis framework is well suited to analyzing non-index policies like the guardrails dispatching policy and the SEK-SMOD policy.
Stochastic analysis is best suited to studying well-understood families of scheduling policies (such as index policies),
while worst-case analysis often proves weaker results, but can be used in a larger class of policies. 
Combining both techniques allows strong results to be proven for complex policies, like SEK-SMOD.

The relative analysis framework is most clearly seen in the analysis of the Nudge scheduling policy \cite{grosof-etal-nudge-2021},
which deviates slightly from the FCFS scheduling policy to stochastically improve upon its response time distribution.
SEK-SMOD similarly deviates slightly from SRPT-$k$ scheduling
to improve upon its mean response time.
This policy structure allows the baseline and improved policies to be compared on a deviation-by-deviation basis, and allows results demonstrating improvement even when exact analyses for the individual policies are intractable.

Combining the power and generality of hybrid worst-case and stochastic arguments with the deviation-by-deviation framework of relative analysis allows us to prove that SEK-SMOD outperforms SRPT-$k$ for all job size distributions and all arrival rates.

\section{Prior Work}

We discuss prior work on SRPT scheduling in \cref{sec:prior-srpt},
on hybrid stochastic \& worst-case analysis in \cref{sec:prior-hybrid},
on relative analysis in \cref{sec:prior-relative},
and
on general multiserver scheduling analysis in \cref{sec:prior-multiserver}. Preliminary work on the design and empirical performance of SEK-SMOD can be found in \cite{izzy-MAMA-SEK}.

\subsection{Shortest Remaining Processing Time}
\label{sec:prior-srpt}

In a single-server setting, the Shortest Remaining Processing Time (SRPT) is well understood. Its mean response time is exactly characterized in the M/G/1 case \cite{schrage-et-al1966srpt-characterize}, and it is known to achieve minimal mean response time in a single-server setting under an arbitrary arrival sequence \cite{schrage1968proof-srpt-opt}.

In multiserver systems, SRPT was first studied in a batch setting, where all jobs are initially present, and no jobs arrive over time. \citet{mcnaughton1959scheduling} proved that preemption is not useful for minimizing total response time in the batch setting: any preemptive schedule can be matched by a nonpreemptive schedule.
\citet{conway1967miller} proved that the Shortest Processing Time (SPT) policy (the nonpreemptive equivalent of SRPT) 
is a member of a class of policies that all equally minimize total response time.
\citet[Section~4.4.1]{graham1979optimization} points out that these results can be combined to prove that SPT (and SRPT) are optimal in the multiserver batch setting with preemption allowed.

A more recent study of multiserver SRPT (also known as SRPT-$k$) focused on the heavy-traffic limit, where load approaches capacity for a fixed number of servers. In this limit, SRPT's mean response time is proven to converge to the optimal possible mean response time, that is, $E[T^{SRPT-k}]/E[T^{OPT}] \to 1$ as load $\rho \to 1$, for any number of servers $k \ge 2$ \cite{grosof-etal-2019-srpt-k}.
In the same paper, the Preemptive Shortest Job First (PSJF-$k$)
and Remaining Size Times Original Size (RS-$k$) policies
were also proven to converge to optimal in the heavy-traffic limit.

Note that heavy-traffic optimality does not rule out improvement in the heavy-traffic limit, as sub-multiplicative improvement is still possible. Indeed, we prove that SEK-SMOD achieves such improvement over SRPT-$k$ in the heavy traffic limit, as it improves upon SRPT-$k$ at all loads.

\subsection{Stochastic \& Worst-case Hybrid Analysis}
\label{sec:prior-hybrid}

A key step in the proof of heavy-traffic optimality for SRPT-$k$
is a worst-case coupling argument,
which establishes that the relevant work%
\footnote{The relevant work in the system at a threshold $x$ is the total remaining size of all jobs with remaining size no more than $x$.}
at some threshold $x$
present in the SRPT-$k$ system is never more than $kx$ more
than the relevant work in a resource-pooled single-server SRPT system
experiencing the same arrivals \cite[Lemma~2]{grosof-etal-2019-srpt-k}.
An equivalent result had previously appeared in the worst-case literature, \cite[Lemma~2]{leonardi1997SRPT-worst-case1},
in the paper where Leonardi and Raz showed that SRPT-$k$ achieves an order-optimal competitive ratio for mean response time with respect to the optimal clairvoyant policy.
This worst-case relevant work bound is combined with stochastic tagged-job analysis to upper bound the mean response time of SRPT-$k$.

A further development of the stochastic-worst-case-hybrid analysis technique is found in the heavy-traffic optimality proof of the guardrails dispatching policy, when combined with SRPT scheduling after jobs are dispatched to their servers \cite{grosof-etal-2019guardrails}.
Worst-case proofs bound the difference in relevant work between
any two servers at any threshold \cite[Lemma 1]{grosof-etal-2019guardrails},
as well as between the guardrails system and a resource-pooled single-server system \cite[Lemma 2]{grosof-etal-2019guardrails}.
These worst-case arguments are again combined with a stochastic tagged-job analysis to upper bound the mean response time of the guardrails dispatching policy.

Our paper builds upon this hybrid framework: our worst-case results in \cref{sec:worst-case} compare the SMOD portion of the SEK-SMOD policy against the SRPT-$k$ policy, while past works made worst-case comparisons with much simpler resourced-pooled systems.
Our stochastic results in \cref{sec:stochastic} also incorporate tagged-job arguments, as well as the relative analysis framework.

\subsection{Relative Analysis}
\label{sec:prior-relative}

Relative analysis involves comparing two scheduling policies:
A baseline policy and a tweaked policy which deviates from that baseline policy in a limited and controlled fashion.
The tweaked policy is then analyzed deviation-by-deviation,
with the goal of proving that the net expected effects of those deviations improves the performance metric of interest.

This analysis approach was employed to analyze the Nudge scheduling policy for the M/G/1, demonstrating that it achieves stochastic dominance across its response time distribution relative to the FCFS policy, and a multiplicative improvement in its asymptotic tail of response time \cite{grosof-etal-nudge-2021}.

This work likewise employs the relative analysis framework, comparing SEK-SMOD to SRPT-$k$ on a divergence-by-divergence level. However, the added complexity of the multiserver setting brings new challenges relative to the M/G/1 setting of the Nudge analysis.

\subsection{Multiserver Scheduling Analysis: Index Policies}
\label{sec:prior-multiserver}
The vast majority of existing multiserver scheduling analysis focuses on \emph{index} policies, where the scheduling policy assigns each job an index based on the job's individual characteristics, and selects the $k$ jobs with best indices, where $k$ is the number of servers. Index policies do not use global information, like the total number of jobs in the system, making them easier to analyze.

Analysis of multiserver index policies started with First-Come First-Served (FCFS-$k$), where heavy traffic mean response time in the M/G/$k$ queue
was proven to resemble that of a resource-pooled M/G/1 system \cite{loulou1973multi,kollerstrom1974heavy-1,kollerstrom1979heavy-ii,grosof-et-al2022wcfs}.
Subsequent work has proven similar but much looser bounds for a much broader class of asymptotic environments, beyond heavy traffic \cite{yuan-et-al2025simple-explicit}.
Exact stationary-distribution characterizations have also been proven under exponential service time assumptions, for FCFS-$k$ and two-class priority scheduling \cite{sleptchenko-et-al2005exact}.

More recently, there has been a family of works on more complex index policies under more general job size distributions, proving that such policies are optimal or near-optimal in the heavy traffic limit, such SRPT-$k$ and related policies \cite{grosof-etal-2019-srpt-k} in the known-size setting. In the unknown size setting, a variety of policies have been explored and heavy-traffic optimality results proven, such as Foreground-Background (Least Attained Service) \cite{grosof-etal-2019-srpt-k}, Monotonic Gittins and Monotonic Shortest Expected Remaining Processing Time \cite{scully-et-al2021optimal},
and the Gittins Index policy \cite{scully-et-al2020gittins,hong-et-al2024gittins-ggk}.

All of this prior work has focused on index policies. By contrast, SEK-SMOD is not an index policy. The SEK portion of the policy only diverges from SRPT-$k$ service when exactly $k+1$ jobs are present in the system, allowing it to prioritize future parallelism more effectively.

\section{Model and Definitions}

In this paper, we study the known-size M/G/$k$ queueing system.
Jobs arrive according to a Poisson process with rate $\lambda$,
and have initial state drawn i.i.d. from a general job size distribution
with corresponding random variable $S$.
The system has $k \ge 2$ servers, each of which serves jobs at rate $1/k$: a job of size $x$ completes after $kx$ time in service.
We focus on the case where the load $\rho = \lambda E[S] < 1$, so that stability is possible.
We consider a general scheduling policy $\pi$,
which selects up to $k$ jobs to serve at any point in time.
Service is preempt-resume, so jobs can be paused and returned to the queue with no loss of work. 
The scheduling policy observes the jobs' size upon arrival and has access to an arbitrary auxiliary Markovian state.

The state of the system at time $t$ is given by a sorted vector $\rb(t)$ of remaining sizes,
where $0 < r_1(t) \le r_2(t) \le \ldots \le r_{n(t)}(t)$ and $n(t)$ is the number of jobs in the system, possibly combined with an auxiliary state which only affects the scheduling policy. The scheduling policy must be Markovian with respect to this combined state descriptor.

We call the system positive recurrent if there exists a fixed state, consisting of no jobs present and a fixed auxiliary state, such that the system returns to this fixed state almost surely and in finite mean time.
This paper uses the terms `positive recurrence' and `stability' interchangeably, and focuses on stable systems.

We also consider a coupled system consisting of a pair of M/G/$k$ queues experiencing an identical arrival process under two different scheduling policies.
We write the joint state of the coupled system as a pair of sorted vectors of the same length,
with padding indices added at the front of the of the vector of the system with less jobs present. 

Specifically, suppose that the individual states of the two queues are the vectors $\rb^A(t)$
and $\rb^B(t)$, for two policies $A$ and $B$.
We define $N^\pi(t) := |\rb^\pi(t)|$ to be the number of jobs in a given system under policy $\pi\in\{A,B\}$ at time $t$. Then, we define the joint state $\bbb(t):=(\bb^A(t),\bb^B(t))$ as follows, depending on the relation between $N^A(t)$ and $N^B(t)$:

\begin{itemize}
    \item $N^A(t) = N^B(t)$: When the two systems have the same number of jobs, we define $\bb^A(t) = \rb^A(t)$ and $\bb^B(t) = \rb^B(t)$.
    \item $N^A(t) > N^B(t)$: When $A$ has more jobs then $B$, we define $\bb^A(t) = \rb^A(t)$. Let $d(t) = N^A(t) - N^B(t)$ be the number of additional jobs in system $A$ at time $t$.
    We define $b^B_i(t) = 0$ for all $i \le d(t)$, and define $b^B_i(t) = r^B_{i-d(t)}(t)$ for all $i \ge d(t)+1$.
    \item $N^A(t) < N^B(t)$: When $B$ has more jobs than $A$, we proceed symmetrically. We define $\bb^B(t) = \rb^B(t)$,
    let $d(t) = N^B(t) - N^A(t)$, and define $b^A_i(t) = 0$ for all $i \le d(t)$, and $b^A_i(t) = r^A_{i-d(t)}(t)$ for all $i \ge d(t)+1$.
\end{itemize}

We define a padding index to be an index $i$ for which $b^A_i(t)$ or $b^B_i(t)$ is zero. The number of padding indices $d(t)=|N^A(t) - N^B(t)|$. We use $W^\pi(t)$ to denote the total remaining work at time $t$, in the system under policy $\pi$. Specifically, $W^\pi(t) = \sum_i r_i^\pi(t)=\sum_i b_i^\pi(t)$.

For example, for the M/G/2 systems under policies $A$ and $B$ deployed in \cref{fig:notation-example}, the state descriptors are presented in \cref{tab:notation-example}.
\begin{figure}
\begin{subfigure}[t]{0.2\textwidth}
    \centering
    \includegraphics[width=\linewidth]{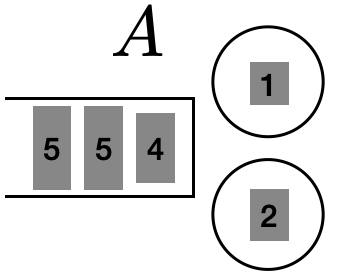}
    \Description[Example of an M/G/2 system under $A$, with five jobs.]{Example of the M/G/2 system operating under scheduling policy $A$. It has jobs of remaining sizes 1, 2, 4, 5, 5. The two smallest are being served, and the remaining three wait in line.}
    \caption{Example under policy $A$. }
    \label{fig:notation-ex-A}
\end{subfigure}%
\hspace{0.1\textwidth}
\begin{subfigure}[t]{0.2\textwidth}
    \centering
    \includegraphics[width=\linewidth]{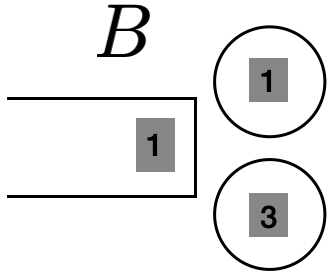}
    \Description[Example of an M/G/2 system under $B$, with three jobs.]{Example of the M/G/2 system operating under scheduling policy $B$. It has jobs of remaining sizes 1, 1, 3. One of the size-1 jobs is waiting in line, and the other two are being served.}
   \caption{Example under policy $B$.}
    \label{fig:notation-ex-B}
\end{subfigure}%
\caption{Example of two M/G/2 systems operating under scheduling policies $A$ and $B$, respectively. Each gray box represents a job, and the number in each box represents the corresponding remaining size.}
\label{fig:notation-example}
\end{figure}

\begin{table*}
  \caption{Example of notation for systems in \cref{fig:notation-example}.}
  \label{tab:notation-example}
  \begin{tabular}{|c|c|c|c|c|c|}
    \toprule
    Policy $\pi$ & $\rb^\pi(t)$ & $N^\pi(t)$ & $d(t)$ & $\bb^\pi(t)$ & $W^\pi(t)$ \\
    \midrule
    $A$ & $\rb^A(t)=(1,2,4,5,5)$ & $N^A(t)=5$ & $d(t)=2$ & $\bb^A(t)=(1,2,4,5,5)$ & $W^A(t)=17$\\
    $B$ & $\rb^B(t)=(1,1,3)$ & $N^B(t)=3$ & $d(t)=2$ & $\bb^B(t)=(0,0,1,1,3)$ & $W^B(t)=5$ \\
    \bottomrule
  \end{tabular}
\end{table*}

An important concept in our proofs is a \emph{dominance} relationship between two policies:
\begin{definition}
    \label{def:dominate}
    In a coupled system of M/G/k queues experiencing the same arrival process, we define policy $A$ to \emph{dominate} policy $B$ whenever $\forall i,\; b_i^A \le b_i^B.$
\end{definition}

Note that we define scheduling policies which have access to the full state of the coupled system. Because all information necessary to simulate the coupled system is known, the state of the coupled system can be taken as the auxiliary Markovian state described above.

For ease of notation, when the time $t$ is clear from the context, we may omit the explicit reference $(t)$ and use $\bbb$ instead of $\bbb(t)$, for example. In the rest of this paper, we use the notation $(a)^+:=\max(a,0)$ for the positive part of a quantity $a$.

\subsection{Defining SEK-SMOD}
\label{sec:def-sek-smod}

We now introduce the SRPT-Except-$k+1$ \& Modified SRPT (SEK-SMOD) scheduling policy. This policy is built from two simpler policies, SEK and SMOD, switching between the two based on the counterfactual state of an SRPT-$k$ system, with coupled arrivals but different scheduling.

We start by defining the SEK and SMOD policies individually, and only then define SEK-SMOD. The SEK policy is a standalone policy, defined without reference to any auxiliary state.

\begin{definition}
    \label{def:sek}
    The \textit{SRPT-Except-$k+1$} (SEK) policy 
    is parameterized by constants $0 \le \epsilon' < \epsilon \le x < y$, as follows.

    SEK serves the $k$ jobs of least remaining size (matching SRPT-$k$), except when the following scenario arises:
    \begin{enumerate}
        \item There are exactly $k+1$ jobs present,
        \item $k$ of those jobs have remaining sizes between $\epsilon'$ and $\epsilon$, and
        \item the largest job has remaining size between $x$ and $y$.
    \end{enumerate}
    In this case, SEK serves the $k-1$ jobs of least remaining size, and the final job of largest remaining size until an arrival or completion occurs.
\end{definition}

We now define the SMOD policy, which requires reference to the coupled SRPT-$k$ system.

\begin{definition}
    \label{def:smod}
    The Modified SRPT (SMOD) scheduling policy is defined in relation to a coupled SRPT-$k$ system, that is, based on the combined state $\bbb=(\bb^{SMOD},\bb^{SRPT\hy k})$. 
    \begin{itemize}
        \item If $N^{SMOD}\leq N^{SRPT \hy k}$, SMOD matches SRPT-$k$.

        \item If $N^{SMOD}> N^{SRPT \hy k}$ and $N^{SMOD}>k$, SMOD considers the $k+d$ jobs with smallest remaining size \textit{eligible for service} (recall $d := N^{SMOD} - N^{SRPT\hy k})$. 
        SMOD gives \textit{high priority} to eligible jobs such that $b_i^{SMOD} - b_i^{SRPT\hy k}\geq 0$, that is, if the SMOD job's remaining size is larger than or equal to the SRPT-$k$ job. We call these indices \textit{zero-or-positive difference} indices.
    \begin{itemize}
        \item If there are at least $k$ zero-or-positive-difference eligible indices, SMOD serves the $k$ jobs with least remaining size among them.
        \item Otherwise, SMOD serves all the zero-or-positive-difference eligible indices, and fills the remaining servers with the least-remaining-size jobs among the negative-difference eligible indices.
    \end{itemize}
    \end{itemize}
 \end{definition}

Note that all of the padding indices $i \le d$ are zero-or-positive-difference indices, and some of the remaining $k$ eligible indices, $d < i \le k+d$, may be zero-or-positive-difference as well.

As shown in \cref{sec:empirical}, the Practical SEK algorithm (\cref{def:practical-sek}) exhibits excellent performance in simulations. The Practical SEK policy depends on a reduced number of parameters, simplifying the definitions above. However, it is difficult to prove that it always improves on SRPT-$k$,
as there are unlikely scenarios where not following SRPT-$k$ may incur a high cost.
To control for this, we define the SEK-SMOD policy, which ``protects'' the system's performance by only using SEK when it is provably beneficial. 

SEK-SMOD is defined in terms of \textit{divergence points}, which we define to be each moment in time that begins an interval during which the SEK-SMOD policy differs from SRPT-$k$. Further, we refer to any state in which a divergence point can occur as \textit{divergence starting state}, that is,
a state where SEK service is in use, and where the system state satisfies the SEK divergence conditions given in \cref{def:sek}.

We are now ready to define the combined SEK-SMOD scheduling policy.

\begin{definition}
    \label{def:sek-smod}
    The SRPT-Except-$k+1$ \& Modified SRPT (SEK-SMOD) scheduling policy is defined in reference to a coupled SRPT-$k$ system with the same arrivals,
    and parameterized by constants $0 \le \epsilon' < \epsilon \le x < y$, as follows.

    \begin{itemize}
        \item If the SRPT-$k$ and SEK-SMOD systems have identical states, the SEK-SMOD policy serves jobs according to the SEK policy given in \cref{def:sek}.

        \item If the systems do not have identical states, and no jobs have arrived since the most recent divergence point, then SEK-SMOD continues according to the SEK policy.

        \item If the systems do not have identical states, and there has been an arrival since the most recent divergence point, then SEK-SMOD uses the SMOD scheduling policy from \cref{def:smod}.
    \end{itemize}
\end{definition}

We focus on the following parameterization of SEK-SMOD to prove our main result, \cref{thm:main}. Under this parameterization, we show that SEK-SMOD always improves upon SRPT-$k$.
\begin{definition}
    \label{def:parameter}
    For a given $\lambda, S,$ and $k \ge 2$ setting, we define following SEK-SMOD parameters. We start by defining $x$ to be an arbitrary size such that $P(S \in [x, 2x]) > 0$, and $y = 2x$.
    
    To define $\epsilon$,
    we first define the constants $c_1, c_2, c_3, c_4$, which are used in stating \cref{lem:any-diff,lem:bad,lem:good}:
    \begin{align*}
        & c_1 = 2k\lambda, \qquad c_2 = k \left(\lambda \frac{(k+1)y + kx/6}{1-\rho_{\le y}} + k+2\right), \\
        & c_3 = \frac{(\lambda kx/3)^k e^{-\lambda k(y+8x/3)}}{k!} \P{S \in [x, 2x]}^k, \qquad
        c_4 = k/2,
    \end{align*}
    where $\rho_{\le y}:=\lambda \E{S\ind{S\leq y}}$ is the load of jobs with size at most $y$. 
    With these constants, we define $\epsilon = \min(\frac{x}{6}, \frac{c_3 c_4}{2 c_1 c_2})$ and $\epsilon' = \frac{\epsilon}{2}$, completing the parameterization.
\end{definition}

Note that for simplicity, we specify an exact parametrization above.
However, our proof has more flexibility: $\epsilon$ may be any sufficiently small value above 0, and $y$ may be any sufficiently large value above $x$. These parameters are sufficient for our proof, and do not optimize them in this paper.

\subsection{Defining performance measures}

A job's response time is the duration from its arrival to its completion.
We define $T^\pi$ to be a random variable denoting a generic job's response time in stationarity under the scheduling policy $\pi$.
This paper focuses on analyzing the mean response time $\E{T^\pi}$ for scheduling policies $\pi$.

To analyze mean response time $\E{T^\pi}$, we make use of Little's law, $\lambda \E{T^\pi} = \E{N^\pi}$. Specifically, we show that $\E{T^{SEK \hy SMOD}} < \E{T^{SRPT\hy k}}$, or equivalently that $\E{N^{SEK \hy SMOD}} < \E{N^{SRPT\hy k}}$. 
To prove this, we analyze the integrated difference in number of jobs between the two systems, starting at an SEK divergence point, and ending when the SEK-SMOD and SRPT-$k$ systems are next identical.

\begin{definition}
    \label{def:ind}
    The \textit{integrated number difference} (IND) starting at a given SEK divergence point, which occurs at time $t$, is defined as follows.

    Let $\tau$ be the next time after $t$ when the SRPT-$k$ and SEK-SMOD systems have identical states and define $\Delta_t$ as the difference in number of jobs between the SRPT-$k$ and SEK-SMOD systems, integrated over the interval $[t, \tau]$. Explicitly, we define $\Delta_t$:
    \begin{align*}
        \Delta_t := \int_{u=t}^\tau N^{SRPT\hy k}(u) - N^{SEK \hy SMOD}(u)\; du.
    \end{align*}
\end{definition}

\section{Main result}

We show that the SRPT-Except-$k+1$ \& SRPT Modified (SEK-SMOD) policy, defined in \cref{def:sek-smod}, achieves lower mean response time in the M/G/$k$ system than the previously-state-of-the-art SRPT-$k$ policy, for all job sizes $S$, all arrival rates $\lambda$, and all numbers of servers $k \ge 2$.
Intuitively, the SEK-SMOD policy identifies a family of situations where SRPT-$k$ performs suboptimally, and deviates away from SRPT-$k$ only in those situations.

These deviations occur when there is a tradeoff between immediately serving jobs with low remaining size, and saving jobs to increase utilization in the future. SRPT-$k$ always prioritizes the smallest jobs, while SEK-SMOD selectively prioritizes future utilization. We demonstrate that the balance of probabilities favors prioritizing future utilization in the situations we identify, allowing SEK-SMOD to outperform SRPT-$k$.

Our main result is as follows:

\begin{theorem}
    \label{thm:main}
    For all arrival rates $\lambda$, job sizes $S$, and numbers of servers $k\geq 2$,
    with the SEK-SMOD parameterization given in \cref{def:parameter},
    if M/G/$k$/SRPT is positive recurrent,
    the mean response time of SEK-SMOD is smaller than that of SRPT-$k$:
    \begin{align*}
        \E{T^{SEK \hy SMOD}} < \E{T^{SRPT \hy k}}. 
    \end{align*}
\end{theorem}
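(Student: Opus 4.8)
The plan is to reduce the statement to a statement about a single quantity, the integrated number difference $\Delta_t$ accumulated per divergence, via Little's law and a renewal--reward decomposition, and then to lower bound $\E{\Delta_t}$ by a three-way case analysis (\good, \bad, \neutral) that is exactly what the constants $c_1,c_2,c_3,c_4$ of \cref{def:parameter} are designed to control. First I would apply Little's law, $\lambda\E{T^\pi}=\E{N^\pi}$, so that it suffices to prove $\E{N^{SEK \hy SMOD}} < \E{N^{SRPT \hy k}}$. As a preliminary I would record that M/G/$k$/SEK-SMOD is positive recurrent whenever M/G/$k$/SRPT is: during an SEK-only (arrival-free) phase the work can only decrease and SEK-SMOD never runs fewer busy servers than the coupled SRPT-$k$ system, and during an SMOD phase \cref{sec:worst-case} gives that the work in the SEK-SMOD system never exceeds that of the coupled SRPT-$k$ system; hence SEK-SMOD returns to the empty state in finite mean time and $\tau$ in \cref{def:ind} is finite a.s.

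Next I would run the coupled $(\text{SRPT-}k,\text{SEK-SMOD})$ system in stationarity and note that $N^{SRPT\hy k}(u)-N^{SEK\hy SMOD}(u)$ vanishes off the diverged intervals $[t_i,\tau_i]$, which are essentially disjoint and in bijection with the divergence points $t_i$, and whose integral over $[t_i,\tau_i]$ is exactly $\Delta_{t_i}$. The ergodic theorem together with the Palm inversion formula then gives
\begin{align*}
\E{N^{SRPT \hy k}} - \E{N^{SEK \hy SMOD}} \;=\; \gamma \cdot \mathbb{E}_0[\Delta_t],
\end{align*}
where $\gamma$ is the long-run rate of divergence points and $\mathbb{E}_0$ is expectation under the Palm distribution of divergence points. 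Here $\gamma>0$: a divergence point occurs precisely when the (then identical) coupled system enters a divergence starting state, i.e.\ has exactly $k+1$ jobs, $k$ of remaining size in $[\epsilon',\epsilon]$ and one of remaining size in $[x,2x]$, and since $\P{S\in[x,2x]}>0$ and $\epsilon>0$ this is reachable with positive rate under SRPT-$k$. The integrability of $\Delta_t^-$ needed to make this identity meaningful comes from the universal bound in \cref{lem:any-diff}.

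It then remains to show $\mathbb{E}_0[\Delta_t]>0$. I would partition the post-divergence sample paths into \neutral\ (no arrival before SEK-SMOD has ``cleared'' the deviation), \good, and \bad. On \neutral\ paths the deviation is exactly work-neutral, so $\Delta_t=0$; on \good\ paths \cref{lem:good} yields $\mathbb{E}_0[\Delta_t\,\ind{\good}] \ge c_3 c_4$, capturing the common Poisson arrival patterns under which the retained parallelism pays off (hence the form of $c_3$); and on \bad\ paths the SMOD worst-case comparison of \cref{sec:worst-case} together with \cref{lem:bad} gives $\mathbb{E}_0[\Delta_t\,\ind{\bad}] \ge -c_1 c_2 \epsilon$, since a harmful arrival must land in a window of length $O(\epsilon)$ after the divergence (probability at most $c_1\epsilon$) and its cost is bounded by $c_2$. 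Summing the three pieces and invoking $\epsilon \le \tfrac{c_3 c_4}{2 c_1 c_2}$ from \cref{def:parameter},
\begin{align*}
\mathbb{E}_0[\Delta_t] \;\ge\; c_3 c_4 - c_1 c_2 \epsilon \;\ge\; c_3 c_4 - \frac{c_3 c_4}{2} \;=\; \frac{c_3 c_4}{2} \;>\; 0,
\end{align*}
so $\E{N^{SRPT \hy k}} - \E{N^{SEK \hy SMOD}} = \gamma\,\mathbb{E}_0[\Delta_t] > 0$, and the theorem follows.

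\textbf{Where the difficulty lies.} The bookkeeping above (Little's law, stability inheritance, the Palm identity, $\gamma>0$) is routine; the real work is in the three supporting lemmas. \cref{lem:good} requires exhibiting an \emph{explicit} positive-probability family of arrival sequences for which the deviation provably and quantifiably improves future utilization, which is the heart of the stochastic/tagged-job analysis of \cref{sec:stochastic} and is what forces the delicate form of $c_3$. \cref{lem:bad}, and the very fact that the two systems recouple with $\Delta_t$ bounded below, rest on a new kind of worst-case coupling in \cref{sec:worst-case}: prior hybrid analyses only ever compared a multiserver system against a resource-pooled single server, whereas here SMOD must be dominated against another \emph{multiserver} policy, SRPT-$k$. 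Getting the $\Theta(1)$ gain on \good\ to strictly beat the $O(\epsilon)$ loss on \bad\ is precisely the balancing act that the parametrization in \cref{def:parameter} encodes, and I expect this interplay between the worst-case and stochastic halves to be the main obstacle.
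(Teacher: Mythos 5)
Your proposal is correct and follows essentially the same route as the paper: Little's law reduces the claim to $\E{N^{SRPT\hy k}-N^{SEK\hy SMOD}}>0$, the difference is attributed to per-divergence integrated number differences $\Delta_t$, and positivity of $\E{\Delta_t}$ is obtained from the same good/bad/neutral trichotomy with the constants of \cref{def:parameter}; the only structural difference is that you invoke Palm inversion on the divergence point process where the paper runs renewal--reward over jointly-empty cycles (using the work inequality of \cref{lem:sek-bridge} to certify that empty-SRPT-$k$ implies empty-SEK-SMOD, hence a valid regeneration point), and these are interchangeable here. Two small inaccuracies worth fixing: the neutral scenario gives $\Delta_t\ge 0$ pathwise, not $\Delta_t=0$ exactly (the paper's \cref{lem:neutral} shows a $-kb_1$ loss on $[t,t+2k\epsilon]$ offset by at least $+kb_1$ afterward); and the good/bad contributions should carry an extra factor of $\epsilon$ each, namely $\mathbb{E}[\Delta_t\ind{\good}]\ge c_3c_4\epsilon$ and $\mathbb{E}[\Delta_t\ind{\bad}]\ge -c_1c_2\epsilon^2$, so the correct lower bound is $\epsilon c_3c_4-\epsilon^2 c_1c_2$, which is still positive under the same threshold $\epsilon\le \tfrac{c_3c_4}{2c_1c_2}$.
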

\begin{proof}[Proof deferred to \cref{sec:main-proof}]
\end{proof}

Before providing the detailed proof, we provide an outline in \cref{sec:proof-outline}. The proof is broadly split into stochastic and worst-case arguments. The stochastic arguments focus on the SEK portion of the policy, and the corresponding lemmas are proved in \cref{sec:stochastic}. The worst-case proofs focus on the SMOD portion of the policy, and are presented in \cref{sec:worst-case}.

\subsection{Proof outline}
\label{sec:proof-outline}

We discuss the key ideas in our proof, which are also summarized in \cref{diagram:thm-proof}.

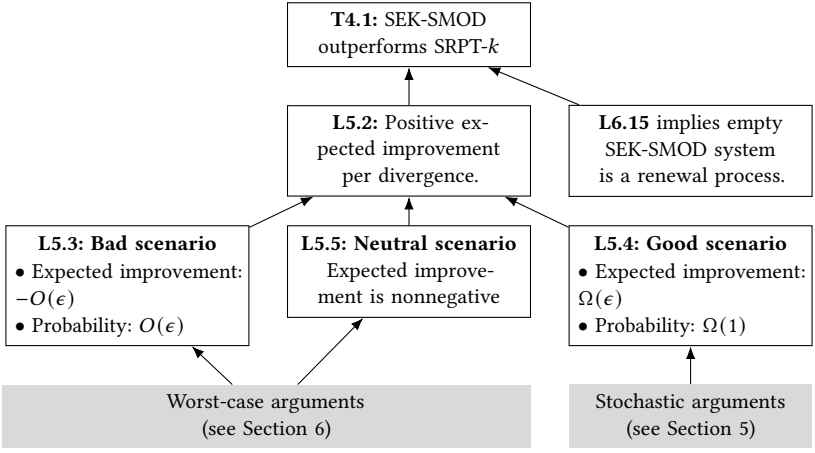
\begin{figure}
    \centering
    \begin{tikzpicture}[
        node distance=0.5cm and 0.5cm,
        box/.style={rectangle, draw, align=center, text width=3cm},
        bullet-box/.style={rectangle, draw, align=center, text width=3cm},
        worst-box/.style={rectangle, fill=gray!30, draw=none, text width=6.8cm, align=center},
        stoch-box/.style={rectangle, fill=gray!30, draw=none, text width=3cm, align=center},
        edge/.style={draw, -{Latex}},
    ]
    \begin{scope}[font=\footnotesize]
    \node[box] (main) {\textbf{T\ref{thm:main}:} SEK-SMOD outperforms SRPT-$k$};
    
    \node[box, below=of main] (any-diff) {\textbf{L\ref{lem:any-diff}:} Positive expected improvement per divergence.};
    \node[box, right=of any-diff] (sek-bridge) {\textbf{L\ref{lem:sek-bridge}} implies empty SEK-SMOD system is a renewal process.};

    \matrix[column sep=0.5cm, row sep=0.5cm] (scenarios) at (0,-3.3) {
    \node[bullet-box, anchor=north] (bad) {\textbf{L\ref{lem:bad}: Bad scenario}\\[4pt]
        \begin{minipage}{\textwidth}
        $\bullet$ Expected improvement: $-O(\epsilon)$\\
        $\bullet$ Probability: $O(\epsilon)$
        \end{minipage}}; &
    \node[bullet-box, anchor=north] (neutral) {\textbf{L\ref{lem:neutral}: Neutral scenario} \\ Expected improvement is nonnegative}; &
    \node[bullet-box, anchor=north] (good) {\textbf{L\ref{lem:good}: Good scenario} \\[4pt]
        \begin{minipage}{\textwidth}
        $\bullet$ Expected improvement: $\Omega(\epsilon)$\\
        $\bullet$ Probability: $\Omega(1)$
        \end{minipage}}; \\
    };

    \node[stoch-box, below=of good] (stoch) {Stochastic arguments \\ (see \Cref{sec:stochastic})};
    
    \node[worst-box, left=of stoch] (worst) {Worst-case arguments \\ (see \Cref{sec:worst-case})};
        
    \path[edge] (sek-bridge) -- (main);
    \path[edge] (any-diff) -- (main);
    
    \path[edge] (neutral) -- (any-diff);
    \path[edge] (bad) -- (any-diff);
    \path[edge] (good) -- (any-diff);

    \path[edge] (worst) -- (bad);
    \path[edge] (worst) -- (neutral);
    \path[edge] (stoch) -- (good);
    \end{scope}
    \end{tikzpicture}
\caption{Diagram of proof of Theorem \ref{thm:main}.}
\Description[Flowchart of the proof: Stochastic arguments in section 5 and worst-case arguments in section 6 underpin the main results.]{Flowchart of the proof: Worst-case arguments in section 6 underpin results for the bad and neutral scenarios, while stochastic arguments underpin results for the good scenario. These results are combined to prove our main results.}
\label{diagram:thm-proof}
\end{figure}

To prove that the SEK-SMOD policy always improves upon the SRPT-$k$ policy, we take a divergence-by-divergence approach. This is possible because the M/G/$k$/SEK-SMOD system is positive recurrent, and we show in \cref{sec:main-proof} that the empty system defines a renewal process. 

Note that a positive expected IND is equivalent to mean response time improvement, as we show in \cref{sec:main-proof}. We therefore focus on characterizing the effects of each SEK divergence on the expected IND. Specifically, we show that the expected IND is positive on each SEK divergence from SRPT-$k$ (\cref{lem:any-diff}).

Once SEK diverges from SRPT-$k$, an arrival happening too soon may result in negative IND. However, this happens with a small probability and is compensated by the gains when there are no arrivals for a relatively short period of time. \cref{lem:any-diff} formalizes this intuition by distinguishing three scenarios. For each of them, we bound the expected IND and frequency to obtain our main result. We provide an intuition to these scenarios below, and a formal definition in \cref{sec:stochastic}:
\begin{itemize}
    \item A \textit{good scenario} is when no jobs arrive for a substantial period of time after a divergence point. Then, the schedule selected by SEK-SMOD benefits the IND for sufficiently long, and provides an improvement of $\Omega(\epsilon)$. Because arrivals occur according to a Poisson process, this event happens with a positive probability, lower bounded by a constant. See \cref{lem:good}.
    \item A \textit{bad scenario}, on the contrary, happens when a job arrives immediately after a divergence point. This situation may yield a decrease in IND by $O(\epsilon)$, but it is rare and, indeed, happens with probability $O(\epsilon)$. Hence, a bad scenario contributes with a negative effect of at most $O(\epsilon^2)$. See \cref{lem:bad} for a formalization of this result and the corresponding proof.
    \item All other possibilities are called a \textit{neutral scenario}, and we show that they have a nonnegative effect on the expected IND. See \cref{lem:neutral}.
\end{itemize}

We specifically study these scenarios as $\epsilon \to 0$, limiting the size of the $k$ smallest jobs at the divergence point to smaller and smaller sizes.
We demonstrate that the advantage of SEK-SMOD policy over the SRPT-$k$ policy is clearest as this $\epsilon$ gets smaller, and specifically, that there exists an explicit $\epsilon$ threshold small enough such that the expected IND is always positive, given in \cref{def:parameter}.

The stochastic results in the three scenarios are underpinned by worst-case results for the SMOD policy, especially in the bad scenario.
We provide a discussion and a formal proof of these worst-case results in \cref{sec:worst-case}.

Our worst-case results allow us to shift our attention from the duration until the SEK-SMOD system is \emph{identical} to the SRPT-$k$ system, to the duration until the SEK-SMOD system \emph{dominates} SRPT-$k$ system (see \cref{def:dominate}). Intuitively, when SEK-SMOD dominates SRPT-$k$, the SEK-SMOD system state is strictly preferable to that of the SRPT-$k$ system. We prove in \cref{cor:dominance} that this property is maintained until the next divergence point. We then bound the time until dominance in \cref{lem:max-diff}, and in \cref{lem:diff-per-job} we bound the IND in terms of the number jobs that arrive before dominance.
These combine to give the IND bounds that we need.

These worst-case bounds are incorporated into our stochastic results for the good, bad, and neutral scenarios, demonstrating the desired behavior in the $\epsilon \to 0$ limit. This behavior implies positive expected IND for any $\epsilon$ smaller than an explicit threshold, which in turn implies our main result, \cref{thm:main}: SEK-SMOD with the parameterization given in \cref{def:parameter} always improves on SRPT-$k$.

\section{Stochastic proofs}
\label{sec:stochastic}

In this section, we prove our main result, \cref{thm:main}.
To do so, we prove that each SEK-SMOD divergence has a positive expected IND. The timing of the next arrival after a divergence point and the corresponding job size have a significant influence on the number of jobs present in each system at each time interval, determining the value of the IND.  In \cref{lem:any-diff}, we identify three possible scenarios and study the corresponding IND separately. We start by defining these scenarios:
\begin{definition}\label{def:good-bad-neutral}
    Define three events in the time immediately following a divergence point:
    \begin{description}
        \item[Bad scenario:] A job arrives during the next $2k\epsilon$ time after the divergence point (i.e. before all small jobs are guaranteed to finish under SEK-SMOD).
        \item[Good scenario:] No job arrives during the next $kx/3 > 2k\epsilon$ time after the divergence point,
        and then $k$ big enough jobs arrive. These are large enough to wait in line until the largest job in service completes.
        Specifically, letting the size of the largest job at time of divergence be $b$, we require that exactly $k$ jobs with sizes in $[x, 2x]$ arrive during the window $[k(b-2x/3), k(b-x/3)]$, and no other jobs arrive before all $k$ of those jobs are done and the originally-largest job is done, at time $k(b+2x)$.
        \item[Neutral scenario:] Everything else.
    \end{description}
    We define the events $\bad_t, \good_t,$ and $\neutral_t$ to occur when the corresponding scenario occurs due to a divergence at time $t$.
\end{definition}

In \cref{lem:bad,lem:good,lem:neutral}, respectively,
we lower bound the IND in each scenario, based on the stochastic properties of the SEK policy.
These stochastic results are underpinned by worst-case results on the SMOD policy,
which we prove in \cref{sec:worst-case}.
Finally, in \cref{sec:main-proof}, we bring these results together to prove our main result, \cref{thm:main}.

\begin{lemma}
    \label{lem:any-diff}
    For any $\lambda, S, k$, under the SEK-SMOD parameterization in \cref{def:parameter},
    and for any divergence starting state $\bb\bb$, 
    the integrated number difference is positive. Specifically,
    \begin{align*}
        \E{\Delta_t \mid \bbb(t) = \bbb} \geq \epsilon c_3c_4 - \epsilon^2 c_1c_2,
    \end{align*}
    which is positive whenever the parameter $\epsilon<\frac{c_3c_4}{c_1c_2}$.
\end{lemma}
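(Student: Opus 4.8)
The plan is to prove \cref{lem:any-diff} by conditioning on which of the three scenarios of \cref{def:good-bad-neutral} is realized after the divergence at time $t$. By construction, $\bad_t$, $\good_t$, and $\neutral_t$ partition the post-divergence sample space (an arrival within $2k\epsilon$; no arrival within $kx/3$ followed by the prescribed burst of $k$ jobs of size in $[x,2x]$ and then silence until time $k(b+2x)$; everything else), so the law of total expectation gives
\begin{align*}
\E{\Delta_t \mid \bbb(t) = \bbb} = \P{\bad_t}\,\E{\Delta_t \mid \bad_t} + \P{\good_t}\,\E{\Delta_t \mid \good_t} + \P{\neutral_t}\,\E{\Delta_t \mid \neutral_t},
\end{align*}
where every term is implicitly conditioned on the divergence starting state $\bbb(t)=\bbb$. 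It then suffices to bound the probability of each scenario and the conditional IND (\cref{def:ind}) within it, which I would package as three separate lemmas.

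\textbf{The three scenario bounds.} For the \emph{bad} scenario (\cref{lem:bad}): the probability that a Poisson($\lambda$) process produces an arrival in a window of length $2k\epsilon$ is $1-e^{-2k\lambda\epsilon}\le 2k\lambda\epsilon = c_1\epsilon$, and the worst-case SMOD results of \cref{sec:worst-case} show that even after such an ill-timed arrival, SEK-SMOD comes to dominate (\cref{def:dominate}) the coupled SRPT-$k$ system within a bounded time, with a bounded number-in-system gap meanwhile, so the (possibly negative) conditional IND satisfies $\E{\Delta_t\mid\bad_t}\ge -c_2\epsilon$. For the \emph{good} scenario (\cref{lem:good}): computing the probability of the prescribed arrival pattern under the Poisson process and the job-size distribution, and lower-bounding it uniformly over divergence states (possible since the SEK conditions force $b\in[x,2x]$, so the relevant window has fixed length $kx/3$ and the total ``no-arrival'' duration is at most $k(y+8x/3)$), yields $\P{\good_t}\ge c_3$; tracking the two number-in-system trajectories then shows that SEK-SMOD's head start of order $\epsilon$ on the large job turns into a net integrated advantage $\E{\Delta_t\mid\good_t}\ge c_4\epsilon$. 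For the \emph{neutral} scenario (\cref{lem:neutral}): once the short $\bad_t$ window has elapsed without an arrival, the worst-case SMOD results give that SEK-SMOD dominates SRPT-$k$ and keeps dominating until the systems re-couple, hence $\E{\Delta_t\mid\neutral_t}\ge 0$.

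\textbf{Combining.} With these in hand: the neutral term is nonnegative; the good term is at least $c_3\cdot c_4\epsilon=\epsilon c_3 c_4$, since both factors are nonnegative with the stated lower bounds; and the bad term is at least $\P{\bad_t}\cdot(-c_2\epsilon)\ge(c_1\epsilon)(-c_2\epsilon)=-\epsilon^2 c_1 c_2$, using $-c_2\epsilon\le 0$ and $0\le\P{\bad_t}\le c_1\epsilon$. Summing the three contributions gives
\begin{align*}
\E{\Delta_t \mid \bbb(t)=\bbb} \ge \epsilon c_3 c_4 - \epsilon^2 c_1 c_2,
\end{align*}
which is strictly positive exactly when $\epsilon c_3 c_4 > \epsilon^2 c_1 c_2$, i.e. $\epsilon < c_3 c_4/(c_1 c_2)$, as claimed. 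Note this argument is purely per-divergence and uses no positive-recurrence or renewal structure.

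\textbf{Main obstacle.} The assembly above is routine once the three scenario lemmas hold, so the real difficulty lies entirely in those lemmas, and the crux is \cref{lem:good}: one must simultaneously (i) lower-bound the IND by $\Omega(\epsilon)$, which requires a delicate accounting of the transient in which SEK-SMOD temporarily carries one extra small job before collecting its future-parallelism dividend, and (ii) lower-bound the probability of the finely timed arrival burst by a constant independent of the divergence state. The bad- and neutral-scenario bounds in turn rely on the worst-case SMOD machinery of \cref{sec:worst-case}, in particular dominance preservation and a bound on the time until dominance, which are themselves substantial. Secondary points to nail down are that $\bad_t,\good_t,\neutral_t$ genuinely partition the space so that the conditioning is legitimate, and that the constants $c_1,\dots,c_4$ are uniform over all divergence starting states.
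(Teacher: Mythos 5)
Your proposal matches the paper's proof essentially exactly: both condition on the bad/good/neutral partition, invoke the three scenario lemmas (\cref{lem:bad,lem:good,lem:neutral}) for the probability and conditional-IND bounds, and combine via the law of total expectation to get $\E{\Delta_t \mid \bbb(t)=\bbb} \ge \epsilon c_3 c_4 - \epsilon^2 c_1 c_2$. Your identification of \cref{lem:good} and the worst-case SMOD machinery as the real content is also consistent with how the paper structures the argument.
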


Note that \cref{lem:any-diff} provides a lower bound on the performance improvement of SEK-SMOD compared to SRPT-$k$. In \cref{sec:empirical} we show that the practical improvement may be considerably higher, even for a simplified version of the SEK-SMOD policy.

\begin{proof}
    To compute a lower bound on $\E{\Delta_t \mid \bbb(t) = \bbb}$, we condition on the possible scenarios. We show the following lower bounds for any divergence starting state $\bbb$, where $c_1, c_2, c_3, c_4$ are defined in \cref{def:parameter}. 
    We focus on proving lower bounds that are tight in the $\epsilon\to 0$ limit,  for any divergence starting state $\bbb$. Specifically, we show:
    \begin{description}
        \item[Bad:] (\cref{lem:bad}) $P(\bad_t \mid \bbb(t) = \bbb) \le c_1 \epsilon$, and $E[\Delta_t \mid \bbb(t) = \bbb \,\&\, \bad_t] \ge - c_2 \epsilon$.
        \item [Good:] (\cref{lem:good}) $P(\good_t \mid \bbb(t) = \bbb) \ge c_3$, and $E[\Delta_t \mid \bbb(t) = \bbb \,\&\, \good_t] \ge c_4 \epsilon$.
        \item [Neutral:] (\cref{lem:neutral}) $E[\Delta_t \mid \bbb(t) = \bbb \,\&\, \neutral_t] \ge 0$.
    \end{description}
    Recall from \cref{def:parameter} that the constants $c_1, c_2, c_3, c_4$ depend on $\lambda, S, k$, but not on $\epsilon$.

    Then, by the law of total probability and using the bounds listed above, we see that
    \begin{align*}
        & \E{\Delta_t \mid \bbb(t) = \bbb} \geq -\epsilon^2 c_1c_2 + \epsilon c_3 c_4 + 0
    \end{align*}
    
    With these lower bounds, we see that for any $\epsilon < \frac{c_3 c_4}{c_1 c_2}$, as in the parameterization in \cref{def:parameter}, we have $E[\Delta_t \mid \bbb(t)=\bbb] > 0$ for any divergence starting state $\bbb$, as desired.
\end{proof}

We now formalize the results to lower bound the IND for the bad, good, and neutral scenarios:

\begin{lemma}[Bad scenario]
    \label{lem:bad}
    With the constants $c_1, c_2 > 0$ defined in \cref{def:parameter} that depend on $\lambda, S, k$ and not on $\epsilon$, such that for any SEK-SMOD parameterization as in \cref{lem:any-diff} and any divergence starting state $\bbb$,
    \begin{align}
    \label{eq:bad-prob}
        P(\bad_t \mid \bbb(t) = \bbb) &\le c_1 \epsilon,    \\
    \label{eq:bad-ind}
        \E{\Delta_t \mid\bbb(t) = \bbb \,\&\, \bad_t} &\ge - c_2 \epsilon.
    \end{align}
\end{lemma}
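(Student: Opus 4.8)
The plan is to prove \eqref{eq:bad-prob} and \eqref{eq:bad-ind} separately; the first is immediate and the second carries all the weight. By \cref{def:good-bad-neutral}, the bad scenario is precisely the event that at least one job arrives during $[t,t+2k\epsilon]$. Since arrivals form a rate-$\lambda$ Poisson process independent of $\bbb(t)$, this probability is $1-e^{-2k\lambda\epsilon}\le 2k\lambda\epsilon = c_1\epsilon$, giving \eqref{eq:bad-prob}. For \eqref{eq:bad-ind}, I would first pin down the coupled dynamics. At the divergence point SEK-SMOD serves the $k-1$ smallest jobs (all of size in $[\epsilon',\epsilon]$) and the single largest job (size in $[x,2x]$), deferring one small job $s_k\le\epsilon$; SRPT-$k$ serves all $k$ small jobs. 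Because $\epsilon\le x/6$, the large job's remaining size stays above $x-2\epsilon>\epsilon$ until it completes, so it is never among the $k$ smallest; a direct accounting of completed work up to the bad arrival (at some $t+a$ with $a<2k\epsilon$) shows the two systems do identical work, so $W^{SEK\hy SMOD}(t+a)=W^{SRPT\hy k}(t+a)$ and $N^{SEK\hy SMOD}(u)-N^{SRPT\hy k}(u)\le 1$ for $u\le t+a$ --- the two systems differ only by SEK-SMOD carrying one extra tiny job carved out of the large job. By \cref{def:sek-smod}, from $t+a$ onward SEK-SMOD runs SMOD, and it does so until $\tau$: no fresh SEK divergence can occur, since that would require identical states or no arrival since the last divergence, neither of which holds on $(t+a,\tau)$.

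I would then split $\Delta_t = \int_t^{\tau_{\mathrm{dom}}}(N^{SRPT\hy k}-N^{SEK\hy SMOD})\,du + \int_{\tau_{\mathrm{dom}}}^{\tau}(N^{SRPT\hy k}-N^{SEK\hy SMOD})\,du$, where $\tau_{\mathrm{dom}}$ is the first time SEK-SMOD dominates the coupled SRPT-$k$ system (this time exists and is $\le\tau$, since identical states are a special case of dominance). For the second integral: dominance (\cref{def:dominate}) forces every padding index of the SRPT-$k$ vector to be a padding index of the SEK-SMOD vector, so $N^{SEK\hy SMOD}\le N^{SRPT\hy k}$ pointwise and the integrand is $\ge 0$; moreover by \cref{cor:dominance} dominance is preserved until the next divergence, which does not occur before $\tau$, so this holds on all of $[\tau_{\mathrm{dom}},\tau]$. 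Hence the second integral contributes $\ge 0$ and can be dropped.

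For the first integral I bound the integrand below by $-(N^{SEK\hy SMOD}-N^{SRPT\hy k})^+$ and appeal to the worst-case analysis of SMOD in \cref{sec:worst-case}: every job that SEK-SMOD carries in excess of SRPT-$k$ is a tiny job (remaining size $\le\epsilon$), SMOD's zero-or-positive-difference priority rule resolves each such job within $O(k\epsilon)$ time of its becoming excess, and the total number of distinct excess jobs over $[t,\tau_{\mathrm{dom}}]$ is at most $k+2$ plus the number $M$ of arrivals in that interval --- these are exactly the statements of \cref{lem:diff-per-job} and \cref{lem:max-diff}. Together these give the pathwise bound $-\Delta_t \le (k+2+M)\,k\epsilon$ on the bad scenario. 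Finally, $\tau_{\mathrm{dom}}$ is a stopping time with finite mean, bounded by a busy-period estimate $\E{\tau_{\mathrm{dom}}-t}\le \frac{(k+1)y+kx/6}{1-\rho_{\le y}}$ in which only jobs of size at most $y$ contribute interference; optional stopping for the compensated Poisson process then gives $\E{M\mid\bbb,\bad_t}=\lambda\,\E{\tau_{\mathrm{dom}}-t\mid\bbb,\bad_t}$, and taking expectations yields $\E{\Delta_t\mid\bbb(t)=\bbb\,\&\,\bad_t}\ge -k\epsilon\big(\lambda\tfrac{(k+1)y+kx/6}{1-\rho_{\le y}}+k+2\big)=-c_2\epsilon$.

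The main obstacle is the bundle of worst-case facts about SMOD used in the previous paragraph: that the excess of SEK-SMOD over SRPT-$k$ stays confined to $\epsilon$-sized jobs that are short-lived, and that dominance is reached within a load-dependent but $\epsilon$-independent time. The subtlety is that arbitrarily many arrivals of arbitrary sizes may occur after the divergence and before dominance, and one must show that SMOD's catch-up rule never lets this amplify SEK-SMOD's initial one-job, $\epsilon$-sized disadvantage into a larger or more persistent one. That is precisely the work done in \cref{sec:worst-case}; given those lemmas, the bad-scenario bound reduces to the split at $\tau_{\mathrm{dom}}$, the trivial Poisson estimate, and the bookkeeping that assembles $c_2$.
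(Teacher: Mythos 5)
Your proposal is correct and follows essentially the same route as the paper: the Poisson bound $1-e^{-2k\lambda\epsilon}\le 2k\lambda\epsilon$ for \eqref{eq:bad-prob}, and for \eqref{eq:bad-ind} the combination of \cref{lem:diff-per-job} (at most $k+2+M$ excess jobs, each costing at most $k\epsilon$), \cref{lem:max-diff} together with \cref{lem:sek-bridge} (dominance by the completion time of the largest job at divergence), Wald's equation for $\E{M}$, and the multiserver tagged-job busy-period bound $\frac{(k+1)y+kx/6}{1-\rho_{\le y}}$ to assemble $c_2$. Your explicit split of $\Delta_t$ at $\tau_{\mathrm{dom}}$ and the observation that the post-dominance integral is nonnegative is folded into the statement of \cref{lem:diff-per-job} in the paper, but the substance is identical.
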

\begin{proof}
    Let's start with \eqref{eq:bad-prob}. 
    Note that for the bad event to happen,
    a job must arrive in the next $2k\epsilon$ time.
    This occurs with probability $1 - e^{-{2k\lambda \epsilon }}$, as the arrival process is a Poisson process with rate $\lambda$. Note that $1 - e^{-{2k\lambda\epsilon}} \le 2k\lambda \epsilon$. 
    Thus, because $c_1 = 2k\lambda$, we confirm \eqref{eq:bad-prob}.

    Next, let's turn to \eqref{eq:bad-ind}.
    To prove this, we invoke the worst-case properties of SEK-SMOD proven in \cref{sec:worst-case}.
    \cref{lem:diff-per-job} bounds the first time after $t$ when the SEK-SMOD system dominates the SRPT-$k$ system, which we call $t_{dom}$. \cref{lem:max-diff} gives us a bound on $\E{t_{dom} \mid\bbb(t) = \bbb \,\&\, \bad_t}$ that depends on the completion time of the largest job present at the time of divergence.
    By \cref{lem:sek-bridge}, both properties apply to the SEK-SMOD policy, not just the SMOD policy.
    The completion time of the largest job present at the time of divergence can be upper bounded using multiserver tagged job analysis of \cite{grosof-etal-2019-srpt-k}, which completes the proof. We provide the details in \cref{ap:proof-bad}.
\end{proof}

\begin{lemma}[Good scenario]
    \label{lem:good}
    With the constants $c_3, c_4 > 0$ defined in \cref{def:parameter} that depend on $\lambda, S, k$ and not on $\epsilon$, and any divergence starting state $\bbb$,
    \begin{align*}
        P(\good_t \mid \bbb(t) = \bbb) &\ge c_3, \qquad
        E[\Delta_t \mid \bbb(t) = \bbb \,\&\, \good_t] \ge c_4 \epsilon.
    \end{align*}
\end{lemma}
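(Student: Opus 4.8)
\textbf{Proof proposal for \cref{lem:good}.}

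The plan is to analyze the explicit arrival pattern that defines the good scenario and track the resulting integrated number difference until the SEK-SMOD system dominates (and subsequently matches) SRPT-$k$. First I would establish the probability bound $P(\good_t \mid \bbb(t) = \bbb) \ge c_3$. The good scenario requires three independent pieces of the Poisson process: (i) no arrival during the window of length $kx/3$ immediately after $t$, contributing a factor $e^{-\lambda kx/3}$; (ii) exactly $k$ arrivals during the window $[k(b-2x/3), k(b-x/3)]$ of length $kx/3$, each landing in $[x,2x]$, contributing a Poisson factor $\frac{(\lambda kx/3)^k e^{-\lambda kx/3}}{k!}\P{S\in[x,2x]}^k$; and (iii) no arrivals during the remaining portion of $[t, t+k(b+2x)]$ not already covered, contributing another exponential factor. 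Since $b \le y = 2x$ by the divergence conditions, the total idle window length is at most $k(y + 8x/3)$, so multiplying these factors and using $b \le y$ throughout gives exactly the stated $c_3 = \frac{(\lambda kx/3)^k e^{-\lambda k(y+8x/3)}}{k!}\P{S\in[x,2x]}^k$. The key subtlety here is to make sure the three windows are disjoint (using $kx/3 > 2k\epsilon$ and $b \ge x$) so independence of the Poisson increments applies cleanly.

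Next I would establish the IND bound $\E{\Delta_t \mid \bbb(t) = \bbb \,\&\, \good_t} \ge c_4\epsilon = (k/2)\epsilon$. The idea is that under the good arrival pattern, SEK-SMOD's decision to serve the large job pays off: by time $k(b+2x)$, the SEK-SMOD system has finished the originally-large job and the $k$ newly-arrived medium jobs, and is empty, whereas SRPT-$k$ finished the $k$ small jobs first, then worked on the large job while the medium jobs waited — so SRPT-$k$ still has work left and a job-count deficit. Concretely, I would track $N^{SRPT\hy k}(u) - N^{SEK\hy SMOD}(u)$ over $[t,\tau]$. During the initial interval of length roughly $2k\epsilon$ (while SEK-SMOD clears its $k$ small jobs and SRPT-$k$ has not yet), the difference is at least... actually SEK-SMOD has $k+1$ jobs and SRPT-$k$ has $k+1$; the difference emerges once the small jobs complete. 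I would carefully compute that after all $k$ small jobs complete under SEK-SMOD (which happens within $2k\epsilon$ since they have size $\le \epsilon$ and are served at rate $1/k$), SEK-SMOD has $1$ job present (the large one, partially served), while SRPT-$k$ has just finished its $k$ small jobs at the same moment and has $1$ job present too. The divergence in counts actually materializes when the $k$ medium jobs arrive: SEK-SMOD keeps them queued behind the nearly-complete large job and finishes the large job sooner, then blasts through the medium jobs with full parallelism; SRPT-$k$ preempts the large job for the medium jobs. I would show SEK-SMOD empties at time $k(b+2x)$ while SRPT-$k$ empties strictly later, and that the count difference integrated over the relevant sub-intervals is at least $(k/2)\epsilon$ — the $\epsilon$ factor entering because the head start SEK-SMOD gains on the large job is exactly the $k\epsilon$-ish of work it did not "waste" switching, or alternatively because the small jobs being size $\ge \epsilon'=\epsilon/2$ rather than $0$ gives SEK-SMOD a quantifiable edge.

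For the mechanics, I would invoke \cref{lem:max-diff} and \cref{lem:diff-per-job} (and \cref{lem:sek-bridge} to transfer SMOD worst-case properties to SEK-SMOD) to control the tail of the interval $[t_{dom}, \tau]$, ensuring the contribution after dominance is nonnegative, so that the lower bound computed on the explicit good-pattern portion suffices. The main obstacle I anticipate is the bookkeeping of exactly where the factor $\epsilon$ (as opposed to a constant) enters the IND lower bound: one must show the improvement is $\Theta(\epsilon)$, not $\Theta(1)$, which requires carefully accounting for the fact that SEK-SMOD's and SRPT-$k$'s trajectories are identical except for a window of total duration on the order of $\epsilon$ (the time SRPT-$k$ spends finishing small jobs that SEK-SMOD deferred, versus the large-job progress). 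I would need to argue that outside an $O(\epsilon)$-length window the two systems have the same job count, and inside it the difference is $\Theta(1)$ (namely $1$, since SEK-SMOD has one fewer job present), yielding the $c_4\epsilon$ bound — and that the post-arrival dynamics, governed by the worst-case SMOD guarantees, never erase this gain before the systems recouple. Getting the constant $c_4 = k/2$ exactly right (rather than some other constant times $\epsilon$) is where most of the careful calculation will go, and is deferred to the appendix.
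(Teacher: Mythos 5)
Your probability bound follows the paper's argument essentially verbatim: the same decomposition into three independent Poisson events (empty windows, exactly $k$ arrivals in the $kx/3$-length window, each with size in $[x,2x]$), with $b_{k+1}\le y$ used to get a state-independent constant. That half is fine.

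The IND half has a genuine gap: you misidentify the mechanism that produces the $+kb_1$ gain, and the specific dynamics you describe are wrong. First, you claim that when the $k$ medium jobs arrive, ``SRPT-$k$ preempts the large job for the medium jobs'' while SEK-SMOD does not. Neither system preempts the large job. The arrival window $[k(b-2x/3),k(b-x/3)]$ is chosen precisely so that at arrival time the originally-largest job has remaining size at most $2x/3+b_1 < x$ in both systems, hence strictly smaller than every arriving job of size in $[x,2x]$; SRPT-$k$ therefore keeps it in service. Second, your account of the initial phase is off: SEK-SMOD does not clear all $k$ small jobs in parallel --- it serves only $k-1$ of them plus the large job, deferring the $k$th-smallest, which completes at $k(b_1+b_k)$ instead of $kb_k$. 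This is a $-kb_1$ contribution to $\Delta_t$, which is exactly cancelled by the $+kb_1$ from the large job completing at $kb_{k+1}$ instead of $k(b_1+b_{k+1})$. If the accounting stopped with the jobs present at divergence, the net would be $0$, not $\Omega(\epsilon)$.

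The decisive term, which your proposal does not pin down, comes from the arriving jobs: $k-1$ of them enter service immediately in both systems, but the largest one must wait for the originally-largest job to free a server, and that job carries a $b_1$ remaining-size head start under SEK-SMOD throughout. Hence exactly one arriving job completes $kb_1$ earlier under SEK-SMOD, giving net $\Delta_t = kb_1 \ge k\epsilon'/1 = (k/2)\epsilon$ via $b_1\ge\epsilon'=\epsilon/2$ (you do correctly identify this last step as the source of the $\epsilon$ factor). Your fallback of invoking \cref{lem:max-diff} and \cref{lem:diff-per-job} to control the post-dominance tail is a reasonable safety net but is not how the paper argues here --- the good scenario fully specifies the arrival pattern, so the paper computes every completion time exactly; the worst-case machinery is reserved for the bad and neutral scenarios.
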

\begin{proof}
    First, to characterize $P(\good_t \mid \bbb(t) = \bbb)$,
    note that the good scenario described in \cref{def:good-bad-neutral}
    does not depend on $\epsilon$ at all.
    We compute its probability and verify it's at least $c_3$ in \cref{ap:good-probabilities}.
    
    We now bound $\E{\Delta_t \mid \bbb(t) = \bbb \,\&\, \good_t}$.
    Let us compare the behavior under SRPT-$k$ with that under SEK-SMOD. 
    The rest of the proof can be summarized in the following situations, depending on which jobs are being counted in the IND. Each of them has an effect $\Delta_t$, characterized below.
    \begin{itemize}
        \item When counting the $k-1$ jobs with least remaining size at time of divergence, both systems have the same completion time.
        \item When counting the second-largest job at time of divergence, SRPT-$k$ completes earlier by $k b_1$, where $b_1$ is the remaining size of the smallest job at time of divergence, under $\bbb$.
        \item When counting largest job at the time of divergence, SEK-SMOD completes earlier by $k b_1$.
        \item If there are later arriving jobs before merging, then $k-1$ of the jobs have the same completion times,
        and one has a completion time $k b_1$ earlier in the SEK-SMOD system.
    \end{itemize}
    Note that $\Delta_t$ is equivalent to the difference in total response time over all jobs that arrive after the divergence at time $t$ and prior to merging. Consequently, we show that
    \begin{align*}
        E[\Delta_t \mid \bbb(t) = \bbb \,\&\, \good_t] &= k b_1 \ge k \epsilon/2, \qquad c_4 = k/2.
    \end{align*}

    The proof of each bullet point is simple, as it only requires careful analysis of the remaining job sizes. We provide the details in \cref{ap:good-expectation}.
\end{proof}

\begin{lemma}[Neutral scenario]
    \label{lem:neutral}
    For the SEK-SMOD parameterization given in \cref{def:parameter}
    and any allowed starting state $\bbb$,
    all neutral and good scenarios achieve nonnegative $\Delta_t$:
    \begin{align*}
        [\Delta_t \mid \bbb(t) = \bbb \,\&\, (\good_t \wedge \neutral_t)] \ge 0, \text{always.}
    \end{align*}
\end{lemma}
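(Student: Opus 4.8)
The plan is to reduce \cref{lem:neutral} to the single event that no job arrives during $[t,t+2k\epsilon]$, and then to split the integral defining $\Delta_t$ at the deterministic moment $t^{\star}$ at which SEK-SMOD first dominates SRPT-$k$. By \cref{def:good-bad-neutral}, the bad scenario is exactly the event that a job arrives in $(t,t+2k\epsilon]$, while the good and neutral scenarios both lie in its complement; so on $\good_t\cup\neutral_t$ no job arrives during $[t,t+2k\epsilon]$, and it suffices to prove $\Delta_t\ge 0$ there. Write $b_1\le\cdots\le b_k$ for the remaining sizes of the $k$ small jobs at the divergence (all in $[\epsilon',\epsilon]$) and $b_{k+1}\in[x,2x]$ for the large job; since $\epsilon\le x/6$, the job $b_{k+1}$ stays the strict largest throughout the short window considered.

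The second step is to trace both systems over the arrival-free interval, which is routine bookkeeping (ties among the $b_i$ handled by tracking multiplicities). The $k-1$ smallest jobs are served continuously from $t$ in both systems, hence complete at the common times $t+kb_j$ for $j\le k-1$, and the job counts agree on $[t,t+kb_k)$; at $t+kb_k$ SRPT-$k$ completes $b_k$ and is left with only $b_{k+1}$, whereas SEK-SMOD -- which left $b_k$ idle until $t+kb_1$ while serving $b_{k+1}$ since $t$ -- still has two jobs, namely $b_{k+1}$ together with a leftover of remaining size $b_1$, and reaches one job only at $t^{\star}:=t+k(b_k+b_1)\le t+2k\epsilon$. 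Thus $N^{SRPT\hy k}-N^{SEK\hy SMOD}$ is $0$ on $[t,t+kb_k)$ and $-1$ on $[t+kb_k,t^{\star})$, giving $\int_t^{t^{\star}}(N^{SRPT\hy k}(u)-N^{SEK\hy SMOD}(u))\,du=-kb_1$; moreover at $t^{\star}$ both systems consist of the single job $b_{k+1}$, the SEK-SMOD copy having received exactly $b_1$ more service, so SEK-SMOD dominates SRPT-$k$ with work gap $W^{SRPT\hy k}(t^{\star})-W^{SEK\hy SMOD}(t^{\star})=b_1$.

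On the remaining interval $[t^{\star},\tau]$ no new divergence occurs -- before the first post-$t$ arrival SEK-SMOD runs SEK with at most $k$ jobs, and afterwards it runs SMOD, and neither situation triggers a SEK divergence -- so \cref{cor:dominance} keeps SEK-SMOD dominating SRPT-$k$ throughout $[t^{\star},\tau]$, whence $N^{SEK\hy SMOD}\le N^{SRPT\hy k}$ there. The crux is a work-potential argument: with $D(u):=W^{SRPT\hy k}(u)-W^{SEK\hy SMOD}(u)$, each system keeps $\min(N^{\pi}(u),k)$ servers busy on this interval (neither runs a SEK divergence there), so $D'(u)=\frac{1}{k}\bigl(\min(N^{SEK\hy SMOD},k)-\min(N^{SRPT\hy k},k)\bigr)\le 0$ by dominance; integrating from $t^{\star}$ (where $D=b_1$) to $\tau$ (where the states coincide, so $D=0$) yields $\int_{t^{\star}}^{\tau}\bigl(\min(N^{SRPT\hy k},k)-\min(N^{SEK\hy SMOD},k)\bigr)\,du=kb_1$. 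Since $N^{SEK\hy SMOD}\le N^{SRPT\hy k}$ forces $\min(N^{SRPT\hy k},k)-\min(N^{SEK\hy SMOD},k)\le N^{SRPT\hy k}-N^{SEK\hy SMOD}$ pointwise, we obtain $\int_{t^{\star}}^{\tau}(N^{SRPT\hy k}-N^{SEK\hy SMOD})\,du\ge kb_1$, and summing the two intervals gives $\Delta_t\ge -kb_1+kb_1=0$.

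I expect the main obstacle to lie entirely in the second step's trajectory analysis -- verifying that $b_{k+1}$ never leaves the head of either queue, that ties among the small jobs do not disturb the count comparison, and that $t^{\star}\le t+2k\epsilon$ so the whole argument sits inside the arrival-free window -- together with a clean use of \cref{cor:dominance} (checking that the next divergence point does not precede $\tau$). The post-dominance step is short once dominance is secured; its only structural input from \cref{sec:worst-case} is \cref{cor:dominance} itself.
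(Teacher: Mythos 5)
Your proposal is correct and takes essentially the same approach as the paper: the same decomposition into an arrival-free prefix contributing exactly $-kb_1$ to the IND (via the shifted completion of the $b_k$ job) followed by a post-dominance tail contributing at least $+kb_1$. The only cosmetic differences are that you split at $t^{\star}=t+k(b_1+b_k)$ rather than at $t+2k\epsilon$, and that your work-potential argument for the tail re-derives inline what the paper obtains by citing \cref{lem:negative-bounds-diff} (your $D$ coincides with $r^-$ once dominance holds).
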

\begin{proof}
    In the good and neutral scenarios, no job arrives in the next $2k\epsilon$ time after the divergence point, which occurs at time $t$.
    The integrated difference from time $t$ to $t+2k\epsilon$ is $-kb_1$, which equals
    the remaining size of the smallest job in $\bbb$ at time of divergence:
    \begin{align}
        \label{eq:neutral-first-interval}
        \int_{u=t}^{t+2k\epsilon} N^{SRPT\hy k}(u) - N^{SEK \hy SMOD}(u) = -kb_1.
    \end{align}
    To show this, we can examine the completion times or lack thereof of each of the $k+1$
    jobs which are present at divergence under $\bbb$,
    up to time $t+2k\epsilon$. 

    To complete the proof, we analyze the IND under three possible situations:

\begin{itemize}
        \item The $k-1$ jobs of smallest remaining size all are in service at the divergence time under both policies, and finish at the same time under both policies.
        This has no effect on the integrated difference.
        \item The job of second-largest remaining size is in service at time divergence time $t$ under the SRPT-$k$ policy,
        and finishes at time $t+kb_k$.
        Under the SEK-SMOD policy, the same job only enters service at time $t+k b_1$,
        when the job of smallest remaining size completes,
        and finishes at time $t+k(b_1 + b_k)$. Note that $b_1 \le b_k \le \epsilon$, so $k(b_1 + b_k) \le 2k \epsilon$, so this completion occurs in the desired time window.
        This generates the $-kb_1$ difference described above.
        \item The job of largest remaining size has not finished by time $2k\epsilon$
        under either policy, as $b_{k+1} \ge x > 6\epsilon$.
        This has no effect on the integrated difference.
    \end{itemize}

    At time $t+2k\epsilon$,
    the state of the SRPT-$k$ system is a single job with remaining size $b_{k+1} - (2\epsilon - b_1)$,
    as the largest job entered service at time $t+kb_1$, and has received $(2\epsilon - b_1)$ service.
    The state of the SEK-SMOD system is a single job with remaining size $b_{k+1} - 2\epsilon$,
    as the largest job entered service at time $t$ and has received $2\epsilon$ service.
    In each system, all other jobs have completed and no jobs have arrived.

    Note in particular that SEK-SMOD dominates SRPT-$k$ at this point in time,
    because the SEK-SMOD job has less remaining size. 
    We now invoke \cref{lem:negative-bounds-diff}, which shows that the IND
    over the interval from $t+2k\epsilon$ to the time of the next merge $\tau$,
    \begin{align}
        \label{eq:neutral-second-interval}
        \int_{u=t+2k\epsilon}^{\tau} N^{SRPT\hy k}(u) - N^{SEK \hy SMOD}(u) \ge kb_1.
    \end{align}

    Combining the intervals in \eqref{eq:neutral-first-interval} and \eqref{eq:neutral-second-interval},
    we find that $\Delta_t \ge 0$, always.
\end{proof}

\subsection{Proof of Theorem ~\ref{thm:main}}
\label{sec:main-proof}

We are now ready to prove \cref{thm:main}, our main result.

\begin{reptheorem}{thm:main}
    For all arrival rates $\lambda$, job sizes $S$, and numbers of servers $k\geq 2$,
    with the SEK-SMOD parameterization given in \cref{def:parameter},
    if M/G/$k$/SRPT is positive recurrent,
    the mean response time of SEK-SMOD is smaller than that of SRPT-$k$:
    \begin{align*}
        \E{T^{SEK \hy SMOD}} < \E{T^{SRPT \hy k}}. 
    \end{align*}
\end{reptheorem}

\begin{proof}
    By Little's Law \cite{little-et-al2008building}, the claim is equivalent to proving that
    \begin{align*}
        \E{N^{SEK \hy SMOD}} < \E{N^{SRPT \hy k}} \iff \E{N^{SRPT \hy k} - N^{SEK \hy SMOD}} > 0.
    \end{align*}
    We focus on this version of the claim.
    
    We analyze the behavior of the joint (SEK-SMOD, SRPT-$k$) system as a series of cycles.
    First, SEK-SMOD diverges from SRPT-$k$. Next, the two systems eventually \textit{merge}, that is, they have the same state once more. Finally, another cycle begins with another divergence.
    Note that the difference $N^{SRPT \hy k} - N^{SEK \hy SMOD}$ can only be nonzero when the system states are different.

    To analyze the behavior of the joint system, we use a renewal-reward argument.
    To do so, we need a renewal point that has finite expected time between renewals.
    Our renewal points are the times when both systems are empty, and a job arrives.
    We know that the SRPT-$k$ system has finite mean time between visits to the empty state.
    We need to show that there is a finite mean time between visits to the jointly empty state.

    We prove this by reference to our worst case lemmas.
    \cref{lem:sek-bridge}, building on \cref{lem:pln}, shows that the work under SEK-SMOD never exceeds work under SRPT-$k$: $W^{SEK \hy SMOD} \le W^{SRPT\hy k}$. Therefore, an empty SRPT-$k$ system ($W^{SRPT\hy k} = 0$) implies $W^{SEK \hy SMOD} = 0$, and thus the SEK-SMOD system is empty as well. Hence, the jointly empty state is a valid renewal point with finite expected time per renewal.

    We apply the renewal reward formula, where our reward function is $N^{SRPT\hy k} - N^{SEK \hy SMOD}$:
    \begin{align*}
        \E{N^{SRPT \hy k} - N^{SEK \hy SMOD}} = \frac{\E{\int_{u \in cycle} N^{SRPT \hy k}(u) - N^{SEK \hy SMOD}(u) du}}{\E{\text{Length of cycle}}}.
    \end{align*}
    Our goal is to show that the numerator, the expected reward per both-empty cycle, is positive.
    We can quantify this total reward per both-empty cycle in terms of the expected total reward per divergence event in a given cycle, that is,
    \begin{align*}
        \E{\int_{u \in cycle} N^{SRPT \hy k}(u) - N^{SEK \hy SMOD}(u) du}
        = \E{\sum_{t \in cycle\ divergence\ points} \Delta_t}.
    \end{align*}
    Note that this sum is valid because $\tau$, the merging time at which $\Delta_t$ completes,
    is bounded by the time at which the jointly empty state next occurs.
    We show that this expected sum is positive.

    \cref{lem:any-diff} proves that $\E{\Delta_t \mid \bb\bb(t) = \bb\bb} > 0$,
    for any arbitrary divergence starting state $\bb\bb$. Observe that the number of divergences per cycle is irrelevant because each divergence contributes a positive expectation, and there is a positive expected number of divergences per cycle. Additionally, because we condition on an arbitrary starting state, correlations between different divergences are also irrelevant to the analysis.
    
    This result implies that the expected total reward per both-empty-cycle is positive, because the probability of a divergence event is positive. Therefore, $E[T^{SEK \hy SMOD}] < E[T^{SRPT\hy k}]$ as desired.
\end{proof}

\section{Worst case proofs}
\label{sec:worst-case}

In this section, we prove key worst-case properties of the SMOD policy and the SEK-SMOD policy,
which are the foundational results that underpin our stochastic results in \cref{sec:stochastic}.
Specifically, in \cref{sec:smod-wc},
we characterize the worst-case behavior of the SMOD policy relative to an SRPT-$k$ system
experiencing the same arrival sequence.
Then, in \cref{sec:sek-smod-wc},
we specialize these results to characterize the combined SEK-SMOD policy,
again relative to the SRPT-$k$ policy.

\subsection{SMOD Worst case lemmas}
\label{sec:smod-wc}
In this section, we analyze the behavior of the joint system containing an SMOD queue and an SRPT-$k$ queue
under an arbitrary stream of arrivals.
Specifically, we prove a series of monotonicity and preservation lemmas,
showing that the SMOD policy monotonically improves (or preserves) relative to the SRPT-$k$ system, 
under arbitrary arrivals. We also prove properties of the IND under these worst-case arrivals, which are defined in the same section as the corresponding lemma. 

We provide a diagram representing how these lemmas fit in the proof of \cref{thm:main} in \cref{diagram:worst-case}. The diagram refers to new concepts that are key to our proof and are defined later in this section. Specifically, we refer to the positive part (\cref{def:positive-part}), zig-zag matching (\cref{def:zig-zag}), and PLN (\cref{def:pln}). When we mention the ``work inequality'' property, we mean that $W^A(t)\leq W^{SRPT\hy k}$, where policy $A$ can be SMOD or SEK-SMOD depending on the context.

\begin{figure}
    \centering
    \begin{tikzpicture}[
        node distance=0.3cm and 0.5cm,
        box/.style={rectangle, draw, align=center, text width=3cm},
        small-box/.style={rectangle, draw, align=center, text width=2.2cm},
        bullet-box/.style={rectangle, draw, align=center, text width=4cm},
        wide-box/.style={rectangle, draw, align=center, text width=4.7cm},
        edge/.style={draw, -{Latex}},
    ]
    \begin{scope}[font=\footnotesize]
    \node[bullet-box] (bad) {\textbf{L\ref{lem:bad}: Bad scenario.}\\[4pt]
        \begin{minipage}{\textwidth}
        $\bullet$ Expected improvement: $-O(\epsilon)$\\
        $\bullet$ Probability: $O(\epsilon)$
        \end{minipage}}; 

    \matrix[column sep=0.3cm, row sep=0.5cm] (bad-scenario) at (0,-1.5) {
    \node[box, anchor=north] (diff-per-job) {\textbf{L\ref{lem:diff-per-job}}: IND bounded by $k\epsilon$ per job until dominance.}; &
    \node[box, anchor=north] (max-diff) {\textbf{L\ref{lem:max-diff}}: Dominance by time when largest job at divergence completes.}; \\
    }; 

    \node[small-box, below=of max-diff] (pln) {\textbf{L\ref{lem:pln}:} PLN, work inequality and zig-zag preserved.} ;

    \node[wide-box, right=of pln] (sek-bridge) 
    {\textbf{L\ref{lem:sek-bridge}: SEK-SMOD bridge.} \\
    SEK-SMOD maintains work inequality, PLN, zig-zag and bounded positive part.} ;
    
    \node[bullet-box, anchor=north, below=of pln, yshift=-0.2cm] (pos-part) {\textbf{L\ref{lem:pos-part}:} Positive part~$r^+$ weakly decreasing. \\
    \textbf{C\ref{cor:dominance}:} Dominance preserved.};
    \node[box, anchor=north, left=of pos-part] (zig-zag) {\textbf{L\ref{lem:zig-zag}: Zig-zag matching.} \\
    SMOD has at most one extra job.};
        
    \path[edge] (diff-per-job) -- (bad);
    \path[edge] (max-diff) -- (bad);
    \path[edge] (pln) -- (max-diff);
    \path[edge] (pln) -- (sek-bridge);
    \path[edge] (zig-zag) -- (diff-per-job);
    \path[edge] (zig-zag) -- (pln);
    \path[edge] (pos-part) -- (diff-per-job);
    \path[edge] (pos-part) -- (pln);
    \path[edge] (pos-part) -- (sek-bridge);
    \path[edge] (sek-bridge) |- (bad);
    \end{scope}
    \end{tikzpicture}
\caption{Diagram of worst-case scenario arguments. The positive part (\cref{def:positive-part}), zig-zag matching (\cref{def:zig-zag}), and PLN (\cref{def:pln}) are defined later in this section; and ``work inequality'' means that $W^A(t)\leq W^{SRPT\hy k}$, where policy $A$ can be SMOD or SEK-SMOD depending on the context.} \label{diagram:worst-case}
\Description[Flowchart of the worst-case lemmas: Positive-part and zig-zag matching build to IND bounds and dominance bounds.]{Flowchart of the worst-case lemmas:
Positive-part and zig-zag matching underpin PLN and work inequality, which imply IND bounded relative to time until dominance, and bound time of dominance.
These imply the bad scenario result.}
\end{figure}
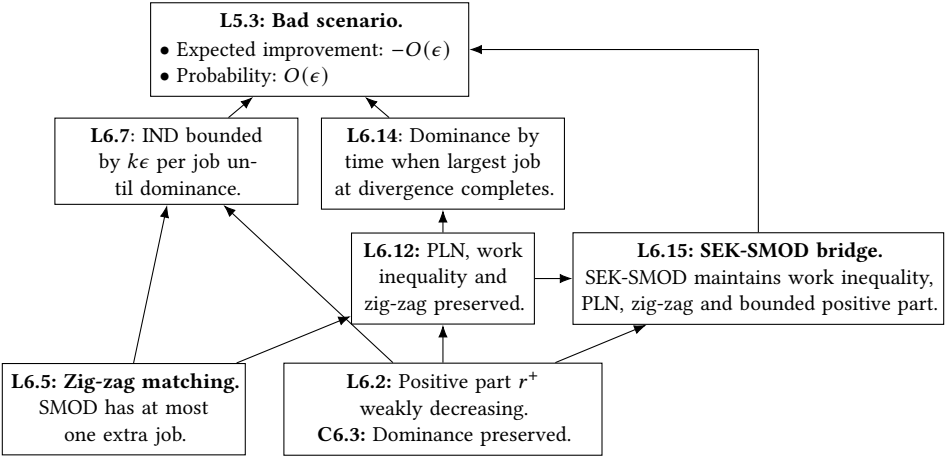

\subsubsection{Positive part of work difference: $r^+$}
\label{sec:positive-part}

 Intuitively, we show that the margin by which SMOD is in a worse state than SRPT-$k$ never increases. As a corollary, dominance of SMOD over SRPT-$k$ is preserved into the future.

\begin{definition}\label{def:positive-part}
    Define $r^+(t)$, the \textit{positive part} of the difference in work between the SMOD and SRPT-$k$ systems,
    to be
    \begin{align*}
        r^+(t) := \sum_{i} \big(b_i^{SMOD}(t) - b_i^{SRPT\hy k}(t)\big)^+.
    \end{align*}
\end{definition}

Now, we can prove our first monotonicity result.
\begin{lemma}[Positive part weakly decreasing]
\label{lem:pos-part}
    In the SMOD-SRPT joint system,
    under an arbitrary sequence of arrivals,
    for any pair of times $t_1 \le t_2$,
    the positive part at time $t_1$ upper bounds the positive part at time $t_2$:
    \begin{align*}
        r^+(t_1) \ge r^+(t_2).
    \end{align*}
\end{lemma}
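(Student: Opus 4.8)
The plan is to prove the stronger statement that $t\mapsto r^+(t)$ is non-increasing, by tracking its evolution along the joint trajectory. Fix an arbitrary arrival sequence. On $[t_1,t_2]$ the trajectory decomposes into finitely many events (arrivals and completions) together with finitely many open intervals on each of which (a) no event occurs, (b) the set of jobs served by SMOD is constant and the set served by SRPT-$k$ is constant, and (c) no two jobs within the same system cross in remaining-size order. On each such interval every busy server drains its job at rate $1/k$, so the sorted padded coordinate vectors $\bb^{SMOD},\bb^{SRPT\hy k}$ move piecewise-linearly, and since $r^+$ is a continuous function of these vectors it suffices to show: (i) $r^+$ is continuous across every arrival and every completion, and (ii) $\frac{d}{dt}r^+\le 0$ on each of the open intervals above.

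For the arrival case, rewrite $r^+$ using $(a-c)^+=\int_0^\infty(\ind{a>z}-\ind{c>z})^+\,dz$; summing over the commonly indexed sorted coordinates and exchanging sum and integral gives $r^+(t)=\int_0^\infty\big(Q^{SRPT\hy k}(z,t)-Q^{SMOD}(z,t)\big)^+dz$, where $Q^\pi(z,t):=\#\{i:b_i^\pi(t)\le z\}$. A size-$s$ arrival increments both $Q^{SMOD}(z,\cdot)$ and $Q^{SRPT\hy k}(z,\cdot)$ by one for all $z\ge s$ and leaves them unchanged for $z<s$, so the integrand is pointwise unchanged and $r^+$ is unchanged. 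For the completion case, the departing job's coordinate reaches $0$ and is then dropped while a padding zero appears; as the dropped value is already $0$, the state vectors move continuously through the event, so $r^+$ (being continuous in the state) is continuous across it.

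For (ii), index the sorted padded vectors consistently over the interval; then $\dot b_i^\pi=-\frac{1}{k}\ind{i\text{ served under }\pi}$, and the coordinates $i$ with $b_i^{SMOD}=b_i^{SRPT\hy k}$ either contribute $0$ throughout or only at an isolated time, so
\begin{align*}
    \frac{d}{dt}r^+(t)=-\frac{1}{k}\Big(\#\{i:b_i^{SMOD}>b_i^{SRPT\hy k},\ i\text{ served under SMOD}\}-\#\{i:b_i^{SMOD}>b_i^{SRPT\hy k},\ i\text{ served under SRPT-}k\}\Big).
\end{align*}
So the claim reduces to: among the strictly-positive-difference coordinates, SMOD serves at least as many as SRPT-$k$. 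This is exactly what the priority structure in \cref{def:smod} buys. Write $d:=(N^{SMOD}-N^{SRPT\hy k})^+$. When $N^{SMOD}\le N^{SRPT\hy k}$, both policies run SRPT on their own states, and one checks directly that every positive-difference coordinate served by SRPT-$k$ (necessarily a non-padding coordinate) lies in the index block SMOD serves. When $N^{SMOD}>N^{SRPT\hy k}$: the $d$ padding coordinates of $\bb^{SRPT\hy k}$ are all strictly-positive-difference (there $b^{SMOD}_i>0=b^{SRPT\hy k}_i$), are \emph{eligible} for SMOD, and are served by SMOD (they are the smallest zero-or-positive-difference eligible coordinates) but not by SRPT-$k$, giving SMOD a ``free'' surplus of $\min(d,k)$; meanwhile the non-padding eligible coordinates of SMOD are precisely the coordinates SRPT-$k$ serves, on which SMOD serves all zero-or-positive-difference ones when there are fewer than $k$ of them (the filler coordinates being negative-difference, hence never counted), and otherwise skips at most $d$ of SRPT-$k$'s positive-difference coordinates. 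In every case the surplus dominates the shortfall, which establishes (ii).

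The main obstacle is this last case analysis: keeping the padding-index bookkeeping straight, verifying the identification ``non-padding eligible coordinates of SMOD $=$ coordinates served by SRPT-$k$'' including the edge cases where $N^{SRPT\hy k}<k$ (so SRPT-$k$ idles servers) or where SMOD has too few jobs to fill all $k$ servers, and correctly handling the ``fewer than $k$ zero-or-positive-difference eligible coordinates'' branch of \cref{def:smod}. The reduction in (ii), the arrival invariance, and the continuity at completions are all routine once the integral representation of $r^+$ above is in hand.
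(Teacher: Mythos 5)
Your proof is correct and follows the same overall skeleton as the paper's (\cref{ap:pos-part}): decompose the trajectory into arrivals, completions, and intervals of constant service, show $r^+$ is invariant across the jumps, and show its drift is nonpositive by arguing that SMOD's priority rule serves at least as many of the relevant indices as SRPT-$k$ does. Two of your technical devices genuinely differ from the paper's, both to your advantage. First, for arrival invariance the paper does an explicit index-shifting case analysis on the insertion positions $i$ and $j$ of the new job in the two sorted vectors; your level-set identity $r^+(t)=\int_0^\infty\big(Q^{SRPT\hy k}(z,t)-Q^{SMOD}(z,t)\big)^+dz$ (valid precisely because the padded vectors are sorted, so the indicator sequences are monotone in $i$) makes the invariance a one-line observation and sidesteps that bookkeeping entirely. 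Second, for the drift you count only \emph{strictly} positive-difference indices and dismiss the zero-difference ones via the observation that each $b_i^{SMOD}-b_i^{SRPT\hy k}$ is linear on the interval, hence zero either identically or at an isolated time; the paper instead keeps a ledger over \emph{zero-or-positive}-difference indices and must separately argue that SMOD never wastes service on zero-difference indices not served by SRPT-$k$. Your reduction is sound because a zero-difference index that SRPT-$k$ serves alone immediately becomes strictly positive and is then captured by your count at all later times in the interval, where the same combinatorial claim must (and does) hold. Your surplus/shortfall accounting for that claim --- the $\min(d,k)$ padding indices give SMOD a free surplus that dominates the at-most-$d$ positive-difference indices it may skip --- is the right counting and matches the content of the paper's case analysis, though you leave the flagged edge cases ($N^{SRPT\hy k}<k$, or SMOD unable to fill all servers) as exercises; they do check out, and the paper's own write-up of this step is no more detailed (indeed its first ledger bullet, asserting that $r^+$ decreases for indices served by \emph{both} systems, is a misstatement that your version avoids).
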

\begin{proof}
    There are three ways the SMOD-SRPT joint system can change:
    Jump change on arrival, jump change on completion, and continuous change due to work being completed.
    We show that each of these changes maintains the monotonicity property by carefully analyzing their effect on the remaining size and index of each of the jobs present right before and right after the event happens. The proof is intuitive once we split into the three possible events described above, so we provide the details in \cref{ap:pos-part}.
\end{proof}

As a corollary, SMOD preserves the dominance relationship defined in \cref{def:dominate}:
\begin{corollary}[Dominance preserved]
    \label{cor:dominance}
    If at time $t_1$, SMOD dominates SRPT-$k$, then for all $t_2 \ge t_1$,
    SMOD dominates SRPT-$k$.
\end{corollary}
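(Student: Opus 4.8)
The plan is to obtain the corollary immediately from \cref{lem:pos-part}, using the simple observation that dominance is precisely the condition $r^+ = 0$.

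First I would record the elementary equivalence: for any time $t$, SMOD dominates SRPT-$k$ at time $t$ if and only if $r^+(t) = 0$. This holds because $r^+(t) = \sum_i \big(b_i^{SMOD}(t) - b_i^{SRPT\hy k}(t)\big)^+$ is a finite sum of nonnegative terms, so it vanishes exactly when every summand vanishes, i.e. when $b_i^{SMOD}(t) \le b_i^{SRPT\hy k}(t)$ for all $i$ — which is exactly \cref{def:dominate}. Note that the padding conventions for the coupled state $\bbb$ are already built into this expression, so no separate bookkeeping about the relative number of jobs in the two systems is needed.

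Next, assume SMOD dominates SRPT-$k$ at time $t_1$, so that $r^+(t_1) = 0$ by the equivalence above. Fix an arbitrary $t_2 \ge t_1$. \cref{lem:pos-part} gives $r^+(t_2) \le r^+(t_1) = 0$, and since $r^+(t_2) \ge 0$ by definition of the positive part, we conclude $r^+(t_2) = 0$. Applying the equivalence once more yields that SMOD dominates SRPT-$k$ at time $t_2$; as $t_2 \ge t_1$ was arbitrary, the corollary follows.

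I do not expect a substantive obstacle: essentially all the work is done by the monotonicity statement in \cref{lem:pos-part}. The only point that deserves to be stated with care is the equivalence in the first paragraph, since dominance is phrased componentwise whereas $r^+$ aggregates the positive parts into a single scalar; once that is in place, the remaining argument is two lines.
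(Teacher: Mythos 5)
Your proposal is correct and follows exactly the same route as the paper: both reduce dominance to the condition $r^+ = 0$ and then apply the monotonicity of $r^+$ from \cref{lem:pos-part}. Your version merely spells out the componentwise-versus-aggregate equivalence more explicitly, which the paper leaves implicit.
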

\begin{proof}
    The proof immediately follows from \cref{lem:pos-part}.
    Note that SMOD dominating SRPT-$k$ is equivalent to $r^+ = 0$.
    Therefore at time $t_1$, $r^+(t_1) = 0$ and by \cref{lem:pos-part}, $r^+(t_2) = 0$.
\end{proof}

\subsubsection{Zig-zag Matching}
\label{sec:zig-zag}

Intuitively, if SMOD is only one job worse than SRPT-$k$, it will always remain at most one job worse. 

\begin{definition} 
    \label{def:zig-zag}
Define the SMOD and SRPT-$k$ systems to be \emph{zig-zag matched} if:
\begin{align*}
    \forall i, b_i^{SMOD} \le b_{i+1}^{SRPT\hy k}.
\end{align*}
\end{definition}

In other words, the SMOD and SRPT-$k$ system are zig-zag matched if, after removing the largest job in the SMOD system, it dominates the SRPT-$k$ system. 
We now show that zig-zag matching is always preserved.

\begin{lemma}[Zig-zag]
    \label{lem:zig-zag}
    If at time $t_1$, the systems are zig-zag matched,
    then at all times $t_2 \ge t_1$, the systems will remain zig-zag matched.

    As a result, the SMOD system has at most one additional job compared to the SRPT-$k$ system: $N^{SMOD}(t_2) - N^{SRPT\hy k}(t_2) \le 1$.
\end{lemma}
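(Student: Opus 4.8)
The plan is to prove that zig-zag matching is preserved under all three types of state transitions in the joint SMOD-SRPT-$k$ system (arrivals, completions, and continuous work processing), mirroring the case analysis used in \cref{lem:pos-part}. The condition $\forall i,\ b_i^{SMOD} \le b_{i+1}^{SRPT\hy k}$ says that if we delete the largest SMOD job (equivalently, shift the SMOD vector up by one index), the resulting vector dominates the SRPT-$k$ vector index-by-index; so the natural strategy is to track the shifted SMOD vector $(b_2^{SMOD}, b_3^{SMOD}, \ldots)$ against $(b_1^{SRPT\hy k}, b_2^{SRPT\hy k}, \ldots)$ and show the domination inequality $b_i^{SMOD} \le b_{i+1}^{SRPT\hy k}$ is maintained entrywise across each event. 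The ``as a result'' clause is then an immediate consequence: if $N^{SMOD} \ge N^{SRPT\hy k} + 2$, there would be at least two padding indices on the SRPT-$k$ side, forcing $b_{i+1}^{SRPT\hy k} = 0 < b_i^{SMOD}$ for some real SMOD job, contradicting zig-zag matching.

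The key steps, in order: (i) \emph{Arrivals.} A job of size $a$ arrives to both systems. In each sorted vector it is inserted at the appropriate rank. I would argue that insertion into two zig-zag-matched vectors preserves the relation — this is a monotonicity/interleaving fact about sorted insertion, which needs care because the new job may land at different ranks in the two systems, but since the same value $a$ is inserted into both and the pre-arrival shifted-domination held, the post-arrival shifted-domination holds as well (the inserted copy in SMOD is matched either against the inserted copy in SRPT-$k$ or against a job that was already dominating the element SMOD's copy displaced). (ii) \emph{Continuous service.} Over an infinitesimal interval, each system serves up to $k$ jobs, decreasing their remaining sizes at rate $1/k$. Here I must use the structure of the SMOD scheduling rule (\cref{def:smod}) together with SRPT-$k$'s rule to show the shifted inequality $b_i^{SMOD} \le b_{i+1}^{SRPT\hy k}$ cannot be violated: whenever some SMOD job is at risk of ``catching up'' to its SRPT-$k$ partner from below, either that SMOD job is being served (keeping pace) or the relevant SRPT-$k$ job is also being served, or there is enough slack. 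This is where the zero-or-positive-difference prioritization and the ``$k+d$ eligible jobs'' definition matter. (iii) \emph{Completions.} When a job finishes in one or both systems, it is removed, the vector re-indexes (a job may drop out of one system but not the other, changing $d$), and I check the shifted inequality survives the re-indexing; typically a completion only helps, since removing a small job from either side shifts indices in a way consistent with the bound, but the case where SMOD completes a job while SRPT-$k$ does not (decreasing $d$) needs an explicit check against the padding-index bookkeeping.

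The main obstacle is step (ii), the continuous-service case, because it requires showing the SMOD serving rule — which deliberately prioritizes jobs that are \emph{worse} than their SRPT-$k$ counterparts (zero-or-positive difference indices) — never allows a SMOD job's remaining size to rise above $b_{i+1}^{SRPT\hy k}$. The subtlety is that SMOD's priorities are defined relative to index-aligned differences $b_i^{SMOD} - b_i^{SRPT\hy k}$, whereas zig-zag matching is an \emph{off-by-one} comparison, so I expect the argument to hinge on a careful pigeonhole/counting argument: among the jobs in a threatened prefix, count how many SMOD is serving versus how many SRPT-$k$ is serving, and use the eligibility-set size ($k+d$) plus the $N^{SMOD} \le N^{SRPT\hy k} + 1$ invariant (which we are simultaneously proving, so the induction must be set up to carry both facts together, or to derive the count bound from zig-zag matching at time $t_1$ alone). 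I would likely combine this lemma's proof with the bookkeeping already established for \cref{lem:pos-part} and defer the detailed event-by-event verification to an appendix, as the paper does for its sibling lemmas.
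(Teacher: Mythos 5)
Your proposal follows essentially the same route as the paper: the same derivation of the at-most-one-extra-job corollary (two padding indices would force $b_{i+1}^{SRPT\hy k}=0 < b_i^{SMOD}$), and the same event-by-event preservation argument over arrivals, completions, and continuous service, with the SMOD eligibility and priority rules carrying the continuous case. The one place where the paper's argument is sharper than your sketch of step (ii): when SMOD omits a negative-diff eligible index $i$ from service, the paper simply observes $b_i^{SMOD} < b_i^{SRPT\hy k} \le b_{i+1}^{SRPT\hy k}$, so the zig-zag inequality is strict at that index and cannot be violated before the difference closes to zero (at which point SMOD's service changes) --- this is the ``enough slack'' you allude to, resolved by a strictness observation rather than a pigeonhole count.
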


\begin{proof}
    First, we show that zig-zag matching implies that there is at most one more job in the SMOD system than in the SRPT-$k$ system.
    For contradiction, suppose that the SMOD system has at least two more jobs than the SRPT-$k$ system.
    In this case, indices 1 and 2 in the SRPT-$k$ vector are padding indices, with $b^{SRPT\hy k}_2 = 0$.
    On the other hand, there are no padding indices in the SMOD vector,
    so $b^{SMOD}_1 > 0$. This violates zig-zag matching.
    Thus, zig-zag matching implies that there is at most one more job in the SMOD system than in the SRPT system.

    Now, we turn to proving that zig-zag matching is preserved.
    As in the proof of \cref{lem:pos-part},
    there are three different ways that the system can change: jump change on arrivals, jump change in service, and continuous change in service. The rest of the proof carefully analyzes the job indices affected by each event, similarly to \cref{lem:pos-part}. We provide the details in \cref{ap:zig-zag-proof}.
\end{proof}

\subsubsection{Integrated difference per job}
\label{sec:diff-per-job}

We bound SMOD's IND relative to SRPT-$k$ by examining the time when SMOD achieves dominance, and the difference after dominance.

\begin{definition}\label{def:future-IND}
    Define the total \textit{future integrated number difference} $\Delta(t, \infty)_{SRPT\hy k, SMOD}$ between the SMOD and SRPT-$k$ system as follows:
    \begin{align*}
        \Delta_{SRPT\hy k, SMOD}(t, \infty) := \int_{u=t}^\infty N^{SRPT\hy k}(u) - N^{SMOD}(u) du.
    \end{align*}

    Additionally, define $A(t_1, t_2)$ to be the number of arrivals over the interval of time $[t_1, t_2]$.
\end{definition}

With the above worst case results, we can bound the total future integrated number difference in terms of the number of arrivals:

\begin{lemma}
    \label{lem:diff-per-job}
    If at time $t_1$ the SMOD system has a positive part $r^+(t_1) \le \epsilon$ and zig-zag matching,
    and at some later time $t_{dom}$ the SMOD system dominates the SRPT-$k$ system,
    then the IND over the entire future after $t_1$ is lower bounded 
    by $k \epsilon$ times the number of jobs that are present at time $t_1$ or arrive in the interval $[t_1, t_{dom}]$. Formally,
    \begin{align*}
        \Delta_{SRPT\hy k, SMOD}(t_1, \infty) \ge - k \epsilon (N^{SMOD}(t_1) + A(t_1, t_{dom})).
    \end{align*}
\end{lemma}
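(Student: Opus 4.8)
The plan is to split the integral $\Delta_{SRPT\hy k,SMOD}(t_1,\infty)$ at the dominance time $t_{dom}$ and handle the two pieces differently. On the tail $[t_{dom},\infty)$, \cref{cor:dominance} gives that SMOD dominates SRPT-$k$ at every $u\ge t_{dom}$, and dominance ($b_i^{SMOD}(u)\le b_i^{SRPT\hy k}(u)$ for all $i$) forces $N^{SMOD}(u)\le N^{SRPT\hy k}(u)$ --- otherwise the padded SRPT-$k$ vector would have a leading zero opposite a strictly positive SMOD entry. So the integrand $N^{SRPT\hy k}(u)-N^{SMOD}(u)$ is nonnegative there, this piece contributes at least $0$, and it remains to prove $\Delta_{SRPT\hy k,SMOD}(t_1,t_{dom})\ge -k\epsilon(N^{SMOD}(t_1)+A(t_1,t_{dom}))$.

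Write $\delta(u):=N^{SMOD}(u)-N^{SRPT\hy k}(u)$. On $[t_1,t_{dom}]$, zig-zag matching is preserved by \cref{lem:zig-zag}, so $\delta(u)\le 1$ and hence $N^{SRPT\hy k}(u)-N^{SMOD}(u)=-\delta(u)\ge -1$, with equality exactly when $\delta(u)=1$. Thus the negative part of $\Delta_{SRPT\hy k,SMOD}(t_1,t_{dom})$ equals the Lebesgue measure of $E:=\{u\in[t_1,t_{dom}]:\delta(u)=1\}$, which I would bound by a charging argument resting on two facts. First, when $\delta(u)=1$ the padded SRPT-$k$ vector has $b_1^{SRPT\hy k}(u)=0$, so $r^+(u)\ge b_1^{SMOD}(u)=r_1^{SMOD}(u)$; since $r^+$ is weakly decreasing (\cref{lem:pos-part}) and $r^+(t_1)\le\epsilon$, the smallest SMOD job has remaining size at most $\epsilon$ at every $u\in E$. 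Second, when $\delta(u)=1$ the smallest SMOD job is in service: if $N^{SMOD}(u)\le k$ all jobs are served, and if $N^{SMOD}(u)>k$ then the second bullet of \cref{def:smod} applies and index $1$ is a zero-or-positive-difference eligible index which is the smallest among them, hence served in either sub-case.

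Now I would partition $E$ into maximal pieces on which the identity of the smallest SMOD job is constant. Inside a maximal $\delta=1$ interval this identity cannot change through a completion (a completion of the smallest job would drop $\delta$ to $0$ and leave $E$), so it changes only at arrivals of strictly smaller jobs; hence the pieces are delimited by such arrivals and by the endpoints of the maximal $\delta=1$ intervals, and there are finitely many. On each piece the tracked job is served continuously at rate $1/k$, so the piece length equals $k$ times the service that job receives on it, and by the first fact that job has remaining size at most $\epsilon$ at the start of every piece it is charged for. Because remaining sizes are nonincreasing, the total service any one job receives across all pieces it is charged for is at most $\epsilon$, so its total charge is at most $k\epsilon$; and every charged job is present in SMOD at some time of $[t_1,t_{dom}]$, so the charged jobs number at most $N^{SMOD}(t_1)+A(t_1,t_{dom})$. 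Summing, $|E|\le k\epsilon(N^{SMOD}(t_1)+A(t_1,t_{dom}))$, and adding the nonnegative tail contribution proves the lemma.

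The step I expect to be most delicate is the charging bookkeeping: confirming that the smallest SMOD job is genuinely in service at every instant of $E$ (which requires walking through both sub-cases of \cref{def:smod} plus the $N^{SMOD}\le k$ corner case), and ensuring that a job which repeatedly loses and regains its status as the smallest SMOD job still accrues at most $k\epsilon$ of charge --- this last point works only because such a job's remaining size stays at most $\epsilon$ throughout, which is precisely where the monotonicity of $r^+$ from \cref{lem:pos-part} is essential.
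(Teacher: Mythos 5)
Your proposal is correct and follows essentially the same route as the paper's proof: zig-zag matching bounds the job-count difference by one, monotonicity of $r^+$ forces the smallest SMOD job to have remaining size at most $\epsilon$ whenever that difference equals one, that job is necessarily in service under \cref{def:smod}, and each job can therefore be charged at most $k\epsilon$ of the offending time. Your treatment is somewhat more explicit than the paper's about the tail $[t_{dom},\infty)$ being nonnegative and about the bookkeeping when a job loses and regains smallest-job status, but the underlying argument is the same.
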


\begin{proof}
    Note that the integrand of $\Delta_{SMOD, SRPT\hy k}(t, \infty)$ in \cref{def:future-IND}, namely $N^{SRPT\hy k}(u) - N^{SMOD}(u)$,
    simply measures the number of excess jobs in one system as compared to the other.
    As we want to lower bound $\Delta_{SMOD, SRPT\hy k}(t_1, \infty)$,
    we focus on upper bounding $N^{SMOD}(u) - N^{SRPT\hy k}(u)$.
    By \cref{lem:zig-zag}, this difference is at most 1 at all times.
    We want to bound how frequently this difference equals 1.

    Let us focus on the smallest job in the SMOD system at any time $t \ge t_1$ such that $N^{SMOD}(t) - N^{SRPT\hy k}(t) = 1$.
    At such a time, there must be a padding index at the SRPT-$k$ system, so $b^{SRPT\hy k}_1 = 0$.
    Thus, a positive difference must exist in the first index: $r^+(t) \ge b^{SMOD}_1$.
    But by \cref{lem:pos-part}, $r^+(t) \le r^+(t_1) \le \epsilon$.

    Therefore, the smallest job in the SMOD system must have a remaining size of at most $\epsilon$
    whenever $N^{SMOD}(t) - N^{SRPT\hy k}(t) = 1$.
    Note that this job also receives service under $SMOD$ at rate $1/k$.
    The first index is definitely in the eligible set,
    and a positive difference $b^{SMOD}_1 - b^{SRPT\hy k}_1$ must exist.
    Among all positive-difference indices in the eligible set,
    the first index has the least remaining size.
    Thus, by \cref{def:smod}, the first index is guaranteed to be served.

    Any given job can spend at most $k \epsilon$ time in service with remaining size at most $\epsilon$.

    Thus, the total amount of time $t \ge t_1$ for which $N^{SMOD}(t) - N^{SRPT\hy k}(t) = 1$
    is bounded by $k \epsilon$ times the number of distinct jobs that are served while $N^{SMOD}(t) - N^{SRPT\hy k}(t) = 1$.
    Note that after time $t_{dom}$, $N^{SMOD}(t) - N^{SRPT\hy k}(t) = 1$ never again occurs, by \cref{cor:dominance}.
    Then, the only jobs that can be served while $N^{SMOD}(t) - N^{SRPT\hy k}(t) = 1$
    are jobs which are either present at time $t_1$ or arrive before $t_{dom}$. Thus $N^{SMOD}(t_1) + A(t_1, t_{dom})$ jobs are eligible to be served while $N^{SMOD}(t) - N^{SRPT\hy k}(t) = 1$,
    each for at most $k \epsilon$ time.
\end{proof}

We have another minor lemma,
based on the total negative difference $r^-$, defined similarly to the total positive difference:
\begin{definition}
    \label{def:negative-part}
    Define $r^-(t)$, the \textit{negative part} of the difference in work between the SMOD and SRPT-$k$ systems,
    to be
    \begin{align*}
        r^-(t) := \sum_{i} |\min(b_i^{SMOD}(t) - b_i^{SRPT\hy k}(t), 0)|.
    \end{align*}
\end{definition}

This minor lemma covers the integrated difference after SMOD dominates SRPT-$k$.
\begin{lemma}
    \label{lem:negative-bounds-diff}
    If at time $t_1$, SMOD dominates SRPT-$k$, then for any times $t_2 > t_1$,
    the IND is lower bounded:
    \begin{align*}
        \int_{u=t_1}^{t_2} N^{SRPT\hy k}(u) - N^{SMOD}(u) du \ge k(r^-(t_1) - r^-(t_2)).
    \end{align*}
\end{lemma}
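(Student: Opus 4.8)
The plan is to reduce the integrated number difference to the evolution of the total work difference, which coincides with $r^-$ once dominance holds. Since SMOD dominates SRPT-$k$ at time $t_1$, \cref{cor:dominance} gives that SMOD dominates SRPT-$k$ at every time $t \ge t_1$; in particular $r^+(t) = 0$ on $[t_1,t_2]$, and moreover $N^{SMOD}(t) \le N^{SRPT\hy k}(t)$ throughout (if SMOD had strictly more jobs, the SRPT-$k$ vector would have a leading padding index of value $0$ opposite a positive SMOD entry, contradicting dominance). Because $r^+ \equiv 0$ on this interval, $r^-(t) = W^{SRPT\hy k}(t) - W^{SMOD}(t)$ for all $t \in [t_1,t_2]$.

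Next I would track how $r^-(t) = W^{SRPT\hy k}(t) - W^{SMOD}(t)$ evolves. Arrivals are coupled, so each arrival adds the same amount of work to both systems and leaves $r^-$ unchanged, and completions do not alter total work since a completing job has zero remaining size; hence $r^-$ is continuous and piecewise linear between events, so absolutely continuous on $[t_1,t_2]$. Between events, the total work of policy $\pi$ decreases at rate $\frac{1}{k}$ times its number of busy servers, which is $\min(N^\pi(t),k)$ because both SRPT-$k$ and, in the dominance regime, SMOD are work-conserving (when $N^{SMOD} \le N^{SRPT\hy k}$, SMOD matches SRPT-$k$ by \cref{def:smod}). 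Thus, almost everywhere on $[t_1,t_2]$,
\[
\frac{d}{dt}\,r^-(t) \;=\; \frac{1}{k}\Big(\min\big(N^{SMOD}(t),k\big) - \min\big(N^{SRPT\hy k}(t),k\big)\Big).
\]
Since $x \mapsto \min(x,k)$ is nondecreasing and $1$-Lipschitz and $N^{SMOD}(t) \le N^{SRPT\hy k}(t)$, the bracketed quantity is at least $N^{SMOD}(t) - N^{SRPT\hy k}(t)$, so $N^{SRPT\hy k}(t) - N^{SMOD}(t) \ge -k\,\frac{d}{dt}r^-(t)$ a.e. Integrating over $[t_1,t_2]$ and applying the fundamental theorem of calculus to the absolutely continuous $r^-$ gives
\[
\int_{u=t_1}^{t_2}\big(N^{SRPT\hy k}(u) - N^{SMOD}(u)\big)\,du \;\ge\; -k\big(r^-(t_2) - r^-(t_1)\big) \;=\; k\big(r^-(t_1) - r^-(t_2)\big),
\]
which is the claim.

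The step needing the most care is the event-time bookkeeping underlying the work-rate computation: verifying that $r^-$ has no jumps (arrivals add identical work to the two coupled systems; completions are continuous in work) and that in the dominance regime SMOD truly runs $\min(N^{SMOD},k)$ servers, so that $\frac{d}{dt}W^{SMOD}$ has the stated form; the remaining $\min(\cdot,k)$ inequality is elementary. If one prefers to sidestep absolute continuity, the identical conclusion follows by checking directly that $t \mapsto k\,r^-(t) + \int_{u=t_1}^{t}\big(N^{SRPT\hy k}(u) - N^{SMOD}(u)\big)\,du$ is nondecreasing, which reduces to the same between-event derivative bound together with the continuity of $r^-$ across events.
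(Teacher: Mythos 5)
Your proposal is correct and follows essentially the same route as the paper: both arguments observe that $r^-$ is unaffected by arrivals and completions, establish the pointwise differential inequality $\frac{d}{dt}r^-(t) \ge \frac{1}{k}\bigl(N^{SMOD}(t) - N^{SRPT\hy k}(t)\bigr)$ during continuous service, and integrate. Your version simply fills in more of the supporting detail (dominance implies $N^{SMOD} \le N^{SRPT\hy k}$ and $r^- = W^{SRPT\hy k} - W^{SMOD}$, work-conservation gives the exact rate $\frac{1}{k}\bigl(\min(N^{SMOD},k) - \min(N^{SRPT\hy k},k)\bigr)$, and the $1$-Lipschitz property of $\min(\cdot,k)$ yields the bound), which the paper states more informally.
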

\begin{proof}
    Note that arrivals and completions do not change $r^-$.
    Note also that if the same number of jobs are served in both the SMOD and SRPT-$k$ systems,
    $r^-$ also does not change.
    Thus whenever $r^-$ decreases, less jobs are being served in the SMOD system than in the SRPT-$k$ system.
    As a result, less jobs must be present in the SMOD system than in the SRPT-$k$ system.
    In particular, the rate of decrease of $r^-$ is lower bounded by the difference in number of jobs in the two systems:
    \begin{align*}
        \frac{d}{dt}r^-(t) \ge \frac{N^{SMOD}(t) - N^{SRPT\hy k}(t)}{k}.
    \end{align*}
    Rearranging and integrating completes the proof.
\end{proof}

\subsubsection{Work inequality \& Positive-less-than-negative}
\label{sec:work-ineq}

We now show a form of work monotonicity:
Under conditions that we specify below, the SMOD system always has at most as much work present as the SRPT-$k$ system, that is, $W^{SMOD}\leq W^{SRPT\hy k}$.  
The preservation of $W^{SMOD} \le W^{SRPT\hy k}$ is key to our result, and the core of the success of the SMOD policy.
To prove this work monotonicity property, we first define another preserved property.

\begin{definition}
    An index $i$ is called a \textit{negative-diff index} at time $t$ if $b^{SMOD}_i(t) < b^{SRPT\hy k}_i(t)$. Similarly, $i$ is a \textit{positive-diff index} at time $t$ if $b^{SMOD}_i(t) > b^{SRPT\hy k}_i(t)$.
\end{definition}

Now we can define the PLN property:
\begin{definition}
    \label{def:pln}
    The SMOD and SRPT-$k$ joint system has the \textit{positive-less-than-negative (PLN)} property
    if every positive-diff index is smaller than every negative-diff index.
\end{definition}

Note as a special case that if there are no negative-diff indices, PLN holds,
and if there are no positive-diff indices, PLN holds. 
Intuitively, a positive-diff index $i$ indicates that the $i$-th job has a larger remaining size in the SMOD system, and negative-diff implies that the corresponding job has less remaining size in the SMOD system. Hence, the PLN property says that SMOD system has more work concentrated in the smallest indices compared to the SRPT-$k$ system. Note in particular that if SMOD dominates SRPT-$k$, there are no positive-diff indices, and PLN holds.

Note that if the PLN property holds, then either $N^{SMOD}(t) \ge N^{SRPT\hy k}$, or SMOD dominates SRPT-$k$.
To see why, note that if a positive-diff index exists, then index 1 cannot be a negative-diff index, and specifically cannot be a padding index in the SMOD system.
If there is no padding index, then $N^{SMOD} \ge N^{SRPT\hy k}$.
If no positive index exists, this is the definition of dominance.

We now show that if all of PLN and $W^{SMOD} \le W^{SRPT\hy k}$ and zig-zag matching hold at some point in time, this joint property is preserved.

\begin{lemma}
    \label{lem:pln}
    If at time $t_1$, the systems have the PLN property and $W^{SMOD}(t_1) \le W^{SRPT\hy k}(t_1)$ and zig-zag matching,
    then at all times $t_2 \ge t_1$, the systems continue to have the same relationship:
    the PLN property, $W^{SMOD}(t_2) \le W^{SRPT\hy k}(t_2)$ and zig-zag matching all hold.
\end{lemma}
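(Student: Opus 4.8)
The plan is to prove the three properties---PLN, the work inequality $W^{SMOD} \le W^{SRPT\hy k}$, and zig-zag matching---together as a single joint invariant, since preserving any one of them in general requires knowing the others. Zig-zag matching on its own is already shown to be preserved by \cref{lem:zig-zag}, so that component is free, and in particular we may assume $N^{SMOD}(t) - N^{SRPT\hy k}(t) \le 1$ throughout. As in the proofs of \cref{lem:pos-part} and \cref{lem:zig-zag}, it then suffices to check that the remaining two properties survive each of the three kinds of transition of the coupled system: an arrival (the same job joining both systems), a completion (a job of remaining size $0$ leaving one system), and an interval of continuous service. Once PLN is known to be maintained, the work inequality will follow easily, so the heart of the argument is the preservation of PLN.

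The central transition is continuous service, and it is the only place where the SMOD rule (\cref{def:smod}) genuinely enters. We begin from the dichotomy recorded just before the lemma: under PLN, either SMOD dominates SRPT-$k$, or $N^{SMOD} \ge N^{SRPT\hy k}$. If SMOD dominates at the start of the interval, then $r^+ = 0$, \cref{cor:dominance} keeps it $0$ forever, and both remaining properties hold trivially (no positive-diff index, and $W^{SMOD} \le W^{SRPT\hy k}$ termwise). Otherwise $N^{SMOD} \ge N^{SRPT\hy k}$, so SMOD serves at least as many jobs as SRPT-$k$ during the interval (either both serve all their jobs, or $N^{SMOD} \ge k$ and SMOD fills all $k$ servers), whence $\frac{d}{dt}\big(W^{SMOD} - W^{SRPT\hy k}\big) \le 0$ and the work inequality is preserved given that it held at the start.

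The real work is showing that PLN is preserved during such an interval, and this is the step I expect to be the main obstacle. The relevant structure is that SMOD gives strict priority to its zero-or-positive-diff eligible jobs, serving the smallest such first, and only serves negative-diff jobs---smallest SMOD-remaining-size first---once all zero-or-positive-diff eligible jobs are in service, while SRPT-$k$ serves its own smallest jobs (the smallest non-padding indices). From this one argues that no positive-diff index ever rises above a negative-diff index: a padding positive-diff job simply completes when its SMOD size hits $0$ (a separate transition); a non-padding positive-diff index either has its difference frozen (both systems serve it) or widened (only SRPT-$k$ serves it) but stays positive; and a negative-diff index can shed its negative difference only by being served under SRPT-$k$ but not under SMOD, in which case---the delicate check---it must re-enter as the new largest positive-diff index, still below every surviving negative-diff index. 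Two complications need care: SMOD may serve a job that is not currently its smallest, so the SMOD sorted order can be relabeled mid-interval, and several negative-diff indices can drift toward zero difference at once, so one must control the order in which they would cross---in particular using that no such crossing can be completed before an intervening completion changes the configuration. The clean way through is to phrase the invariant in terms of individual jobs and their current positions rather than fixed index slots, leaning on zig-zag matching (and, where needed, the work inequality) to pin down the relative structure of the two vectors; the naive fixed-index derivative argument does not close on its own.

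The arrival and completion transitions are essentially bookkeeping, in the spirit of \cref{lem:pos-part,lem:zig-zag}. On an arrival of size $a$ to both systems, $a$ is added to $W^{SMOD}$ and to $W^{SRPT\hy k}$ alike, so the work inequality is immediate, and PLN preservation reduces to tracking how the common new element is inserted into each sorted vector and how the induced shifts of the existing positive- and negative-diff indices interact, which closes because zig-zag matching aligns the two vectors up to a single extra job. On a completion the departing job has remaining size $0$, so neither total work changes and the work inequality is trivially preserved; PLN is again an index-shift check, now when the zero-size slot is removed and a padding index is created or relabeled. I would relegate these details to an appendix, as the earlier lemmas do.
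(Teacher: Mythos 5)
Your skeleton is exactly the paper's: zig-zag comes for free from \cref{lem:zig-zag}; PLN plus the work inequality give the dichotomy ``SMOD dominates, or $N^{SMOD}\ge N^{SRPT\hy k}$''; the dominance branch is closed by \cref{cor:dominance}; in the other branch SMOD serves at least as many jobs as SRPT-$k$, so the work inequality is preserved \emph{provided} PLN is preserved; and PLN preservation is reduced to an event-by-event check. Up to this point the proposal and the paper coincide.

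The gap is that the event-by-event check for PLN --- which is the entire substance of the lemma --- is described but never carried out, and the one place you editorialize about method, you point away from the argument that actually works. You assert that ``the naive fixed-index derivative argument does not close on its own'' and propose a job-based reformulation instead, but the paper closes the fixed-index argument directly. The two structural facts that make it terminate, and which your sketch does not pin down, are: (i) in the non-dominance branch zig-zag forces $d=N^{SMOD}-N^{SRPT\hy k}\in\{0,1\}$, and when $d=1$ SRPT-$k$ serves exactly indices $2,\dots,k+1$ while SMOD's eligible set is $1,\dots,k+1$; consequently the \emph{only} index whose diff can increase during continuous service is index $k+1$, and only in the sub-case where no negative-diff index lies in $\{1,\dots,k+1\}$ --- so any newly created positive-diff index sits below every surviving negative-diff index and PLN is untouched. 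When a negative-diff index does lie in the eligible set, SMOD omits the largest such index, whose diff rises toward zero but creates no new nonzero diff. (ii) The instant a diff reaches zero is treated as a discrete event at which SMOD's service set changes, so no diff is ever tracked ``crossing through'' zero; a nonzero diff becoming zero cannot violate PLN. Your ``delicate check'' that a served-only-by-SRPT negative-diff index ``re-enters as the new largest positive-diff index'' conflates these two mechanisms: the index SMOD omits never becomes positive (the policy switches at zero diff), and the index that does become positive ($k+1$, in the no-negative-diff sub-case) was never negative. The arrival and completion cases are, as you say, bookkeeping in the style of \cref{lem:pos-part}, but they too need the explicit $i{=}j$, $i{<}j$, $i{>}j$ insertion analysis to confirm that only diffs consistent with PLN are created. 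As written, the proposal is a correct plan with the proof of its central step missing.
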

\begin{proof}
    Note that zig-zag matching is preserved by \cref{lem:zig-zag}, so we focus on establishing the other two properties, using zig-zag matching for the proof.

    We know $W^{SMOD}(t_1) \le W^{SRPT\hy k}(t_1)$, which is equivalent to
    $\sum_i b_i^{SMOD}(t_1) \leq \sum_i b_i^{SRPT\hy k}(t_1).$
    With some algebraic manipulation, this inequality is equivalent to
    \begin{align*}
        \sum_i (b_i^{SMOD}(t_1)-b_i^{SRPT\hy k}(t_1))^+ \leq \sum_j (b_j^{SRPT\hy k}(t_1)-b_j^{SRPT\hy k}(t_1))^+.
    \end{align*}
    The left-hand side summation terms are nonzero for the positive-diff indices, and the right-hand side, the negative-diff indices. 

    If there is no positive-diff index at time $t_1$, SMOD dominates SRPT-$k$ at time $t_1$, and by \cref{cor:dominance}, dominance is preserved at every time $t_2\geq t_1$. 
    Note that dominance implies both the PLN property and $W^{SMOD} \le W^{SRPT\hy k}$,
    so if there is no positive-diff index at time $t_1$, the lemma must hold.
    
    On the other hand, if there is a positive-diff index, the PLN property tells us that the first index is not a negative-diff index. Therefore, we know that $b^{SMOD}_1(t_1) > 0$. If $b^{SRPT\hy k}_1(t_1) = 0$, then a padding index exists and $N^{SMOD}(t_1) > N^{SRPT\hy k}(t_1)$. If $b^{SRPT\hy k}_1(t_1) > 0$, then $N^{SMOD}(t_1) = N^{SRPT\hy k}(t_1)$, i.e. no padding. Hence, $N^{SMOD}(t_1) \ge N^{SRPT\hy k}(t_1)$.

    Additionally, the zig-zag property holds at time $t_1$, and \cref{lem:zig-zag} implies $N^{SMOD}(t_1) \le N^{SRPT\hy k}(t_1) + 1$. Hence, there are two possible values for $N^{SMOD}: N^{SRPT\hy k}$ and $N^{SRPT\hy k} + 1$.
    As a result, SMOD will always serve at least as many jobs as SRPT-$k$, so $W^{SMOD} \le W^{SRPT\hy k}$ is preserved if PLN is preserved. The proof that PLN is preserved follows by careful analysis of the effects on the indices in the SMOD and SRPT-$k$ systems of the following events: arrivals, continuous service, completions and a difference reaching zero. The approach is similar to the proof of \cref{lem:pos-part} and we present it in \cref{ap:proof-pln}.
\end{proof}

\subsubsection{Time of Dominance and Maximum Size Job}
\label{sec:time-of-dominance}

We bound the time at which SMOD achieves dominance relative to SRPT-$k$, relative to the completion time of the initially-largest job  in the SRPT-$k$ system.

\begin{definition}    
At a given time $t$, we consider the \textit{maximum index $i_*(t)$} for which $b_i^{SMOD}(t) \neq b_i^{SRPT\hy k}(t)$.
If there is no such index, we consider $i_*(t)$ to be undefined.
We tag the job at index $i_*(t)$ in the SRPT-$k$ system as the \textit{max-$t$ job} and denote it by $m_t$.
Note that this job's index in the SRPT-$k$ queue will change throughout its time in that system.
Let $m_t(t_1)$ denote the $m_t$'s job's index in the SRPT-$k$ queue at time $t_1 \ge t$.
\end{definition}

With this definition, we can state our final major lemma, which bounds the duration until the SMOD system dominates the SRPT-$k$ system.

\begin{lemma}
    \label{lem:max-diff}
    Suppose that at time $t_1$ the systems obey the PLN property, zig-zag matching,
    and that $W^{SMOD}(t_1) \le W^{SRPT\hy k}(t_1)$.
    Then, at the time the max-$t_1$ job $m_{t_1}$ completes in the SRPT-$k$ system,
    the SMOD system dominates the SRPT-$k$ system.
\end{lemma}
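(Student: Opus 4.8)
The plan is to track the tagged job $m_{t_1}$ as it is served inside the SRPT-$k$ system and to show that the positive part $r^+$ of the work difference is ``drained'' no later than the moment $m_{t_1}$ completes. Write $t^*$ for the completion time of $m_{t_1}$ in SRPT-$k$. If SMOD already dominates SRPT-$k$ at $t_1$, then $r^+(t_1)=0$ and \cref{cor:dominance} gives dominance at every later time, so assume a positive-diff index exists at $t_1$. By \cref{lem:pln}, the PLN property, the work inequality $W^{SMOD}\le W^{SRPT\hy k}$, and zig-zag matching are preserved for all $t\ge t_1$; moreover \cref{lem:zig-zag} gives $N^{SMOD}(t)-N^{SRPT\hy k}(t)\le 1$, and (as observed in \cref{sec:work-ineq}) PLN together with the existence of a positive-diff index forces $N^{SMOD}\ge N^{SRPT\hy k}$, so until dominance is reached $d(t):=N^{SMOD}(t)-N^{SRPT\hy k}(t)\in\{0,1\}$; once dominance is reached we are done by \cref{cor:dominance}. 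Also, at $t_1$, PLN and the work inequality force at least one negative-diff index to exist and every negative-diff index to lie above every positive-diff index, so the topmost differing index $i_*(t_1)$ is negative-diff; hence $m_{t_1}$ is a genuine (positive-size) job, and writing $q(t)$ for its remaining size in SRPT-$k$ at time $t$, we have $q(t_1)=b^{SRPT\hy k}_{i_*(t_1)}(t_1)>b^{SMOD}_{i_*(t_1)}(t_1)$, $q$ nonincreasing on $[t_1,t^*]$, and $q(t^*)=0$.

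The quantity I would carry through the argument is the \emph{excess support bound}: at each time every positive-diff index $i$ satisfies $b^{SMOD}_i(t)\le q(t)$ --- equivalently, for every threshold $x\ge q(t)$ the number of SMOD jobs of remaining size exceeding $x$ is at most the number of such SRPT-$k$ jobs. This bound holds at $t_1$: for a positive-diff index $j$, PLN places $j$ strictly below $i_*(t_1)$, so since $b^{SMOD}$ is sorted increasing, $b^{SMOD}_j(t_1)\le b^{SMOD}_{i_*(t_1)}(t_1)<q(t_1)$. And it is exactly what we need at $t^*$: since $q(t^*)=0$, it forces $b^{SMOD}_i(t^*)=0$ for every positive-diff index, which is impossible for a genuine job, so $r^+(t^*)=0$ and SMOD dominates SRPT-$k$ at $t^*$.

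The heart of the proof is showing the excess support bound is maintained. Completions and arrivals are the easy cases: an arrival of size $a$ leaves $q$ unchanged and shifts both ``number of jobs above $x$'' counts by the same amount $\ind{a>x}$, and a completion changes neither count for $x>0$ while $q(t)>0$ on $[t_1,t^*)$. The delicate case is continuous service, and the mechanism I would use is that SMOD prioritizes small positive-diff jobs: when $q$ is strictly decreasing the tagged job $m_{t_1}$ is among the $k$ smallest jobs in SRPT-$k$, hence sits at padded index $m_{t_1}(t)\le k+d(t)$ there; and any positive-diff index $i$ that is \emph{tight}, i.e.\ $b^{SMOD}_i(t)=q(t)$, has $b^{SRPT\hy k}_i(t)<q(t)=b^{SRPT\hy k}_{m_{t_1}(t)}(t)$, so sortedness forces $i\le m_{t_1}(t)-1\le k+d(t)-1\le k$. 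Therefore $i$ lies in SMOD's eligible set, it is a zero-or-positive-difference index (high priority), and fewer than $k$ eligible indices have smaller index value, so $i$ is among the $k$ highest-priority jobs SMOD serves: it is served at rate $1/k$, matching the rate at which $q$ decreases, so the bound $b^{SMOD}_i(t)\le q(t)$ is not violated at index $i$. (The degenerate case $N^{SMOD}(t)\le k$, in which \cref{def:smod} serves all jobs, is immediate.)

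I expect the true obstacle to lie precisely in making this continuous-service step fully rigorous. The subtlety is that the SMOD priority rule serves only $k$ jobs while up to $k+1$ indices may momentarily be high priority, so when $d(t)=1$ an index can transition from zero-diff to positive-diff without being served; one must show --- using the work inequality $W^{SMOD}\le W^{SRPT\hy k}$ and zig-zag matching, not merely the size comparison above --- that any such transition occurs at a value no larger than $q(t)$ and is reabsorbed, because by zig-zag each of the $\le k$ high-priority SMOD jobs has remaining size $\le q(t)$ and thus finishes before $m_{t_1}$ does, draining $N^{SMOD}$ back down. One must also track how $m_{t_1}$'s index in SRPT-$k$ drifts as smaller SRPT-$k$ jobs complete and as arrivals reshuffle both queues, and handle the boundary behavior as $t\to t^*$. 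This is the same flavor of case analysis (over arrivals, completions, continuous service, and a difference reaching zero) that carries \cref{lem:pos-part}, \cref{lem:zig-zag}, and \cref{lem:pln}, and I would organize the argument in the same way.
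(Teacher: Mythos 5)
Your overall architecture matches the paper's: tag the job $m_{t_1}$, rely on \cref{lem:pln} and \cref{cor:dominance} to preserve PLN, zig-zag, and the work inequality, run a case analysis over arrivals, completions, continuous service, and a difference reaching zero, and conclude at the completion time because any surviving positive-diff index would be forced to have zero remaining size. The difference is the invariant you carry, and that is where the gap lies. Your invariant --- every positive-diff index $i$ satisfies $b^{SMOD}_i(t)\le q(t)$ --- is a \emph{consequence} of the invariant the paper carries, namely $i_*(t)\le m_{t_1}(t)$ (the topmost differing index never exceeds the tagged job's current index), but it is strictly weaker and does not self-propagate. Your tight-index argument only covers indices that are \emph{already} positive-diff. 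The dangerous event, which you correctly flag in your last paragraph but do not resolve, is the \emph{creation} of a new positive-diff index: when $d=1$ and no negative-diff index lies in the eligible set $\{1,\dots,k+1\}$, SMOD serves padded indices $1,\dots,k$ and omits index $k+1$, while SRPT-$k$ serves $2,\dots,k+1$; a zero-diff index $k+1$ then becomes positive-diff at value $b^{SRPT\hy k}_{k+1}(t)$. Nothing in your induction hypothesis rules out $b^{SRPT\hy k}_{k+1}(t)>q(t)$ once the tagged job has drifted down to padded index $\le k$ (which happens as smaller jobs complete), even though under non-dominance a negative-diff index must still exist somewhere above $k+1$. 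Your proposed repair via zig-zag only yields $b^{SMOD}_i\le b^{SRPT\hy k}_{i+1}\le b^{SRPT\hy k}_{k+1}$, which is the wrong side of the comparison and does not bound anything by $q(t)$.

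Excluding that scenario requires knowing that \emph{all} differing indices --- in particular the topmost negative-diff index forced to exist by non-dominance and the work inequality --- sit at or below $m_{t_1}(t)$; then PLN puts every positive-diff index strictly below a negative-diff index that is $\le m_{t_1}(t)$, so new positive-diff indices are only ever created below the tagged job and your size bound follows by sortedness. That is exactly the paper's invariant, and its continuous-service step is what makes it work: when $i_*\le k+1$, SMOD omits precisely $i_*$ (the largest negative-diff eligible index) from service, so every eligible index above $i_*$ is zero-diff and served by both systems, and $i_*$ cannot increase. To complete your proof you would need to adjoin this (or an equivalent statement about the position of the negative-diff indices) to your induction hypothesis --- at which point you have essentially reproduced the paper's argument.
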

\begin{proof}
    Let $t_{comp}$ denote the time at which the max-$t_1$ job $m_{t_1}$ completes in the SRPT-$k$ system. We show that at all times $t_2 \in [t_1, t_{comp}]$,
    if the SMOD system does not dominate the SRPT-$k$ system,
    the largest index $i_*(t_2)$ for which the two systems differ is bounded by
    $m_{t_1}(t_2)$, the index of the max-$t_1$ job at time $t_2$.
    Thus, when the max-$t_1$ job completes at time $t_{comp}$,
    either the SMOD system dominates the SRPT-$k$ system,
    or there is no index where the systems differ. 
    It suffices to show that at all times $t_2 \in [t_1, t_{comp}]$, either SMOD dominates SRPT-$k$
    or $i_*(t_2) \le m_{t_1}(t_2)$.
    
    First, note that by \cref{lem:pln},
    at all times $t_2 \in [t_1, t_{comp}]$, either SMOD dominates SRPT-$k$, or
    the systems continue to satisfy PLN, $W^{SMOD}(t_2)\le W^{SRPT\hy k}(t_2)$ and zig-zag matching.
    
    If SMOD does not dominate SRPT-$k$ at time $t_2$,
    e.g. if a positive-diff index exists,
    then because $W^{SMOD}(t_2) \le W^{SRPT\hy k}(t_2)$,
    a negative-diff index must exist.
    By the PLN property
    we know that the index $i_*(t_2)$
    must be a negative-diff index. Then, we assume that at all times prior to $t_2$, we have $i_*(t_2) \le m_{t_1}(t_2)$,
    and we seek to demonstrate that the property is maintained. The rest of the proof follows immediately after considering the effects of the following events in the system: jump change on arrival, jump change on a completion, continuous service, and a diff reaching zero. We provide the details in \cref{ap:proof-max-diff}.
\end{proof}

\subsection{SEK-SMOD-specific worst-case results}
\label{sec:sek-smod-wc}

Using the SMOD monotonicity and preservation properties proven in \cref{sec:smod-wc}, we prove SEK-SMOD properties in the following lemma.

\begin{lemma}
    \label{lem:sek-bridge}
    At all times, the SEK-SMOD scheduling policy maintains the following properties relative
    to the coupled SRPT-$k$ system:
    \begin{description}
        \item[Work inequality:] The SEK-SMOD system always has at most as much work present as the SRPT-$k$ system: $W^{SEK \hy SMOD} \le W^{SRPT\hy k}$.
        \item[Zig-zag:] The SEK-SMOD system always exhibits zig-zag matching, defined in \cref{def:zig-zag}, with respect to the SRPT-$k$ system.
        \item[PLN:] The SEK-SMOD system always exhibits the positive-less-than-negative property, defined in \cref{def:pln}, with respect to the SRPT-$k$ system.
        \item[Positive part bound:] The SEK-SMOD system always has a bounded positive part $r^+ \le \epsilon$ with respect to the SRPT-$k$ system.
    \end{description}
\end{lemma}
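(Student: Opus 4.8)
The plan is to prove all four invariants at once by following the joint (SEK-SMOD, SRPT-$k$) system through its lifecycle and reducing the stretches on which SEK-SMOD follows SMOD to the already-established worst-case lemmas. Partition time according to the structure described in \cref{def:sek-smod}: before the first divergence point, and immediately after each merge (a time when the two systems have identical states), all four properties hold trivially -- $r^+ = 0$, the works are equal, and zig-zag matching and PLN hold vacuously. At a divergence point the state does not change, so the invariants still hold; the ensuing cycle consists of an \emph{SEK phase} -- during which no arrival has occurred since the divergence -- followed, if an arrival occurs before the systems merge, by an \emph{SMOD phase}, which ends at a merge. During the SMOD phase SEK-SMOD runs the SMOD rule verbatim, so the pair behaves as the ``SMOD--SRPT joint system'' of \cref{sec:smod-wc}; hence if the invariants hold at the start of the SMOD phase, \cref{lem:pln} preserves PLN, the work inequality, and zig-zag, and \cref{lem:pos-part} keeps $r^+$ weakly decreasing (hence $\le \epsilon$), throughout it. An induction over cycles then reduces the whole lemma to two facts: (i) the four invariants are maintained throughout the SEK phase, and (ii) they still hold at the arrival that switches SEK-SMOD to SMOD.

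For (i) I would carry out an explicit accounting on the $k+1$ jobs present at the divergence. Writing that sorted state as $b_1 \le \dots \le b_k$ with each $b_i \in [\epsilon', \epsilon]$ and $b_{k+1} \in [x, 2x]$, SEK-SMOD serves $\{b_1,\dots,b_{k-1},b_{k+1}\}$ while SRPT-$k$ serves $\{b_1,\dots,b_k\}$; once the smallest job finishes, the SEK-SMOD state has only $k$ jobs and is no longer a divergence starting state, so SEK reverts to ordinary SRPT-$k$ scheduling on the jobs present. Walking through the sub-phases delimited by completions, one checks that in each the two sorted vectors agree on a prefix of indices and differ only in a single positive-diff index (of value at most $b_1 \le \epsilon$) immediately followed by a single negative-diff index -- the descendant of $b_{k+1}$, on which SEK-SMOD has a head start over SRPT-$k$ because it serves $b_{k+1}$ already from the divergence point. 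This yields $r^+ \le b_1 \le \epsilon$ and PLN at once; the work inequality holds because outside of one transitional sub-phase both systems serve the same number of jobs, and in that sub-phase SEK-SMOD serves one extra small job, so its work decreases at least as fast and one verifies the extra job's size never exceeds the head start on the large job, keeping $W^{SEK \hy SMOD} \le W^{SRPT \hy k}$; and zig-zag follows since the lone positive-diff small job has size at most $\epsilon$, which is below the remaining size of the $b_{k+1}$-descendant. Once all $k$ small jobs have completed with no arrival, only the descendant of $b_{k+1}$ remains in each system and SEK-SMOD dominates SRPT-$k$, so the invariants hold for the rest of the SEK phase trivially. These comparisons use the parameterization of \cref{def:parameter} -- it suffices that $\epsilon$ be a small enough fraction of $x$ -- to pin down the sorted orders.

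For (ii), the triggering arrival simply inserts the same job into both sorted vectors; this is exactly the ``jump change on arrival'' case inside the proofs of \cref{lem:pln} and \cref{lem:pos-part}, whose arguments depend only on the coupled insertion of a job into both vectors, not on which policy is running. Hence the invariants, known from (i) to hold just before the arrival, still hold just after it, after which \cref{lem:pln} and \cref{lem:pos-part} take over as described, completing the cycle-level induction.

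The main obstacle is step (i): the index bookkeeping across the successive completions of the SEK phase, and in particular the transitional sub-phase in which SEK-SMOD carries one more job in system than SRPT-$k$, where one must establish the work inequality (not merely $r^+ \le \epsilon$ and PLN) by a quantitative comparison of the extra small job against the head start SEK-SMOD has built on the large job. Once the first post-divergence arrival occurs, everything is a direct appeal to the SMOD worst-case lemmas, and the remaining induction is routine.
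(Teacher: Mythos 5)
Your proposal is correct and follows essentially the same route as the paper: reduce the SMOD stretches to \cref{lem:pln} and \cref{lem:pos-part}, and handle the SEK phase by direct bookkeeping on the $k+1$ jobs at divergence, showing a single positive-diff index of value at most $b_1 \le \epsilon$ adjacent to the negative-diff descendant of $b_{k+1}$. The only cosmetic difference is that your work-inequality step adds an unnecessary comparison of the extra small job against the ``head start'' on the large job --- serving at least as many jobs until SEK-SMOD empties already suffices, which is exactly the paper's argument.
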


\begin{proof}
    In \cref{lem:pln,lem:pos-part}, we showed that the SMOD policy maintains the work inequality, the zig-zag property, the PLN property,
    and has monotonically decreasing positive part $r^+$.
    When the two systems are merged, all four properties hold trivially.
    Thus, it suffices to show that all four properties hold throughout the SEK period.

    To demonstrate these properties, we refer to the definition of the SEK policy in \cref{def:sek}.
    Note that the $k-1$ smallest jobs at the divergence point received the same service in both systems,
    so we may focus only on the largest two jobs at the divergence point.
    Let $b_1, b_k, b_{k+1}$ be the remaining sizes of the smallest, second-largest, and largest jobs
    at the divergence point.
    After $t$ time since divergence, these two largest jobs' remaining sizes are $[b_k - (t - b_1)^+, b_{k+1} - t]$
    under SEK, and $[b_k - t, b_{k+1} - (t - b_1)^+]$ under SRPT-$k$.
    We use these remaining sizes to demonstrate the four properties.
    Note that either of $b_k - (t - b_1)^+$ and $b_{k+1} - t$
    may have the largest remaining size at a given time $t$. We cover both possibilities.
    \begin{description}
        \item[Work inequality:] We show that, until the time when the SEK policy finishes its final job,
        SEK always serves at least as many jobs as SRPT-$k$, implying the work inequality.
        This is equivalent to showing that the $k$th completion time under SEK,
        the first of the two largest jobs to complete,
        is no sooner under SEK than under SRPT-$k$.
        In the SEK system, the second-largest job finishes at time $t = b_1 + b_k$, and the largest job finishes at time $b_{k+1}$. In the SRPT-$k$ system, the second-largest job finishes at time $b_k$, and the largest job finishes at time $b_1 + b_{k+1}$.
        Note that $b_1 + b_k \ge b_k$, and $b_{k+1} \ge b_k$,
        so these final two jobs under SEK finish after the second-to-last completion under SRPT-$k$.
        This confirms the work inequality.
        \item[Zig-zag:] We must ensure that neither of the two largest jobs' remaining sizes under SEK
        exceed that of the largest job under SRPT-$k$.
        But note that $b_k - (t - b_1)^+ \le b_{k+1} - (t - b_1)^+$,
        and $b_{k+1} - t \le b_{k+1} - (t - b_1)^+$, confirming the zig-zag property.
        \item[PLN:] The only indices that may have nonzero difference are those corresponding to the two largest jobs. The second-largest job has a positive difference (larger under SEK) and the largest job has a negative difference (smaller under SEK). Thus, the PLN property holds.
        \item[$r^+$ at most $\epsilon$:] The only positive difference index is that of the second-largest job.
        This difference is at most $(b_k - (t - b_1)^+) - (b_k - t) \le b_1 \le \epsilon$.
        \qedhere
    \end{description}
\end{proof}

\section{Numerical study}
\label{sec:empirical}

We have proven that SEK-SMOD outperforms SRPT-$k$ at all loads, with appropriate parameterization.
In this section, we numerically study the Practical SEK policy, defined in \cref{def:practical-sek}.
We demonstrate that Practical SEK achieves significant, reliable improvements over SRPT-$k$, across a wide range of distributions $S$, loads $\rho$, and numbers of servers $k \ge 2$. In \cref{sec:empirical-old-policies} we compare Practical SEK with the three scheduling policies which had previously been shown to achieve optimal mean response time in heavy traffic: SRPT, PSJF (Preemptive Shortest Job First) and RS (Remaining Size times Original Size) \cite{grosof-etal-2019-srpt-k}.
We start with three baseline size distributions in \cref{sec:empirical-old-policies}, and explore a wide variety of higher-variance distributions in \cref{sec:empirical-explore-dists}, where Practical SEK's benefit is empirically largest compared to SRPT-$k$.

In \cref{sec:empirical-more-servers} we explore settings with more servers, $k \in \{2, 3, \ldots, 6\}$. We show that the exponentially-diminishing improvement lower bound in \cref{lem:any-diff}, with parameters defined in \cref{def:parameter}, is a proof artifact: Practical SEK's improvement over SRPT-$k$ remains significant even as $k$ increases, decreasing roughly linearly with $k$, not exponentially. We also introduce the SEK-$n$ variant of the Practical SEK policy, which is better suited to settings with more servers, defined in \cref{def:sek-n} and explored in \cref{sec:empirical-more-servers}.

Finally, in \cref{sec:empirical-estimates}, we show that Practical SEK's improvement over SRPT-$k$ remains significant even if the scheduler only has access to estimates of job size, not exact job size information, though estimates must be somewhat accurate for SEK to be useful.

Before presenting the results, we properly define the Practical SEK policy, a simplified single-parameter version of the SEK policy defined in \cref{def:sek}.
\begin{definition}\label{def:practical-sek}
    Practical SEK has a single parameter $\epsilon > 0$.
    Practical SEK serves the $k$ jobs of least remaining size (matching SRPT-$k$), except when the following scenario arises:
    \begin{enumerate}
        \item There are exactly $k+1$ jobs present,
        \item $k$ of the jobs have remaining size less than $\epsilon$, and
        \item the largest job has remaining size greater than $\epsilon$.
    \end{enumerate}
    In this case, Practical SEK serves the $k-1$ jobs of least remaining size, and the final job of largest remaining size
    until an arrival or completion occurs.
\end{definition}

For larger numbers of servers $k \ge 3$, we define the SEK-$n$ policy, which generalizes the situations in which jobs can be reordered.
We numerically explore SEK-$n$ in \cref{sec:empirical-more-servers}.
\begin{definition}\label{def:sek-n}
    SEK-$n$ has two parameters, $\epsilon > 0$ and $n \ge 1$. SEK-1 matches Practical SEK defined in \cref{def:practical-sek}.
    For general $n$, SEK-$n$ serves the $k$ jobs of least remaining size (matching SRPT-$k$), except when the following scenario arises:
    \begin{enumerate}
        \item There are at most $k+n$ jobs present,
        \item $k$ of the jobs have remaining size less than $\epsilon$, and
        \item all other jobs have remaining size greater than $\epsilon$.
    \end{enumerate}
    In this case, SEK-$n$ serves the $k-1$ jobs of least remaining size, and the job of $(k+1)$st least remaining size
    until an arrival or completion occurs, skipping over the job of $k$th least remaining size.
\end{definition}

\subsection{Comparison with SRPT, PSJF, and RS}
\label{sec:empirical-old-policies}

\begin{figure}
\begin{subfigure}[t]{0.33\textwidth}
    \centering
    \includegraphics[width=\linewidth]{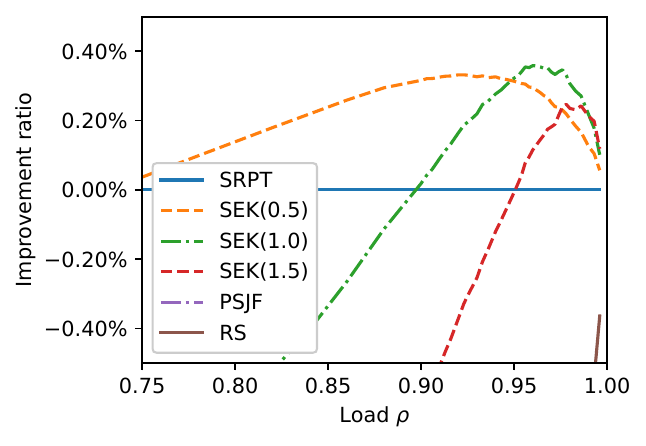}
   \Description[Four curves, all SEK policies peaking above 0.]{Four curves, showing improvement ratio as a function of load. SRPT is a baseline at 0,
   SEK(0.5) is always positive, but has the lowest peak of the SEK policies. SEK(1) is positive for $\rho > 0.85$, with the highest peak, around $0.3\%$ improvement. SEK(1.5) is only positive for $\rho > 0.9$ and has the second-highest peak.}
    \caption{$S \sim Exp(1)$.  }
    \label{fig:exp-2}
\end{subfigure}%
\begin{subfigure}[t]{0.33\textwidth}
    \centering
    \includegraphics[width=\linewidth]{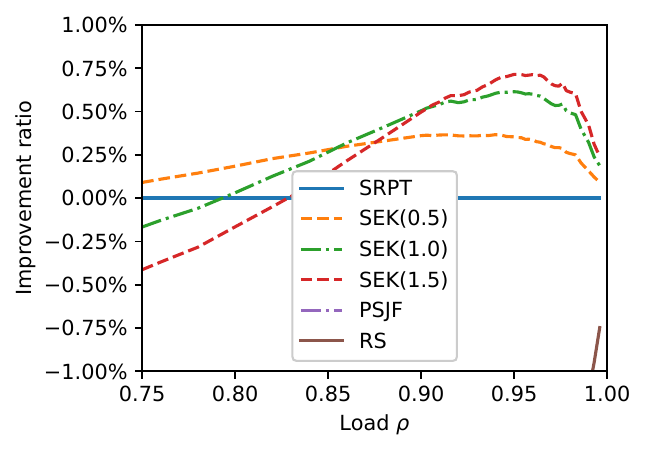}
    \caption{$S\sim HypExp, \E{S}=1, C^2\approx 10$.}
    \Description[Four curves, all SEK policies positive almost everywhere.]{Four curves, showing improvement ratio as a function of load. SRPT is a baseline at 0,
   SEK(0.5) is always positive, but has the lowest peak of the SEK policies. SEK(1) is positive for $\rho > 0.76$, with the second-highest peak. SEK(1.5) positive for $\rho > 0.82$ and has the highest peak,
   around $0.7\%$ improvement.}
    \label{fig:hyperexp-2}
\end{subfigure}%
\begin{subfigure}[t]{0.33\textwidth}
    \centering
    \includegraphics[width=\linewidth]{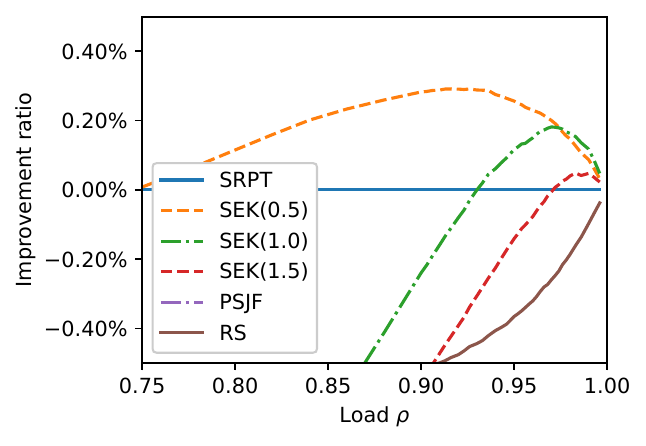}
    \caption{$S \sim Uniform(0, 2)$.}
    \Description[Four curves, all SEK policies peaking above 0, but only SEK(0.5) consistently positive.]{Four curves, showing improvement ratio as a function of load. SRPT is a baseline at 0,
   SEK(0.5) is always positive, and has the second-higest peak. SEK(1) is only positive for $\rho > 0.9$, with the highest peak, around $0.2\%$. SEK(1.5) is only positive for a narrow interval, $\rho > 0.97$, and has the lowest peak.}
    \label{fig:uniform-2}
\end{subfigure}
\caption{Improvement of SEK over SRPT-$k$, PSJF-$k$, and RS-$k$, under varying distributions $S$ and threshold parameters $\epsilon \in \{0.5, 1, 1.5\}$. Improvement ratio compares each policy to SRPT-$k$. Loads simulated: $\rho \in [0.75, 0.996]$, $k=2$ servers. $10^7$ arrivals per data point.}
\label{fig:old-zoom-in}
\end{figure}

In \cref{fig:old-zoom-in}, we examine three values of $\epsilon$ parameter: $0.5, 1,$ and $1.5$, for three size distributions with mean $\E{S} = 1$: An exponential distribution in \cref{fig:exp-2}, a hyperexponential distribution with squared coefficient of variation $C^2 \approx 10$ in \cref{fig:hyperexp-2},
and a uniform distribution in \cref{fig:uniform-2}.
In this section, we focus on the $k=2$ server, perfect information case. See \cref{sec:empirical-more-servers} for $k>2$ servers, and \cref{sec:empirical-estimates} for approximate size information.

We compare these policies against the three policies which had previously been shown to achieve optimal mean response time in heavy traffic: SRPT-$k$, PSJF-$k$, and RS-$k$ \cite{grosof-etal-2019-srpt-k}.

In these figures, we plot the empirical ``improvement ratio'' metric $1 - \frac{\E{T^\pi}}{\E{T^{SRPT\hy k}}}$, where $\pi$ represents the SEK policy for the three values of $\epsilon$ specified above, or the SRPT-$k$, PSJF-$k$, and RS-$k$ policies as baselines. Note that values above 0 indicate improvement upon SRPT-$k$, and values below 0 indicate worse performance than SRPT-$k$.

Note that RS-$k$ never outperforms SRPT-$k$, and is barely visible in \cref{fig:exp-2} and \cref{fig:hyperexp-2}, while PSJF-$k$ is not visible in any of the plots. To make PSJF-$k$ visible, we must expand the vertical axis by a factor of 15-30$\times$, as shown in \cref{fig:old-zoom-out}.

Returning to \cref{fig:old-zoom-in} and examining the different threshold parameters $\epsilon$ for the Practical SEK policy, we see that an epsilon parameter of $0.5$ achieves some improvement over SRPT-$k$ across the entire range of loads simulated, with a low sensitivity to job size distribution variability, while the larger thresholds $\epsilon=1$ and $\epsilon=1.5$ achieve higher peak improvement ratios, with a peak improvement ratio of $0.32\%$ for $\epsilon=1$ in the exponential case in \cref{fig:exp-2}, $0.67\%$ for $\epsilon=1.5$ in the hyperexponential case in \cref{fig:hyperexp-2}, and $0.20\%$ in the uniform case in \cref{fig:uniform-2}.

These results for these distributions are indicative of a wider trend: Practical SEK is more beneficial, relative to SRPT-$k$, when the job size distribution has higher variance. We explore this trend further in \cref{sec:empirical-explore-dists}.

\begin{figure}
\begin{subfigure}[t]{0.33\textwidth}
    \centering
    \includegraphics[width=\linewidth]{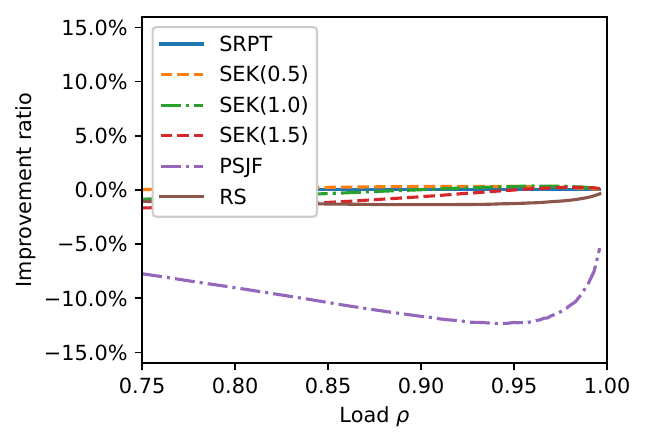}
    \caption{$S \sim Exp(1)$.  }
    \label{fig:exp-2-out}
\end{subfigure}%
\begin{subfigure}[t]{0.33\textwidth}
    \centering
    \includegraphics[width=\linewidth]{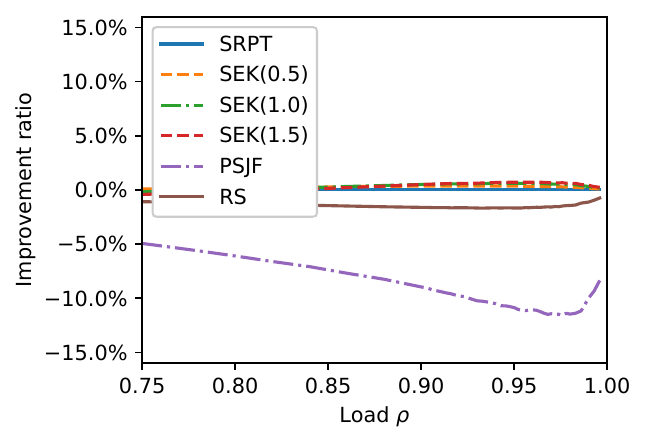}
    \caption{$S\sim HypExp, \E{S}=1, C^2\approx 10$.}
\end{subfigure}%
\begin{subfigure}[t]{0.33\textwidth}
    \centering
    \includegraphics[width=\linewidth]{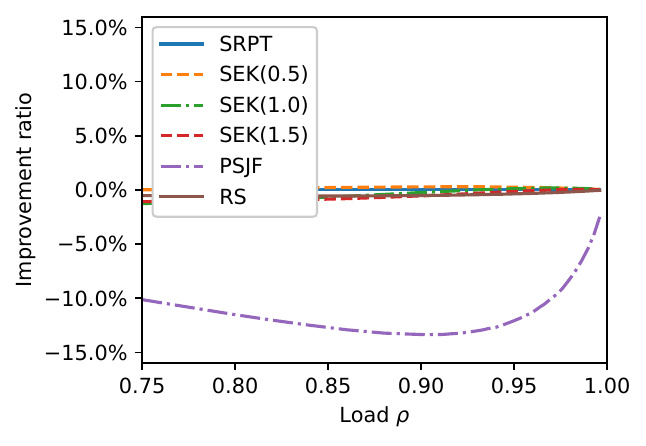}
    \caption{$S \sim Uniform(0, 2)$.}
    \label{fig:uniform-2-out}
\end{subfigure}
\caption{Improvement of SEK over SRPT-$k$, PSJF-$k$, and RS-$k$, under varying distributions $S$ and threshold parameters $\epsilon \in \{0.5, 1, 1.5\}$, with an expanded vertical axis. Improvement ratio compares each policy to SRPT-$k$. Loads simulated: $\rho \in [0.75, 0.996]$, $k=2$ servers. $10^7$ arrivals per data point.}
\label{fig:old-zoom-out}
\end{figure}

\begin{table}
    \centering
\begin{minipage}{0.48\textwidth}
\begin{tabular}{c|c||c|c|c}
$C^2$ & $\rho_{high}$ & Best Improvement & Best $\rho$ & Best $\epsilon$ \\
\hline
2& 0.1& 0.4828\%& 0.97& 1 \\
2& 0.3& 0.4982\%& 0.97& 1 \\
2& 0.5& 0.4912\%& 0.96& 1 \\
2& 0.7& 0.4968\%& 0.95& 1 \\
2& 0.9& 0.4564\%& 0.95& 1 \\
4& 0.1& 0.6145\%& 0.98& 1.5 \\
4& 0.3& 0.6381\%& 0.98& 1.5 \\
4& 0.5& 0.6282\%& 0.97& 1.5 \\
4& 0.7& 0.6323\%& 0.96& 2 \\
4& 0.9& 0.5084\%& 0.93& 1.5 \\
10& 0.1& 0.8056\%& 0.99& 2 \\
10& 0.3& 0.7785\%& 0.98& 3 \\
10& 0.5& 0.8238\%& 0.96& 3 \\
\end{tabular}
\end{minipage}
\hfill
\begin{minipage}{0.48\textwidth}
\begin{tabular}{|c|c||c|c|c}
$C^2$ & $\rho_{high}$ & Best Improvement & Best $\rho$ & Best $\epsilon$ \\
\hline
10& 0.7& 0.7756\%& 0.94& 3 \\
10& 0.9& 0.5106\%& 0.93& 3 \\
20& 0.1& 0.7524\%& 0.98& 1.5 \\
20& 0.3& \textbf{0.9040\%}& 0.98& 3 \\
20& 0.5& 0.8985\%& 0.96& 3 \\
20& 0.7& 0.6999\%& 0.94& 3 \\
20& 0.9& 0.3707\%& 0.9& 3 \\
40& 0.1& 0.7468\%& 0.99& 2 \\
40& 0.3& 0.7670\%& 0.96& 1.5 \\
40& 0.5& 0.8146\%& 0.95& 3 \\
40& 0.7& 0.5936\%& 0.93& 3 \\
40& 0.9& 0.2262\%& 0.9& 3 \\ \\
    \end{tabular}
\end{minipage}
\caption{Experiments varying 2-branch hyperexponential distributions $S$, maximized over load $\rho$ and threshold parameter $\epsilon$. $k=2$ servers. $10^7$ arrivals per data point.}
\label{tbl:explore-dists}
\end{table}

\subsection{Systematic exploration of higher-variance distributions}
\label{sec:empirical-explore-dists}

In this section, we demonstrate that SEK demonstrates larger improvement over SRPT-$k$, under higher variance job size distributions.
We specifically focus on 2-branch hyperexponential job size distributions, with $k=2$ servers. For $k>2$ servers, see \cref{sec:empirical-more-servers}.

We parameterize our size distributions by squared coefficient of variability, $C^2$, and by the fraction of load $\rho_{high}$ comprised by the high-mean branch of the hyperexponential distribution. We normalize the mean $E[S]=1$.
In \cref{tbl:explore-dists}, we explore job size distributions defined by all pairs of $C^2 \in \{2, 4, 10, 20, 40\}$
and high-mean load fraction $\in \{0.1, 0.3, 0.5, 0.7, 0.9\}$,
across loads $\rho \in [0.9, 0.91, 0.92, 0.93, 0.94, 0.95, 0.96, 0.97, 0.98, 0.99]$,
and across SEK thresholds $\epsilon \in [1, 1.5, 2, 3]$.
For each distribution, we give the $(\rho, \epsilon)$ pair which maximizes the improvement ratio $1 - \frac{\E{T^\pi}}{\E{T^{SRPT\hy k}}}$,
with the globally highest improvement ratio in bold.
The highest improvement ratio we found was a $0.90\%$ improvement, with $C^2=20$ and $\rho_{high}=0.3$, with $\rho=0.98$ and $\epsilon=3$.
Further exploration would likely uncover even larger improvement ratios.

Consistent trends are visible in \cref{tbl:explore-dists}:
Larger improvements are associated with higher $C^2$ and lower $\rho_{high}$, and the best threshold $\epsilon$ increases with larger $C^2$ and larger $\rho_{high}$.

\subsection{Larger number of servers $k$, and policy variant SEK-$n$}
\label{sec:empirical-more-servers}

\begin{figure}
\begin{subfigure}[t]{0.33\textwidth}
    \centering
    \includegraphics[width=\linewidth]{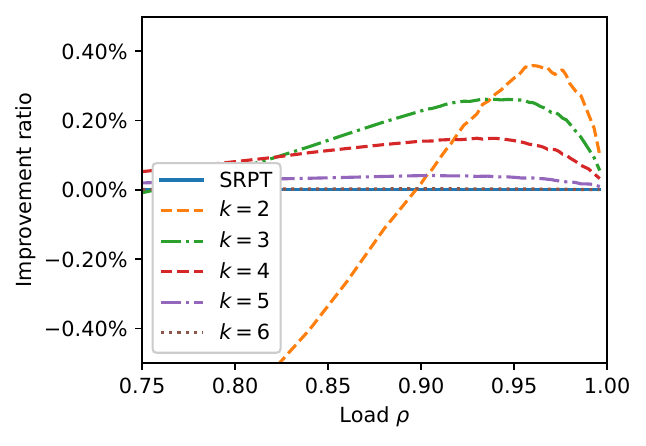}
    \caption{Practical SEK policy.}
    \label{fig:multi-sek-1}
\end{subfigure}%
\begin{subfigure}[t]{0.33\textwidth}
    \centering
    \includegraphics[width=\linewidth]{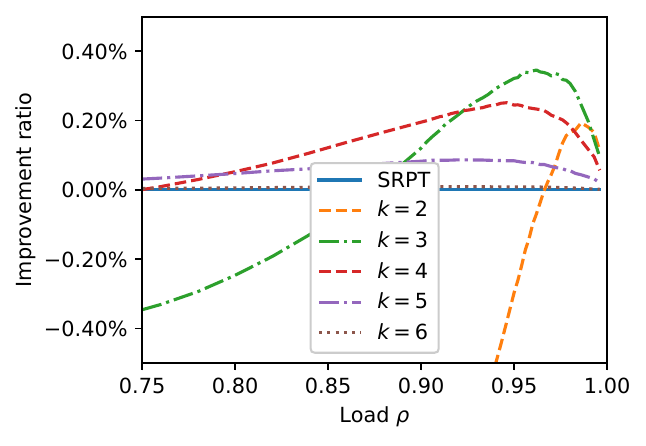}
    \caption{SEK-2 Policy}
    \label{fig:multi-sek-2}
\end{subfigure}%
\begin{subfigure}[t]{0.33\textwidth}
    \centering
    \includegraphics[width=\linewidth]{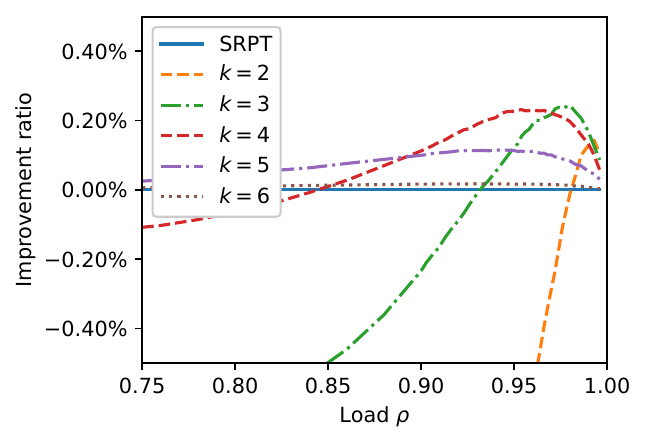}
    \caption{SEK-3 Policy}
    \label{fig:multi-sek-3}
\end{subfigure}
\caption{Improvement of SEK over SRPT-$k$, and the SEK2 and SEK3 variants, for $k=2, 3, \ldots 6$ servers. Threshold $\epsilon=1$. Loads simulated: $\rho \in [0.75, 0.996]$. Job size: $S \sim Exp(1)$. $10^7$ arrivals per data point.}
\label{fig:multi-sek}
\end{figure}

In this section, we examine the improvement of the SEK policy over the SRPT-$k$ policy for $k=2, 3 \ldots 6$ servers.
In \cref{fig:multi-sek-1}, we compare the standard Practical SEK policy with SRPT-$k$ over this range of servers.
Here we see that for $k \ge 3$ servers, improvement over SRPT-$k$ is more consistent across a range of loads, extending to all loads simulated.
However, peak improvement is smaller as $k$ increases. Note however that the exponential decrease in $k$ of the improvement lower bound proven in \cref{lem:any-diff}, with parameters given in \cref{def:parameter}, is overly pessimistic:
Even at $k=5$ servers, improvement is still significant.

To enlarge peak improvement for more servers $k$, we introduce a variant of the Practical SEK policy,
which we call the SEK-$n$ family of policies, which we defined in \cref{def:sek-n}.
Under Practical SEK, we serve the $k+1$st largest job in front of the $k$th largest job if and only if there are $k+1$ jobs in the system,
of which $k$ have remaining size below $\epsilon$ and $1$ has remaining size above $\epsilon$.
Under SEK-$n$, we generalize the cases in which we serve the $k+1$st largest job in front of the $k$th largest job
to allow up to $n$ jobs with remaining size above $\epsilon$, while still requiring exactly $k$ jobs with remaining size below $\epsilon$ to be present.

The intuition behind the SEK-$n$ policy is similar to that of the baseline SEK policy:
We find situations where the SRPT-$k$ policy would waste capacity by prematurely emptying one or more servers by completing the smallest jobs,
and instead serve a larger job to more efficiently utilize capacity. Note that this intuition only applies when $n < k$, and we see in \cref{fig:multi-sek}
that SEK-$n$ is most beneficial when $n < k$.

In particular, in \cref{fig:multi-sek-2} and \cref{fig:multi-sek-3}, we show the improvement of the SEK-2 and SEK-3 policies over the the SRPT-$k$ policy, again for $k=2,3 \ldots 6$ servers. For $k=3$, the peak improvement is largest under SEK-2, with similar peak improvement ratio as for SEK-1 with $k=2$ servers. $k=4$ similarly shows largest improvement under SEK-2, while for $k=5$ and $k=6$, the largest improvement is found for SEK-3.

\subsection{Estimated job sizes}
\label{sec:empirical-estimates}

Previously, we have focused on the case where the scheduler has access to exact size information.
In this section, we consider the case where the scheduler only has access to estimated size information.

In this setting, when each job arrives, the scheduler has access to a size estimate $E := (S + D)^+$,
where $D$ is an i.i.d. error margin.
As our baseline policy, we consider the SRPT-Estimate policy, which prioritizes jobs according to the remaining size estimate $E - a$,
where $a$ is the jobs age (i.e. time in service).
Note that more refined or optimized remaining size estimates are known in the literature \cite{scully-et-al2022uniform},
but this estimate makes a useful comparison point for testing SEK.
Specifically, we compare SRPT-Estimate against the SEK-Estimate policy,
which plugs the same estimate into Practical SEK policy defined in \cref{def:practical-sek}.

\begin{figure}
    \centering
    \includegraphics[width=\linewidth]{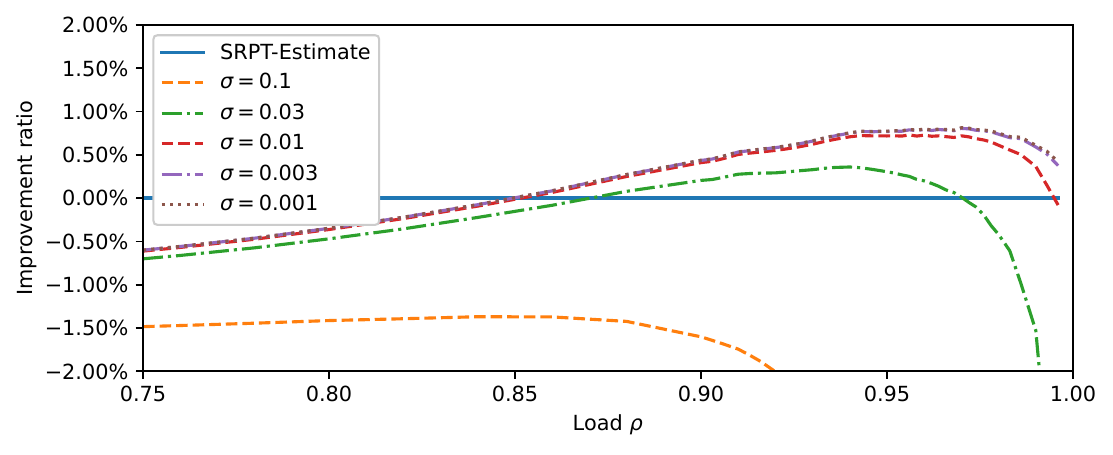}
    \caption{Improvement of SEK-Estimate over SRPT-Estimate, for normally distributed error $D$ with standard deviation $\sigma$.
    Job size distribution $S$ is hyperexponential with $C^2 \approx 10$ and $E[S]=1$.
    Threshold $\epsilon = 2$. Loads simulated: $\rho \in [0.75, 0.996]$, $k=2$ servers. $10^7$ arrivals per data point.}
    \label{fig:estimate}
\end{figure}

In \cref{fig:estimate}, we empirically demonstrate that SEK-Estimate can improve upon SRPT-Estimate,
for accurate enough estimates.
Specifically, we consider the estimate distribution $D \sim Normal(0, \sigma)$,
with the standard deviation $\sigma \in \{0.1, 0.03, 0.01, 0.003, 0.001\}$.
We use a choosier threshold $\epsilon=2$ than in \cref{sec:empirical-old-policies},
which we find produces better empirical results with estimated job sizes.

If errors are sufficiently large, e.g. $\sigma=0.1$, SEK-Estimate is worse than SRPT-Estimate.
However, as errors shrink, SEK-Estimate produces increasingly large improvements over SRPT-Estimate,
converging to the improvement margin in the perfect-information case.

Intuitively, the benefits of SEK over SRPT simply require SEK to be able to identify states in which near-future work efficiency is more important
than near-future completion speed. While sufficiently large estimation errors interfere with this identification
enough to make SEK-Estimate ineffective,
moderate to high accuracies of 3\% relative error or less are sufficient for SEK=Estimate to identify beneficial interchange states.

\section{Conclusion}

We demonstrate the first scheduling policy which provably improves
on the mean response time of SRPT-$k$ in the M/G/$k$ queue.
We prove in \cref{thm:main} that our SRPT-Except-$k+1$ \& Modified SRPT (SEK-SMOD) policy has lower mean response time than SRPT-$k$ at all loads and under all job size distributions,
despite SRPT-$k$'s optimality results in the heavy traffic limit.
These theoretical results are empirically validated in \cref{sec:empirical}.

To prove our results,
we combine hybrid worst-case-and-stochastic analysis with relative analysis.
We characterize the stochastic properties of the SEK portion of the policy in \cref{sec:stochastic}
and the worst-case properties of the SMOD portion of the policy \cref{sec:worst-case},
both relative to the behavior of the baseline SRPT-$k$ policy.
We thereby demonstrate an explicit parameter choice for the SEK-SMOD policy
which always achieves lower mean response time than SRPT-$k$ policy.

In \cref{sec:empirical}, we numerically explore the improvement of a practical variant of SEK over SRPT-$k$
under a variety of job size distributions $S$, numbers of servers $k$,
and under estimated job sizes, rather than perfect information.
We find the SEK's improvement over SRPT-$k$ is modest, but significant, peaking at around 0.9\% improvement,
which recent work indicates is a substantial fraction of the possible improvement from SRPT-$k$ to the optimal $k$-server policy \cite{wang-et-al2025novel}.
In \cref{sec:empirical-more-servers}, we also introduce a policy variant, SEK-$n$, which expands the situations in which a larger job is run ahead of a smaller job to improve performance for moderately larger numbers of servers $k \ge 3$.

An important direction for future work lies in optimization-based and learning-based search for additional improvement in mean response time over SRPT-$k$.
The continuous-time, continuous-state, infinite-dimensional nature of the known-size M/G/k setting make it hard to learn or optimize effective policies, but the lower-dimensional nature of SEK-type policies may be easier to search and optimize over.
Future work could also seek to improve relative to the Gittins-$k$ policy \cite{scully-et-al2020gittins},
allowing more complex partial information settings.

\bibliographystyle{ACM-Reference-Format}
\bibliography{references}


\begin{thebibliography}{27}


\ifx \showCODEN    \undefined \def \showCODEN     #1{\unskip}     \fi
\ifx \showISBNx    \undefined \def \showISBNx     #1{\unskip}     \fi
\ifx \showISBNxiii \undefined \def \showISBNxiii  #1{\unskip}     \fi
\ifx \showISSN     \undefined \def \showISSN      #1{\unskip}     \fi
\ifx \showLCCN     \undefined \def \showLCCN      #1{\unskip}     \fi
\ifx \shownote     \undefined \def \shownote      #1{#1}          \fi
\ifx \showarticletitle \undefined \def \showarticletitle #1{#1}   \fi
\ifx \showURL      \undefined \def \showURL       {\relax}        \fi
\providecommand\bibfield[2]{#2}
\providecommand\bibinfo[2]{#2}
\providecommand\natexlab[1]{#1}
\providecommand\showeprint[2][]{arXiv:#2}

\bibitem[Azaria et~al\mbox{.}(2023)]%
        {azaria2023TOC}
\bibfield{author}{\bibinfo{person}{Shany Azaria}, \bibinfo{person}{Boaz Ronen},
  {and} \bibinfo{person}{Noam Shamir}.} \bibinfo{year}{2023}\natexlab{}.
\newblock \showarticletitle{Justice in time: A theory of constraints approach}.
\newblock \bibinfo{journal}{\emph{Journal of Operations Management}}
  \bibinfo{volume}{69}, \bibinfo{number}{7} (\bibinfo{year}{2023}),
  \bibinfo{pages}{1202--1208}.
\newblock
\href{https://doi.org/10.1002/joom.1234}{doi:\nolinkurl{10.1002/joom.1234}}


\bibitem[Azaria et~al\mbox{.}(2024)]%
        {azaria2024jerusalem-courthouse}
\bibfield{author}{\bibinfo{person}{Shany Azaria}, \bibinfo{person}{Boaz Ronen},
  {and} \bibinfo{person}{Noam Shamir}.} \bibinfo{year}{2024}\natexlab{}.
\newblock \showarticletitle{Alleviating court congestion: The case of the
  Jerusalem district court}.
\newblock \bibinfo{journal}{\emph{INFORMS Journal on Applied Analytics}}
  \bibinfo{volume}{54}, \bibinfo{number}{3} (\bibinfo{year}{2024}),
  \bibinfo{pages}{267--281}.
\newblock


\bibitem[Bray et~al\mbox{.}(2016)]%
        {bray2016courthouse}
\bibfield{author}{\bibinfo{person}{Robert~L Bray}, \bibinfo{person}{Decio
  Coviello}, \bibinfo{person}{Andrea Ichino}, {and} \bibinfo{person}{Nicola
  Persico}.} \bibinfo{year}{2016}\natexlab{}.
\newblock \showarticletitle{Multitasking, multiarmed bandits, and the Italian
  judiciary}.
\newblock \bibinfo{journal}{\emph{Manufacturing \& Service Operations
  Management}} \bibinfo{volume}{18}, \bibinfo{number}{4}
  (\bibinfo{year}{2016}), \bibinfo{pages}{545--558}.
\newblock


\bibitem[Conway et~al\mbox{.}(1967)]%
        {conway1967miller}
\bibfield{author}{\bibinfo{person}{Richard~Walter Conway},
  \bibinfo{person}{William~L. Maxwell}, {and} \bibinfo{person}{L.~W. Miller}.}
  \bibinfo{year}{1967}\natexlab{}.
\newblock \bibinfo{booktitle}{\emph{Theory of scheduling}}.
\newblock \bibinfo{publisher}{Addison-Wesley Pub. Co.},
  \bibinfo{address}{Reading, Mass}.
\newblock
\showISBNx{0201011891}
\showLCCN{67021296}


\bibitem[Graham et~al\mbox{.}(1979)]%
        {graham1979optimization}
\bibfield{author}{\bibinfo{person}{R.L. Graham}, \bibinfo{person}{E.L. Lawler},
  \bibinfo{person}{J.K. Lenstra}, {and} \bibinfo{person}{A.H.G.Rinnooy Kan}.}
  \bibinfo{year}{1979}\natexlab{}.
\newblock \showarticletitle{Optimization and Approximation in Deterministic
  Sequencing and Scheduling: a Survey}.
\newblock In \bibinfo{booktitle}{\emph{Annals of Discrete Mathematics}}.
  Vol.~\bibinfo{volume}{5}. \bibinfo{publisher}{North-Holland Publishing
  Company}, \bibinfo{address}{Amsterdam, The Netherlands},
  \bibinfo{pages}{287--326}.
\newblock
\showISBNx{0080867677}
\showISSN{0167-5060}


\bibitem[Grosof et~al\mbox{.}(2022)]%
        {grosof-et-al2022wcfs}
\bibfield{author}{\bibinfo{person}{Isaac Grosof}, \bibinfo{person}{Mor
  Harchol-Balter}, {and} \bibinfo{person}{Alan Scheller-Wolf}.}
  \bibinfo{year}{2022}\natexlab{}.
\newblock \showarticletitle{{WCFS}: A new framework for analyzing multiserver
  systems}.
\newblock \bibinfo{journal}{\emph{Queueing Systems}} \bibinfo{volume}{102},
  \bibinfo{number}{1} (\bibinfo{year}{2022}), \bibinfo{pages}{143--174}.
\newblock


\bibitem[Grosof et~al\mbox{.}(2019a)]%
        {grosof-etal-2019guardrails}
\bibfield{author}{\bibinfo{person}{Isaac Grosof}, \bibinfo{person}{Ziv Scully},
  {and} \bibinfo{person}{Mor Harchol-Balter}.}
  \bibinfo{year}{2019}\natexlab{a}.
\newblock \showarticletitle{Load balancing guardrails: Keeping your heavy
  traffic on the road to low response times}.
\newblock \bibinfo{journal}{\emph{Proceedings of the ACM on Measurement and
  Analysis of Computing Systems}} \bibinfo{volume}{3}, \bibinfo{number}{2}
  (\bibinfo{year}{2019}), \bibinfo{pages}{1--31}.
\newblock


\bibitem[Grosof et~al\mbox{.}(2019b)]%
        {grosof-etal-2019-srpt-k}
\bibfield{author}{\bibinfo{person}{Isaac Grosof}, \bibinfo{person}{Ziv Scully},
  {and} \bibinfo{person}{Mor Harchol-Balter}.}
  \bibinfo{year}{2019}\natexlab{b}.
\newblock \showarticletitle{{SRPT} for multiserver systems}.
\newblock \bibinfo{journal}{\emph{ACM SIGMETRICS Performance Evaluation
  Review}} \bibinfo{volume}{46}, \bibinfo{number}{2} (\bibinfo{year}{2019}),
  \bibinfo{pages}{9--11}.
\newblock


\bibitem[Grosof and Wang(2024)]%
        {izzy-MAMA-SEK}
\bibfield{author}{\bibinfo{person}{Izzy Grosof} {and} \bibinfo{person}{Ziyuan
  Wang}.} \bibinfo{year}{2024}\natexlab{}.
\newblock \showarticletitle{Bounds on M/G/k Scheduling Under Moderate Load
  Improving on SRPT-k and Tightening Lower Bounds}.
\newblock \bibinfo{journal}{\emph{SIGMETRICS Perform. Eval. Rev.}}
  \bibinfo{volume}{52}, \bibinfo{number}{2} (\bibinfo{date}{Sept.}
  \bibinfo{year}{2024}), \bibinfo{pages}{24–26}.
\newblock
\showISSN{0163-5999}
\href{https://doi.org/10.1145/3695411.3695421}{doi:\nolinkurl{10.1145/3695411.3695421}}


\bibitem[Grosof et~al\mbox{.}(2021)]%
        {grosof-etal-nudge-2021}
\bibfield{author}{\bibinfo{person}{Isaac Grosof}, \bibinfo{person}{Kunhe Yang},
  \bibinfo{person}{Ziv Scully}, {and} \bibinfo{person}{Mor Harchol-Balter}.}
  \bibinfo{year}{2021}\natexlab{}.
\newblock \showarticletitle{Nudge: Stochastically improving upon FCFS}.
\newblock \bibinfo{journal}{\emph{Proceedings of the ACM on Measurement and
  Analysis of Computing Systems}} \bibinfo{volume}{5}, \bibinfo{number}{2}
  (\bibinfo{year}{2021}), \bibinfo{pages}{1--29}.
\newblock


\bibitem[Hong and Scully(2024)]%
        {hong-et-al2024gittins-ggk}
\bibfield{author}{\bibinfo{person}{Yige Hong} {and} \bibinfo{person}{Ziv
  Scully}.} \bibinfo{year}{2024}\natexlab{}.
\newblock \showarticletitle{Performance of the {Gittins} policy in the {G/G/1}
  and {G/G/k}, with and without setup times}.
\newblock \bibinfo{journal}{\emph{Performance Evaluation}}
  \bibinfo{volume}{163} (\bibinfo{year}{2024}), \bibinfo{pages}{102377}.
\newblock
\showISSN{0166-5316}
\href{https://doi.org/10.1016/j.peva.2023.102377}{doi:\nolinkurl{10.1016/j.peva.2023.102377}}


\bibitem[Kulkarni and Fuller(2011)]%
        {Kul-Ful2011-example-scheduling}
\bibfield{author}{\bibinfo{person}{Prasad~A Kulkarni} {and}
  \bibinfo{person}{Jay Fuller}.} \bibinfo{year}{2011}\natexlab{}.
\newblock \showarticletitle{JIT Compilation Policy on Single-Core and
  Multi-core Machines}. In \bibinfo{booktitle}{\emph{2011 15th Workshop on
  Interaction between Compilers and Computer Architectures}}.
  \bibinfo{pages}{54--62}.
\newblock
\href{https://doi.org/10.1109/INTERACT.2011.9}{doi:\nolinkurl{10.1109/INTERACT.2011.9}}


\bibitem[Köllerström(1974)]%
        {kollerstrom1974heavy-1}
\bibfield{author}{\bibinfo{person}{Julian Köllerström}.}
  \bibinfo{year}{1974}\natexlab{}.
\newblock \showarticletitle{Heavy traffic theory for queues with several
  servers. {I}}.
\newblock \bibinfo{journal}{\emph{Journal of Applied Probability}}
  \bibinfo{volume}{11}, \bibinfo{number}{3} (\bibinfo{year}{1974}),
  \bibinfo{pages}{544–552}.
\newblock
\href{https://doi.org/10.2307/3212698}{doi:\nolinkurl{10.2307/3212698}}


\bibitem[Köllerström(1979)]%
        {kollerstrom1979heavy-ii}
\bibfield{author}{\bibinfo{person}{Julian Köllerström}.}
  \bibinfo{year}{1979}\natexlab{}.
\newblock \showarticletitle{Heavy traffic theory for queues with several
  servers. {II}}.
\newblock \bibinfo{journal}{\emph{Journal of Applied Probability}}
  \bibinfo{volume}{16}, \bibinfo{number}{2} (\bibinfo{year}{1979}),
  \bibinfo{pages}{393–401}.
\newblock
\href{https://doi.org/10.2307/3212906}{doi:\nolinkurl{10.2307/3212906}}


\bibitem[Leonardi and Raz(2007)]%
        {leonardi1997SRPT-worst-case1}
\bibfield{author}{\bibinfo{person}{Stefano Leonardi} {and}
  \bibinfo{person}{Danny Raz}.} \bibinfo{year}{2007}\natexlab{}.
\newblock \showarticletitle{Approximating total flow time on parallel
  machines}.
\newblock \bibinfo{journal}{\emph{Journal of computer and system sciences}}
  \bibinfo{volume}{73}, \bibinfo{number}{6} (\bibinfo{year}{2007}),
  \bibinfo{pages}{875--891}.
\newblock
\showISSN{0022-0000}


\bibitem[Li and Goldberg(2025)]%
        {yuan-et-al2025simple-explicit}
\bibfield{author}{\bibinfo{person}{Yuan Li} {and} \bibinfo{person}{David~A.
  Goldberg}.} \bibinfo{year}{2025}\natexlab{}.
\newblock \showarticletitle{Simple and Explicit Bounds for Multiserver Queues
  with 1/1-$\rho$ Scaling}.
\newblock \bibinfo{journal}{\emph{Mathematics of Operations Research}}
  \bibinfo{volume}{50}, \bibinfo{number}{2} (\bibinfo{year}{2025}),
  \bibinfo{pages}{813--837}.
\newblock
\href{https://doi.org/10.1287/moor.2022.0131}{doi:\nolinkurl{10.1287/moor.2022.0131}}


\bibitem[Little and Graves(2008)]%
        {little-et-al2008building}
\bibfield{author}{\bibinfo{person}{John D.~C. Little} {and}
  \bibinfo{person}{Stephen~C. Graves}.} \bibinfo{year}{2008}\natexlab{}.
\newblock \bibinfo{booktitle}{\emph{Little's Law}}.
\newblock \bibinfo{publisher}{Springer US}, \bibinfo{address}{Boston, MA},
  \bibinfo{pages}{81--100}.
\newblock
\showISBNx{978-0-387-73699-0}
\href{https://doi.org/10.1007/978-0-387-73699-0_5}{doi:\nolinkurl{10.1007/978-0-387-73699-0_5}}


\bibitem[Loulou(1973)]%
        {loulou1973multi}
\bibfield{author}{\bibinfo{person}{Richard Loulou}.}
  \bibinfo{year}{1973}\natexlab{}.
\newblock \showarticletitle{Multi-channel queues in heavy traffic}.
\newblock \bibinfo{journal}{\emph{Journal of Applied Probability}}
  \bibinfo{volume}{10}, \bibinfo{number}{4} (\bibinfo{year}{1973}),
  \bibinfo{pages}{769–777}.
\newblock
\href{https://doi.org/10.2307/3212380}{doi:\nolinkurl{10.2307/3212380}}


\bibitem[McNaughton(1959)]%
        {mcnaughton1959scheduling}
\bibfield{author}{\bibinfo{person}{Robert McNaughton}.}
  \bibinfo{year}{1959}\natexlab{}.
\newblock \showarticletitle{Scheduling with deadlines and loss functions}.
\newblock \bibinfo{journal}{\emph{Management science}} \bibinfo{volume}{6},
  \bibinfo{number}{1} (\bibinfo{year}{1959}), \bibinfo{pages}{1--12}.
\newblock


\bibitem[Schrage(1968)]%
        {schrage1968proof-srpt-opt}
\bibfield{author}{\bibinfo{person}{Linus Schrage}.}
  \bibinfo{year}{1968}\natexlab{}.
\newblock \showarticletitle{A proof of the optimality of the shortest remaining
  processing time discipline}.
\newblock \bibinfo{journal}{\emph{Operations Research}} \bibinfo{volume}{16},
  \bibinfo{number}{3} (\bibinfo{year}{1968}), \bibinfo{pages}{687--690}.
\newblock


\bibitem[Schrage and Miller(1966)]%
        {schrage-et-al1966srpt-characterize}
\bibfield{author}{\bibinfo{person}{Linus~E. Schrage} {and}
  \bibinfo{person}{Louis~W. Miller}.} \bibinfo{year}{1966}\natexlab{}.
\newblock \showarticletitle{The Queue {M/G/1} with the {Shortest Remaining
  Processing Time} Discipline}.
\newblock \bibinfo{journal}{\emph{Operations Research}} \bibinfo{volume}{14},
  \bibinfo{number}{4} (\bibinfo{year}{1966}), \bibinfo{pages}{670--684}.
\newblock


\bibitem[Scully et~al\mbox{.}(2020)]%
        {scully-et-al2020gittins}
\bibfield{author}{\bibinfo{person}{Ziv Scully}, \bibinfo{person}{Isaac Grosof},
  {and} \bibinfo{person}{Mor Harchol-Balter}.} \bibinfo{year}{2020}\natexlab{}.
\newblock \showarticletitle{The Gittins policy is nearly optimal in the M/G/k
  under extremely general conditions}.
\newblock \bibinfo{journal}{\emph{Proceedings of the ACM on Measurement and
  Analysis of Computing Systems}} \bibinfo{volume}{4}, \bibinfo{number}{3}
  (\bibinfo{year}{2020}), \bibinfo{pages}{1--29}.
\newblock


\bibitem[Scully et~al\mbox{.}(2021)]%
        {scully-et-al2021optimal}
\bibfield{author}{\bibinfo{person}{Ziv Scully}, \bibinfo{person}{Isaac Grosof},
  {and} \bibinfo{person}{Mor Harchol-Balter}.} \bibinfo{year}{2021}\natexlab{}.
\newblock \showarticletitle{Optimal multiserver scheduling with unknown job
  sizes in heavy traffic}.
\newblock \bibinfo{journal}{\emph{Performance Evaluation}}
  \bibinfo{volume}{145} (\bibinfo{year}{2021}), \bibinfo{pages}{102150}.
\newblock
\showISSN{0166-5316}


\bibitem[Scully et~al\mbox{.}(2022)]%
        {scully-et-al2022uniform}
\bibfield{author}{\bibinfo{person}{Z Scully}, \bibinfo{person}{I Grosof}, {and}
  \bibinfo{person}{M Mitzenmacher}.} \bibinfo{year}{2022}\natexlab{}.
\newblock \showarticletitle{Uniform Bounds for Scheduling with Job Size
  Estimates}. In \bibinfo{booktitle}{\emph{13th Innovations in Theoretical
  Computer Science Conference, ITCS}}.
\newblock


\bibitem[Sleptchenko et~al\mbox{.}(2005)]%
        {sleptchenko-et-al2005exact}
\bibfield{author}{\bibinfo{person}{Andrei Sleptchenko},
  \bibinfo{person}{Aart~van Harten}, {and} \bibinfo{person}{Matthieu van~der
  Heijden}.} \bibinfo{year}{2005}\natexlab{}.
\newblock \showarticletitle{An Exact Solution for the State Probabilities of
  the Multi-Class, Multi-Server Queue with Preemptive Priorities}.
\newblock \bibinfo{journal}{\emph{Queueing Systems}} \bibinfo{volume}{50},
  \bibinfo{number}{1} (\bibinfo{date}{01 May} \bibinfo{year}{2005}),
  \bibinfo{pages}{81--107}.
\newblock
\showISSN{1572-9443}
\href{https://doi.org/10.1007/s11134-005-0359-y}{doi:\nolinkurl{10.1007/s11134-005-0359-y}}


\bibitem[Wang and Grosof(2025)]%
        {wang-et-al2025novel}
\bibfield{author}{\bibinfo{person}{Ziyuan Wang} {and} \bibinfo{person}{Isaac
  Grosof}.} \bibinfo{year}{2025}\natexlab{}.
\newblock \showarticletitle{Novel Lower Bounds on M/G/k Scheduling}.
\newblock \bibinfo{journal}{\emph{arXiv preprint arXiv:2502.08804}}
  (\bibinfo{year}{2025}).
\newblock


\bibitem[Zhuravlev et~al\mbox{.}(2012)]%
        {Zhu-et-al2012-example-scheduling}
\bibfield{author}{\bibinfo{person}{Sergey Zhuravlev},
  \bibinfo{person}{Juan~Carlos Saez}, \bibinfo{person}{Sergey Blagodurov},
  \bibinfo{person}{Alexandra Fedorova}, {and} \bibinfo{person}{Manuel Prieto}.}
  \bibinfo{year}{2012}\natexlab{}.
\newblock \showarticletitle{Survey of scheduling techniques for addressing
  shared resources in multicore processors}.
\newblock \bibinfo{journal}{\emph{ACM Comput. Surv.}} \bibinfo{volume}{45},
  \bibinfo{number}{1}, Article \bibinfo{articleno}{4} (\bibinfo{date}{Dec.}
  \bibinfo{year}{2012}), \bibinfo{numpages}{28}~pages.
\newblock
\showISSN{0360-0300}
\href{https://doi.org/10.1145/2379776.2379780}{doi:\nolinkurl{10.1145/2379776.2379780}}


\end{thebibliography}

\newpage
\appendix

\section{Summary of Notation}
\label{ap:notation}

In this section, we review the notation we used to define our model and scheduling policies, and that are repeatedly used in the manuscript. The parameters and random variables associated with the M/G/$k$ system are presented in \cref{tab:notation-parameters}. These are mostly generic parameters, and are organized in order of appearance. 

In \cref{tab:notation-states}, we review the variables introduced to define the SEK-SMOD policy, including state descriptors and performance measures. These are organized alphabetically, with Greek letters listed at the end.

\renewcommand{\arraystretch}{1.2}
\begin{table*}
  \caption{Notation for random variables and system parameters.}
  \label{tab:notation-parameters}
  \begin{tabular}{cp{10cm}}
    \toprule
    Notation & Meaning \\
    \midrule
    $\lambda$ & Mean arrival rate. \\
    $S$ & Generally distributed random variable that represents the size of an arriving job. \\
    $k$ & Number of servers. Each of them processes jobs at a fixed rate of $1/k$. \\
    $\rho$ & Mean load of the system, that is, $\rho=\lambda\E{S}$. \\
    $\epsilon'<\epsilon$ & Parameters of SEK policy. One of the conditions to trigger SEK is that $k$ of the $k+1$ jobs present have remaining sizes between $\epsilon'$ and $\epsilon$. \\
    $x<y$ & Parameters of the SEK policy. One of the conditions to trigger SEK is that the largest job has a remaining size between $x$ and $y$. Note that $\epsilon \leq x$. \\
    \bottomrule
  \end{tabular}
\end{table*}

\renewcommand{\arraystretch}{1.2}
\begin{table*}
  \caption{State descriptor notation.}
  \label{tab:notation-states}
  \begin{tabular}{cp{10cm}}
    \toprule
    Notation & Meaning \\
    \midrule
    $\bb^A(t)$ & Vector of ordered remaining sizes at time $t$ in an M/G/$k$ queue under scheduling policy $A$, potentially augmented by $d(t)$ padding indices with respect to a policy $B$. If $N^A(t)\geq N^B(t)$, then $\bb^A(t)=\rb^A(t)$ and otherwise, $b^A_i(t)=0$ for $i\leq d(t)$ and $b^A_i(t)=r^A_{i-d(t)}(t)$ for $i\geq d(t)+1$. \\
    $\bb^B(t)$ & Definition symmetrical to $\bb^A(t)$. \\
    $\bbb(t)$ & State descriptor of two M/G/$k$ queues with coupled arrivals, one operating under policy $A$ and the other one under policy $B$. Specifically, $\bbb(t)=(\bb^A(t), \bb^B(t))$. \\
    $d(t)$ & Number of padding indices, that is, the difference in number of jobs at time $t$, between an M/G/$k$ queue under policy $A$ and one under policy $B$. In other words, $d(t)=|N^A(t)-N^B(t)|$.\\
    IND & Integrated number difference, that is, integral of the difference in number of jobs between the SRPT-$k$ and the SEK-SMOD system. \\
    $N^\pi(t)$ & Number of jobs at time $t$ in an M/G/$k$ queue under scheduling policy $\pi$. Note that $N^\pi(t)=|\rb^\pi(t)|$. \\
    $\rb^\pi (t)$ & Vector of ordered remaining sizes at time $t$ in an M/G/$k$ queue under scheduling policy $\pi$. Each element $r_i(t)$ represents the remaining size of the $i$-th smallest job in the queue (including jobs in service). \\
    $r^+(t)$ & Positive part of the difference in work between the SMOD and SRPT-$k$ systems. \\
    $r^-(t)$ & Negative part of the difference in work between the SMOD and SRPT-$k$ systems. \\
    $T^\pi$ & Response time under the scheduling policy $\pi$.  \\
    $W^\pi(t)$ & Total remaining work at time $t$ in an M/G/$k$ queue under scheduling policy $\pi$. Note that $W^\pi(t)=\sum_i r_i(t)$. \\
    $\Delta_t$ & Mathematical definition of IND, that is, integral of $N^{SRPT\hy k}(u)-N^{SEK-SMOD}(u)$ for $u$ between times $t$ and $\tau$. \\
    $\Delta_{SRPT\hy k,SMOD}(t,\infty)$ & IND from time $t$ to the future. \\
    $\tau$ & Starting at time $t$, next time when SRPT-$k$ and SEK-SMOD have identical states. \\
    \bottomrule
  \end{tabular}
\end{table*}

\section[Details of Proof of Lemma ~\ref{lem:bad}]{Details of proof of Lemma ~\ref{lem:bad}}
\label{ap:proof-bad}

To finalize the proof, we need to explicitly compute a lower bound for $\E{\Delta_t \mid\bbb(t) = \bbb \,\&\, \bad_t}$. 

\begin{proof}
    Let $t_{dom}$ be the first time after $t$ when the SEK-SMOD system dominates the SRPT-$k$ system, or infinity if dominance never occurs.
    \cref{lem:diff-per-job} (combined with \cref{lem:sek-bridge}) shows that $\Delta_t$ is lower bounded by at most $-k \epsilon$ times the number of jobs which arrive prior to $t_{dom}$,
    plus the number of jobs present at the time SMOD begins being used,
    which is at most $k+2$.
    Because $t_{dom}$ is a stopping time with respect to the arrival process,
    by Wald's equation we know that the expected number of jobs which arrive prior to $t_{dom}$ is $\lambda \E{t_{dom}}$.
    Thus, we know that
    \begin{align}
        \label{eq:delta-by-dom}
        \E{\Delta_t \mid\bbb(t) = \bbb \,\&\, \bad_t} \ge - k \epsilon (\lambda \E{t_{dom} \mid\bbb(t) = \bbb \,\&\, \bad_t} + k+2).
    \end{align}

    Now, we need to bound $t_{dom}$ for an arbitrary initial state. For this, we invoke \cref{lem:max-diff},
    which (in combination with \cref{lem:sek-bridge}) states that by the time largest job in the SEK divergence scenario completes in the SRPT-$k$ system,
    both systems are merged or SEK-SMOD dominates SRPT-$k$.
    Let $j$ denote this particular job. Thus, $t_{dom}$ is bounded by the remaining response time of this job $j$:
    \begin{align}
        \label{eq:dom-by-response}
        \E{t_{dom} \mid\bbb(t) = \bbb \,\&\, \bad_t} \le \E{T_j \mid\bbb(t) = \bbb \,\&\, \bad_t}.
    \end{align}
    
    We now apply the multiserver tagged job analysis of \cite{grosof-etal-2019-srpt-k}.
    Let the remaining size of the largest job $j$ at the time of divergence $t$ be $r$,
    and let $w$ be the relevant work at time $t$, that is, the total work of remaining size less than or equal to $r$ at time $t$.
    Note that $r \le y$, by the SEK divergence policy,
    and that $w \le k \epsilon + y$,
    as the total relevant work consists only of the $k$ other jobs present at divergence,
    plus the newly arrived job, which can contribute at most $r \le y$ relevant work.
    By the multiserver tagged job analysis \cite[eq. 5.1 and Lemma 1]{grosof-etal-2019-srpt-k}, we know that
    \begin{align*}
        [T_j \mid\bbb(t) = \bbb \,\&\, \bad_t] &\le_{st} B_{\le y}(kr + w), \\
        \E{T_j \mid\bbb(t) = \bbb \,\&\, \bad_t} &\le \frac{kr + w}{1-\rho_{\le r}} \le \frac{(k+1)y + k\epsilon}{1-\rho_{\le y}},
    \end{align*}
    where $B_{\le y}(\cdot)$ denotes a relevant busy period with relevancy cutoff $y$,
    and $\rho_{\le r} = \lambda \E{S \ind{S \le r}}$ is the relevant load at threshold $r$.

    Note from \cref{def:parameter} that $\epsilon < x/6$, so we can write:
    \begin{align}
        \label{eq:conditional-response-bound}
        \E{T_j \mid\bbb(t) = \bbb \,\&\, \bad_t} &\le  \frac{(k+1)y + kx/6}{1-\rho_{\le y}}.
    \end{align}
    Combining together \eqref{eq:delta-by-dom}, \eqref{eq:dom-by-response}, and \eqref{eq:conditional-response-bound},
    we find that
    \begin{align*}
        \E{\Delta_t \mid\bbb(t) = \bbb \,\&\, \bad_t} \ge - k \epsilon \left(\lambda \frac{(k+1)y + kx/6}{1-\rho_{\le y}} + k+2\right).
    \end{align*}
    Because $c_2$ defined in \cref{def:parameter} is the constant multiplying $-\epsilon$ above, the proof is complete.
\end{proof}

\section[Details of the proof of Lemma ~\ref{lem:good}]{Details of the proof of Lemma ~\ref{lem:good}}

\subsection{Probability of a good scenario}
\label{ap:good-probabilities}

We need to show that the probability of a good scenario is lower bounded by $c_3$, defined in \cref{def:parameter}.

\begin{proof}
Let $b_{k+1}$ be the size of the largest job in divergence starting state $\bbb$. The good scenario specifies that no jobs arrive in the first $k(b_{k+1}-2x/3)$ time,
    then exactly $k$ jobs with sizes in $[x, 2x]$ arrive during the interval
    $[k(b_{k+1}-2x/3), k(b_{k+1}-x/3)]$, and then no jobs arrive until time $k(b_{k+1}+2x)$.

The good scenario consists of three independent events, whose probabilities can be exactly quantified,
    using the fact that the arrival process is a Poisson process with rate $\lambda$:
    \begin{itemize}
        \item No arrivals in the intervals $[0, k(b_{k+1}-2x/3)]$ or $[k(b_{k+1}-x/3), k(b_{k+1}+2x)]]$,
        a total duration of $k(b_{k+1}+5x/3)$,
        which has probability $e^{-\lambda k(b_{k+1}+5x/3)} \ge e^{-\lambda k(y+5x/3)}$, because $b_{k+1} \le y$.
        \item Exactly $k$ arrivals during the interval $[k(b_{k+1}-2x/3), k(b_{k+1}-x/3)]$,
        a duration of $kx/3$, which has probability
        $\frac{(\lambda kx/3)^k e^{-\lambda kx/3}}{k!}$.
        \item Each of those $k$ arrivals has size in the range $[x, 2x]$, which has probability
        $(\P{S \in [x, 2x]})^k$. 
        Note that this probability is guaranteed to be positive by the selection of $s$
        in \cref{lem:any-diff}.
    \end{itemize}
    The probability that the good event occurs is the product of the probabilities of these independent events, which is lower bounded by the product of the probabilities and lower bounds provided in each bullet point. As desired, such product equals $c_3$ defined in \cref{def:parameter}.
\end{proof}

\subsection{Expected difference in a good scenario}
\label{ap:good-expectation}

To finalize the proof, we prove each of the claims in detail.

\begin{proof}
In the rest of the proof, we analyze the effects of each of the following events in the expected $\Delta_t$.

\textbf{The $k-1$ smallest jobs at time of divergence:~~}
    Because no jobs arrive in the first $k(b_{k+1}-2x/3)$ time after divergence,
    and because $b_{k+1} \ge x > 6\epsilon$,
    no jobs arrive before the $k$ smallest jobs at time of divergence (in $\bbb$)
    have completed in both systems.
    The only difference between SEK and SRPT-$k$ at the time of divergence is whether the second-largest job is in service (under SRPT-$k$) or the largest job is in service (under SEK).
    Thus, the completion times of all of the other $k-1$ smaller jobs are the same in both systems,
    resulting in no effect on $\Delta_t$.

    \textbf{The second-largest job at time of divergence:~~}
    This job has remaining size $b_k$ at the time of divergence, so it completes at time $kb_k$ under SRPT-$k$ and at $k(b_1+b_k)$ under SEK-SMOD. Then, there is a $kb_1$ contribution to $\Delta_t$.

    \textbf{The largest job at time of divergence:~~}
    Because additional jobs do not arrive until time $k(b_{k+1}-2x/3)$ at the earliest,
    the largest job at time of divergence has reached a remaining size of no more than
    $2x/3+b_1$ under SRPT-$k$, as it began service at time $k b_1$. Under SEK-SMOD, its remaining size is no more than $2x/3$, as the job began service at the divergence point.
    Because $s > 6\epsilon$, this job is smaller than all of the arriving jobs, and thus is prioritized
    ahead of the arriving jobs, remaining in service.
    It thus finishes at the same time it would have in the absence of those arrivals,
    finishing $k b_1$ time earlier in the SEK-SMOD system than in the SRPT system.
    This gives a $-k b_1$ contribution to $\Delta_t$.

    \textbf{The arriving jobs:~~}
    Because of the exclusion of further arrivals under the good scenario, $k-1$ of the arriving jobs enter service immediately and complete without any wait in both systems. 
    The only exception is the job with the largest remaining size at its arrival time,
    which is delayed behind the job with the largest remaining size at divergence,
    the remaining-size-$b_{k+1}$ job.
    This job is the first to complete and free up a server for the largest of the arriving jobs.
    Thus, this largest job's response time
    differs in the two systems by $k$ times the remaining size of this originally-$b_{k+1}$ size job at the time of this arrival.
    The originally-$b_{k+1}$ size job will be in the system at this time, as it completes at time $kb_{k+1}$ at the earliest, after all the arrivals have occurred.

    Throughout this period, the originally-$b_{k+1}$ size job has larger remaining size in the SRPT-$k$ system, by a margin of $b_1$ remaining size.
    Thus, the largest arriving job finishes $k b_1$ later in the SRPT-$k$ system than in the SEK-SMOD system.

    This confirms all of the claims. Because $b_1 \ge \epsilon' = \epsilon/2$,
    we obtain the desired bound with $c_4 = k/2$, as defined in \cref{def:parameter}.
\end{proof}

\section[Details of Proof of Lemma ~\ref{lem:pos-part}]{Details of proof of Lemma ~\ref{lem:pos-part}}
\label{ap:pos-part}

To complete the proof, we analyze how arrivals, completions, continuous service and a difference reaching zero influence each job's index and, consequently, the positive difference $r^+$.

\begin{proof}
We analyze the effects on the positive difference of each of the following events:

\textbf{Jump change on arrival:~~}
    We first show that an arrival, occurring in both queues at the same time with the same size,
    does not change the value of $r^+(t)$.
    Specifically, we define the times $t^{(-)}$ and $t^{(+)}$ to denote the left and right limits approaching a given time $t$:
    \begin{align*}
        r^+(t^{(-)}) := \lim_{s \to t \mid s \le t} r^+(s),\quad
        r^+(t^{(+)}) := \lim_{s \to t \mid s \ge t} r^+(s).
    \end{align*}
    Our goal is to prove that $r^+(t^{(-)}) = r^+(t^{(+)})$, for any arrival time $t$.

    For concision, we write the padded vectors $b^{SMOD}(t^{(-)})$ and $b^{SRPT\hy k}(t^{(-)})$, just before the arrival at time $t$, as $p$ and $q$ respectively.
    Specifically, let the remaining sizes of the SMOD system just prior to the arrival be $p_1 \le  p_2 \le  p_3 \le \ldots$
    and let the remaining sizes in the SRPT-$k$ system just prior to the arrival be $q_1 \le q_2 \le q_3 \le \ldots$. Note that some of these values may be zero, due to the padding.

    Let the arriving job have size $x$, and let its index upon arrival in the SMOD system be $i$ and in the SRPT-$k$ system be $j$. In other words, $p_{i-1} \le x \le p_i$, and $q_{j-1} \le x \le q_j$.

    Note that all indices $\ell$ less than $\min(i-1, j-1)$ or more than $\max(i, j)$ preserve the difference $p_\ell-q_\ell$ before and after the arrival -- for small indices, the arrival is irrelevant, and for large indices, each job's index is increased by 1, leaving the difference unchanged. Thus, we only need to look at the effect on $r^+$ from indices between $i$ and $j$ (inclusive).

    Let's consider three cases: $i=j$, $i<j$, $i>j$. 
    \begin{itemize}
        \item If $i=j$, the arriving job has the same index in both systems, contributing 0 to $r^+$. 
    
        \item If $i<j$, the arriving job of size $x$ affects the indices between $i$ and $j$ as follows:

    \begin{align*}
        \begin{array}{c} 
            b^{SMOD}(t^{(-)}): \\ b^{SRPT\hy k}(t^{(-)}): \\
        \end{array}
        \begin{array}{cccc}
            p_i, & p_{i+1}, & \cdots, & p_{j-1} \\
            q_i, & q_{i+1}, & \cdots, & q_{j-1} \\
        \end{array} ;
        \quad
        \begin{array}{c} 
            b^{SMOD}(t^{(+)}): \\ b^{SRPT\hy k}(t^{(+)}): \\
        \end{array}
        \begin{array}{lllll}
            x, & p_i, & \cdots, & p_{j-2}, & p_{j-1} \\
            q_i, & q_{i+1}, & \cdots, & q_{j-1}, & x \\
        \end{array} .
    \end{align*}
    By definition, we have
    \begin{align*}
        r^+(t^{(-)}) &= \sum_{\ell=i}^{j-1} (p_\ell - q_\ell)^+, \quad \text{and}\quad
        r^+(t^{(+)}) = (x - q_i)^+ + (p_{j-1} - x)^+ + \sum_{\ell=i+1}^{j-1} (p_\ell - q_{\ell+1})^+ .
    \end{align*}
  
      Note that, for all indices $i'>i$ and $j'<j$ we have $q_{j'} \le x \le p_{i'}$ because our state descriptor orders the remaining sizes from smallest to largest.
    Thus, for each index $\ell \in (i, j)$, the difference before arrival $(p_\ell - q_\ell)^+ = p_\ell - q_\ell$ is nonnegative. Likewise, for each index the difference after arrival $(p_\ell - q_{\ell+1})^+ = p_\ell - q_{\ell+1}$ is also nonnegative. Finally, the difference $(x - q_i)^+ = x-q_i$ and $(p_{j-1} - x)^= p_{j-1} - x$. Thus, we find that $r^+(t^{(-)})=r^+(t^{(+)})$.

        \item Finally, if $i>j$,we have the reversed picture. Following a similar argument, we show that $r^+(t^{(-)})=r^+(t^{(+)})$ as well.

    \end{itemize}

    \textbf{Jump change on completions:~~} 
    Similarly to the arrivals, we split into three cases: A completion where both systems have an equal number of jobs, a completion in the system with fewer jobs, and in the system with more jobs. 
    \begin{itemize}
        \item In the first case, supposing that the completion happens at time $t$, we have the following diagram, where $\delta$ represents the job very close to completion:
    \begin{align*}
        \begin{array}{c} 
            b^{SMOD}(t^{(-)}): \\ b^{SRPT\hy k}(t^{(-)}): \\
        \end{array}
        \begin{array}{cccc}
            \delta, & p_2, & p_3, & \cdots \\
            q_1, & q_2, & q_3, & \cdots \\
        \end{array} ;
        \quad
        \begin{array}{c} 
            b^{SMOD}(t^{(+)}): \\ b^{SRPT\hy k}(t^{(+)}): \\
        \end{array}
        \begin{array}{cccc}
            0, & p_2, & p_3, & \cdots \\
            q_1, & q_2, & q_3, & \cdots \\
        \end{array} .
    \end{align*}
    Note that $q_1 > 0$, because padding only occurs to a system with less jobs. The $0$ appears in the SMOD system due to that padding. In this step, we ignore the change in the remaining sizes of other jobs, as we handle this case separately in the proof. Here we focus on the effect of the completion. 
    
    The state space changes continuously in each index, so $r^+$ doesn't change as a result of this kind of completion. We showed the effect of a completion in the SMOD system, but note that a completion in the SRPT-$k$ system is equivalent.

    \item If the completion happens in the system with fewer jobs, assuming this is the SMOD system, we have
        \begin{align*}
        \begin{array}{c} 
            b^{SMOD}(t^{(-)}): \\ b^{SRPT\hy k}(t^{(-)}): \\
        \end{array}
        \begin{array}{ccccc}
            \cdots & 0, & \delta, & p_{i+2}, & \cdots \\
            \cdots & q_i, & q_{i+1}, & q_{i+2}, & \cdots \\
        \end{array} 
        ; \;
        \begin{array}{c} 
            b^{SMOD}(t^{(+)}): \\ b^{SRPT\hy k}(t^{(+)}): \\
        \end{array}
        \begin{array}{cccccc}
            \cdots & 0, & 0, & p_{i+2}, & \cdots \\
            \cdots & q_i, & q_{i+1}, & q_{i+2}, & \cdots \\
        \end{array} .
    \end{align*}

    We again see a 0 appear where the near-completion job was, causing no change to any indices or $r^+$. Again, if SMOD and SRPT-$k$ are swapped, nothing changes.

    \item Finally, if a completion occurs in the system with more jobs, 
    the only change as a result of the completion is the removal of the first index. This index induces no difference at the time of the completion. All other pairs of jobs remain paired and do not change their differences. Thus, $r^+$ does not change as a result of the completion.
    \end{itemize}

    \textbf{Continuous change due to work being completed:~~} We again consider three cases: Equal number of jobs in both systems, more jobs in the SRPT-$k$ system, and more jobs in the SMOD system.
    \begin{itemize}
        \item $N^{SRPT\hy k}=N^{SMOD}$: If both systems have the same number of jobs, then both policies will serve the same set of indices: The $k$ jobs of least remaining size, or all of the jobs if there is less than $k$. No differences will change, and the positive part $r^+$ is constant.

        \item $N^{SRPT\hy k}>N^{SMOD}$: If there are more jobs in the SRPT-$k$ system, both policies serve the $k$ jobs of least remaining size. In particular, every index being served in the SRPT-$k$ system corresponds to an index that is being either served in the SMOD system or is an empty padding index. In the former case, the difference does not change. In the latter case, the difference is negative, and has zero positive part. The only way $r^+$ can change is due to the other indices that are served in the SMOD system but not in the SRPT-$k$ system, and this decreases $r^+$.

        \item $N^{SRPT\hy k}<N^{SMOD}$: In this case, we closely look at the \textit{eligible jobs} for the SMOD policy. From \cref{def:smod}, recall that the $k+d$ jobs with the smallest size are eligible for service (where $d=N^{SMOD}-N^{SRPT\hy k}$). Among these $k+d$ jobs, SMOD gives high priority to jobs with a remaining size of at least the corresponding SRPT-$k$ job, which we referred to as \textit{zero-or-positive difference indices}. Therefore, the set of \textit{eligible jobs} contains all jobs at indices that are either being served in the SRPT-$k$ system, or are empty padding indices.
        
        Let us condition on the number of indices of eligible jobs that have zero or positive difference $b_i^{SMOD} - b_i^{SRPT\hy k}$. The rate of change of $r^+$ can be characterized based on the number of such indices being served in each system:
        \begin{itemize}
            \item For each index with zero or positive difference that both SMOD and SRPT-$k$ serve, $r^+$ decreases at a rate of $1/k$. 
            \item For each index that SRPT-$k$ serves (with zero or positive difference), $r^+$ increases at a rate of $1/k$. 
            \item All other services do not affect $r^+$: Service to negative-difference indices, or service by SMOD to zero-difference indices which are not being served by SRPT-$k$.
        \end{itemize}

        Note that SMOD avoids ever serving indices in this last group (zero-difference indices served by SMOD but not SRPT-$k$). From \cref{def:smod}, SMOD only serves indices that are either padding indices (which always have positive difference) or that are also being served in the SRPT-$k$ system. This is a key property of the SMOD policy.
        
        Thus, to prove that $r^+$ never increases, it suffices to show that the number of zero-or-positive-difference indices served by SMOD is at least the number of nonnegative indices served by SRPT-$k$.
        
        If there are at least $k$ nonnegative indices within the eligible set, SMOD serves exactly $k$ such indices, as it prioritizes zero-or-positive difference indices over all negative-difference indices.
        This corresponds to at least as many zero-or-positive difference indices as SRPT-$k$ because only $k$ jobs can be served simultaneously.
        
        If there are fewer than $k$ such indices within the eligible set, SMOD serves all of them. Because SRPT-$k$ only serves jobs that are within the eligible set, SMOD again serves at least as many zero-or-positive difference indices as SRPT-$k$.
    \end{itemize}

    This completes the proof.
\end{proof}

\section[Details of Proof of Lemma ~\ref{lem:zig-zag}]{Details of proof of Lemma ~\ref{lem:zig-zag}}
\label{ap:zig-zag-proof}

To finalize the proof, we show that zig-zag matching is preserved under arrivals, completions, and continuous service.

\begin{proof}
We analyze the effect of each of the three events in the zig-zag matching property.

\textbf{Jump change on arrivals:~~} Let $x$ be the size of the arriving job, $i$ the index at which it is inserted into the SMOD system's $r$ vector at time $t$, and $t^{(-)}$ the left limit as we approach time $t$,
    as in the proof of \cref{lem:pos-part}.
    
    We have $x \le b^{SMOD}_i(t^{(-)})$, as $x$ is inserted between $b^{SMOD}_{i-1}(t^{(-)})$ and $b^{SMOD}_i(t^{(-)})$.
    By the zig-zag property, we thus have $x \le b^{SRPT\hy k}_{i+1}(t^{(-)})$.
    Thus, $x$ is inserted at index at most $i+1$ in the SRPT-$k$ system,
    at most 1 position after its location in the SMOD system.

    Let's split into two cases: When $x$ is inserted at index $i+1$ in the SRPT-$k$ system,
    and when it is inserted at an index $\le i$.
    \begin{itemize}
        \item In the first case, the two insertion locations are matched with each other, and no matchings change, so zig-zag matching is preserved.

        \item In the second case, let $j$ be the insertion location in the SRPT-$k$ system. Notice that all jobs in the SMOD system up to index $i$ have remaining size at most $x$, while all jobs in the SRPT system at or above index $j$ have value at least $x$. Thus all matchings from $b^{SMOD}_{j-1}$ to $b^{SMOD}_i$ obey the zig-zag-matching property. Note also that no other matchings change.
    \end{itemize}
    
    Thus, arrivals maintain the zig-zag property.

    \textbf{Jump change in service:~~} If a completion occurs in the system with equal or more jobs, the index will become a padding index. Because the systems change continuously, the property is preserved.
    Otherwise, there is a padding index present in the other system, resulting in a pair of zeros being removed, which does not affect the zig-zag matching property.

    \textbf{Continuous change in service:~~} There are three cases: Equal jobs in both systems, fewer jobs in the SMOD system, or more jobs in the SMOD system.

    \begin{itemize}
        \item If there are equal numbers of jobs in both systems, then both systems serve the first $k$ indices. This decreases $b_i^{SMOD}$ and $b_i^{SRPT\hy k}$ at rate $1/k$ for all $i \le k$. In particular, this preserves the zig-zag matching -- each zig-zag difference $b_{i+1}^{SRPT\hy k} - b_i^{SMOD}$ is unchanged for $i < k$, and can only increase for $i = k$.

        \item If there are fewer jobs in the SMOD system, then every job served by SRPT-$k$ is either zig-zag matched with a job that is served by SMOD, or with a padding index. In particular, SMOD is serving at least up to index $k+1$, while SRPT-$k$ is serving at most index $k$. Thus, the inequality is preserved.

        \item If there are more jobs in the SMOD system, there can only be one additional job, as proven above. Given that SMOD has exactly one additional job, the SMOD eligible set consists of the $k+1$ jobs of least remaining size in the SMOD system, and SMOD serves all but 1 such job. 
        
        There are two cases to consider: All of the indices in the eligible set have nonnegative difference, or there is a negative difference index.
        \begin{itemize}
            \item In the former case, SMOD will serve the $k$ jobs of least remaining size, as that is its tiebreaker. In particular, SMOD serves indices $1$ through $k$, while SRPT-$k$ serves indices $2$ through $k+1$ due to its one padding index. All service is zig-zag matched, and no inequalities are violated.

            \item In the latter case, SMOD will omit a job with a negative index from service. Let's denote it $i$.  We have $b^{SMOD}_i < b^{SRPT\hy k}_i$, because of the negative difference, and $b^{SRPT\hy k}_i \le b^{SRPT\hy k}_{i+1}$, due to the ordering of the indices. Thus, we have $b^{SMOD}_i < b^{SRPT\hy k}_{i+1}$. In other words, the zig-zag inequality is strict in the index that SMOD omits from service, and by the time it would become nonstrict (an equality), SMOD will change its service.

            For all other indices that are zig-zag matched with an SRPT-$k$ index (all indices $\le k$), SMOD serves those indices, maintaining the inequalities.
        \end{itemize}
    Thus, zig-zag matching is preserved by the continuous change of service.
    \end{itemize}

This completes the proof.
\end{proof}

\section[Details of Proof of Lemma ~\ref{lem:pln}]{Details of Proof of Lemma ~\ref{lem:pln}}
\label{ap:proof-pln}

To finalize the proof, we only need to show that PLN is preserved. 

\begin{proof}
We analyze the effect on the PLN property of each of the following events in the system: arrival, continuous service, completion, and a difference reaching zero.

    \textbf{Arrival:~~} Let the arriving job have size $x$, $i$ be the index where the job is inserted in the SMOD system, and $j$ the corresponding index in the SRPT-$k$.
    Let the arrival be at time $t$. For ease of notation (similarly to \cref{ap:pos-part}), let $p$ denote the SMOD vector $b^{SMOD}(t^{(-)})$ just before time $t$, and $q$ the SRPT-$k$ vector $b^{SRPT}(t^{(-)})$. We study the cases $i=j$, $i<j$ and $i>j$ separately.
    \begin{itemize}
        \item If $i = j$, the new job is placed at the same index in both systems. Then, the sequence of differences only changes by the insertion of a new zero-diff entry at index $i$. In particular, the relative ordering of positive and negative diff indices is unchanged, preserving PLN.

        \item If $i < j$, we see the following changes in $p$ and $q$ at time $t$:
        \begin{align*}
            \begin{array}{c}
                b^{SMOD}(t^{(-)}): \\ b^{SRPT\hy k}(t^{(-)}):
            \end{array}
            \begin{array}{llll}
                p_i, & p_{i+1}, & \cdots & p_{j-1} \\
                q_i, & q_{i+1}, & \cdots & q_{j-1} \\
            \end{array}
            ;\quad 
            \begin{array}{c}
                b^{SMOD}(t^{(+)}): \\ b^{SRPT\hy k}(t^{(+)}):
            \end{array}
            \begin{array}{lllll}
                x & p_i, & \cdots & p_{j-2}, & p_{j-1} \\
                q_i, & q_{i+1}, & \cdots & q_{j-1}, & x \\
            \end{array}
        \end{align*}
    
    Note that the only diffs that are affected are indices in the interval $[i, j]$ at time $t^{(+)}$.
    We thus focus on PLN relationships involving these indices. For all $\ell \ge i$, we have $p_\ell \ge x$,
    and for all $m < j$, $q_m \le x$.
    Thus, all indices in the interval $[i, j-1]$ were zero or negative-diff indices at time $t^{(-)}$.
    
    If all such indices were zero-diff, then all of the indices involved must have remaining size exactly $x$:
    $p_i = p_{i+1} = \ldots = p_{j-1} = q_i = \ldots = q_{j-1} = x$.
    In this case, all indices in the interval $[i, j]$ are zero-diff at time $t^{(+)}$, maintaining PLN regardless of which other indices are positive or negative.
    
    Now, we turn to the case where at least one index in the interval $[i, j-1]$ was negative-diff at time $t^{(-)}$.
    Because the PLN property holds at time $t^{(-)}$, all positive-diff indices at time $t^{(-)}$ are below $i$.
    Then, only zero-diff and negative-diff indices are created in the interval $[i, j]$ at time $t^{(+)}$, preserving PLN.

    \item If $i > j$, the same argument similar to the case $i<j$. 
    Either all indices in the interval $[i, j-1]$ at time $t^{(-)}$ were zero-diff and equal to $x$, or there was a positive-diff index. In the first case, PLN is preserved regardless of which other indices are positive or negative. In the latter case, because PLN held at time $t^{(-)}$, any negative-diff indices at time $t^{(-)}$ are at indices $j$ or greater. Then, they create new positive-diff indices at time $t^{(+)}$ in the interval $[i, j]$ and PLN is preserved.
    \end{itemize}
    
    \textbf{Continuous service:~~} If $N^{SMOD} = N^{SRPT\hy k}$, then both systems serve the same set of indices, and the difference $b^{SRPT\hy k}_i - b^{SMOD}_i$ is constant for all indices $i$. Hence, PLN is maintained.

    If $N^{SMOD} = N^{SRPT\hy k} + 1$, we must examine the first $k+1$ indices to determine SMOD's service.
    Within this case, if $N^{SRPT\hy k} < k$, all indices are served in the SMOD system. The only index for which the diff $b^{SRPT\hy k}_i - b^{SMOD}_i$ changes is the index $i=1$.
    Index $i=1$ is a positive diff index, because $b^{SRPT\hy k}_1 > 0$ and $b^{SMOD}_1 = 0$.
    As service occurs, the differences towards 0, which does not affect PLN.
    No other diffs change, so PLN is maintained.
    
    Otherwise, within the case $N^{SMOD} = N^{SRPT} + 1$, there are two possibilities, depending on whether there is a negative-diff index among the first $k+1$ indices.
    \begin{itemize}
        \item If there is no negative diff among the first $k+1$ indices, SMOD serves the first $k$ indices.
        SRPT-$k$, instead, serves the $k+1$ index, so the diff of index $k+1$ increases at rate $1/k$. If index $k+1$ previously had zero diff, this may create a new positive-diff index. 
        
        Regardless of whether a new positive-diff index is created, this service maintains the PLN property because all negative-diff indices are larger than the $k+1$ index, by the assumption of this case. 
        
        The only other index where service differs between the two systems is index 1, where SMOD has a padding zero. This is a positive-diff index, so that service also maintains PLN. All other indices receive the same service in both systems, maintaining the diffs and PLN.
        
        \item If there is a negative diff among the first $k+1$ indices, SMOD omits the largest negative-diff index from service, and serves the rest of the first $k+1$ indices. 
        
        SMOD's strategy to omit one such index from service causes the negative-diff of that index to increase towards 0. This maintains PLN, as only creating new nonzero diffs can affect PLN. All other indices maintain PLN for the same reasons as in the first case.  Note that when the diff reaches zero, the scheduling policy changes. This corresponds to a completion, which we handle later in the proof.
    \end{itemize}

    \textbf{Completions:~~}
    Completions do not change the ordering of positive, zero, or negative differences.
    In particular, a completion may lead to a padding index being created or removed.
    Creating a padding index when a job completes has no effect on the differences,
    as the remaining size of the index continuously changes to 0.
    Removing a padding index shifts all remaining indices down by one when a job completes in the system with fewer jobs present. However, it does not change the order of the differences.
    In particular, it always maintains the PLN property.

    \textbf{A difference reaches zero:~~}
    At the moment that indices reach zero difference, the SMOD service option changes discontinuously. 
    However, the PLN property is always maintained if a positive or negative-diff index
    becomes a zero-diff index, so these discontinuities maintain the PLN property.

    Thus, in all cases, PLN is preserved, as desired.
\end{proof}

\section[Details of Proof of Lemma ~\ref{lem:max-diff}]{Details or proof of Lemma ~\ref{lem:max-diff}}
\label{ap:proof-max-diff}

Similarly to the approach in the previous proofs, we analyzed in detail the effect of each event.

\begin{proof}
We analyze the effects of the following events: arrival, continuous service, completion, and a difference reaching zero.

\textbf{Arrival:~~} As discussed in \cref{lem:pln}, arrivals can only create nonzero-diff indices if the size of the arriving job is no more than the size of a job at a nonzero-diff index in one of the two systems.
    As $i_*(t_2)$ is the largest nonzero-diff index, and it is a negative-diff index where $b^{SRPT\hy k}_{i_*(t_2)} > b^{SMOD}_{i_*(t_2)}$, the arriving job must have size smaller than
    $b^{SRPT}_{i_*(t_2)}$. As a result, the arriving job is inserted at an index below both $i_*(t_2^-)$ and $m_{t_1}(t_2^-)$, increasing both indices by one
    and maintaining the property $i_*(t_2) \le m_{t_1}(t_2)$.

    \textbf{Continuous service:~~} We follow the same approach as in the proof of \cref{lem:pln} and split into two cases: $N^{SMOD}(t_2) = N^{SRPT\hy k}(t_2)$ or $N^{SMOD}(t_2) = N^{SRPT\hy k}(t_2) + 1$.
    \begin{itemize}
        \item If $N^{SMOD}(t_2) = N^{SRPT\hy k}(t_2)$, the same indices are served in both systems, so no differences change. Hence, $i_*(t_2)$ and $m_{t_1}(t_2)$ do not change, so the property is maintained.

        \item If $N^{SMOD}(t_2) = N^{SRPT\hy k}(t_2) + 1$,
    we must verify that $i_*(t_2)$ does not increase. Note that no new index with nonzero difference is created.
    If $i_*(t_2)>k+1$, then no service occurs to indexes at or above $i_*(t_2)$ and their differences do not change. Otherwise, the $i_*(t_2)$ index is served in the SRPT-$k$ system.
    In this case, the $i_*(t_2)$ index is omitted from service by SMOD, because it must be the largest negative-diff index among those $k+1$ indices that are eligible for service.
    Thus, all indices above $i_*(t_2)$ up to $k+1$ are served by SMOD.
    As these indices are served in both systems, their differences do not change and $i_*(t_2)$ does not increase, as desired. Hence, the property is preserved.
    \end{itemize}

    \textbf{Completion:~~} A completion affects the system similarly: a new nonzero-diff index cannot be created, so if a padding index is removed in both systems, both $i_*(t_2^-)$ and $m_{t_1}(t_2^-)$ decrease by 1, and otherwise $i_*(t_2^-)$ and $m_{t_1}(t_2^-)$ are unchanged. In either case, $i_*(t_2) \le m_{t_1}(t_2)$ is maintained.

    \textbf{A difference reaches zero:~~} 
    When a (previously) nonzero difference reaches zero, $i_*(t_2)$ may decrease. If $i_*(t_2^-)$ was the only nonzero-diff index, then the system states have become identical and SMOD dominates SRPT-$k$,
    which persists until $t_{comp}$.
    But $i_*(t_2)$ cannot increase, because no new nonzero-diff index was created.
    Thus, $i_*(t_2) \le m_{t_1}(t_2)$ is maintained.

    We find that in all cases, the $i_*(t_2) \le m_{t_1}(t_2)$ property is maintained until $t_{comp}$ or until SMOD dominates SRPT-$k$, so at time $t_{comp}$ SMOD dominates SRPT-$k$.
\end{proof}

\end{document}